\let\texyear\year
\RequirePackage{expl3}
\documentclass[a4paper]{ieeeaccess}

\IEEEoverridecommandlockouts
\normalsize
\usepackage{amsmath,amssymb,amsfonts}
\usepackage{algorithmic}
\usepackage{graphicx}
\usepackage{textcomp}
\usepackage{upgreek}

\usepackage{float}
\usepackage{mathtools}
\usepackage{algorithm}
\usepackage{amsthm}
\usepackage{tablefootnote}
\usepackage{sansmath}
\usepackage{isomath}
\usepackage{threeparttable}
\usepackage[mathscr]{euscript}
\usepackage{footmisc}

\usepackage{multirow}

\usepackage{enumitem}
\usepackage{adjustbox}
\usepackage{longtable}
\usepackage{soul}

\usepackage[noadjust]{cite}
\usepackage{url}
\usepackage{dblfloatfix}    

\floatname{algorithm}{Algorithm}

\newtheorem{claim}{Claim}

\newcommand{\be}{\begin{equation}}
\newcommand{\ee}{\end{equation}}
\newcommand{\ba}{\begin{array}}
	\newcommand{\ea}{\end{array}}
\newcommand{\nid}{\noindent}

\newcommand{\m}{\hspace{-.05cm}}

\newtheorem{theorem}{Theorem}

\theoremstyle{definition}

\setcounter{secnumdepth}{3} \setcounter{tocdepth}{3}

\newcommand{\argmin}{\operatornamewithlimits{argmin}}



\renewcommand*{\thefootnote}{\fnsymbol{footnote}}
\usepackage{enumitem}

 \usepackage{acro}

\newlength\myitemwidth
\setlength\myitemwidth{5em}

\newlist{listabbrev}{description}{1}
\setlist[listabbrev]{
    labelindent = 0pt,
    labelsep    = 0pt,
    leftmargin  = \myitemwidth,
    labelwidth  = \myitemwidth,
    format      = \normalfont
    }

\NewAcroTemplate[list]{styleabbrev}{%

  \UseAcroTemplate[list]{description}[0]%
}

\acsetup{
  list/template  = styleabbrev,
}

\DeclareAcronym{IOT}{
  short = \normalfont{IoT} ,
  long  = Internet of Things,
  sort  = i,
  tag = nomencl
}
\DeclareAcronym{4G}{
  short = \normalfont{4G} ,
  long  =  The fourth generation,
  sort  = 4,
  tag = nomencl
}
\DeclareAcronym{5G}{
  short = \normalfont{5G} ,
  long  =  The fifth generation,
  sort  = 5,
  tag = nomencl
}

\DeclareAcronym{6G}{
  short = \normalfont{6G} ,
  long  =  The sixth generation,
  sort  = 6,
  tag = nomencl
}

\DeclareAcronym{LDS}{
  short = \normalfont{LDS} ,
  long  =  Low-density spreading,
  sort  = lds,
  tag = nomencl
}

\DeclareAcronym{NOMA}{
  short = \normalfont{NOMA} ,
  long  =  Non-orthogonal multiple access,
  sort  = noma,
  tag = nomencl
}

\DeclareAcronym{MPA}{
  short = \normalfont{MPA} ,
  long  =  Message passing algorithm,
  sort  = mpa,
  tag = nomencl
}

\DeclareAcronym{MAP}{
  short = \normalfont{MAP} ,
  long  = Maximum \textit{a posteriori},
  sort  = map,
  tag = nomencl
}

\DeclareAcronym{MMSE}{
  short = \normalfont{MMSE} ,
  long  = Minimum mean-square estimation,
  sort  = mmse,
  tag = nomencl
}

\DeclareAcronym{BER}{
  short = \normalfont{BER} ,
  long  = Bit-error-rate,
  sort  = ber,
  tag = nomencl
}

\DeclareAcronym{MUD}{
  short = \normalfont{MUD} ,
  long  = Multiuser detection,
  sort  = mud,
  tag = nomencl
}

\DeclareAcronym{MIMO}{
  short = \normalfont{MIMO} ,
  long  = Multiple-input multiple-output,
  sort  = mimo,
  tag = nomencl
}

\DeclareAcronym{M2M}{
  short = \normalfont{M2M} ,
  long  = Machine-to-machine,
  sort  = m2m,
  tag = nomencl
}

\DeclareAcronym{D2D}{
  short = \normalfont{D2D} ,
  long  = Device-to-device,
  sort  = d2d,
  tag = nomencl
}

\DeclareAcronym{D2E}{
  short = \normalfont{D2E} ,
  long  = Device-to-everything ,
  sort  = d2e,
  tag = nomencl
}

\DeclareAcronym{IoV}{
  short = \normalfont{IoV} ,
  long  = Internet of Vehicles,
  sort  = iov,
  tag = nomencl
}

\DeclareAcronym{MTC}{
  short = \normalfont{MTC} ,
  long  = Machine-type communication,
  sort  = mtc,
  tag = nomencl
}

\DeclareAcronym{ITS}{
  short = \normalfont{ITS} ,
  long  = Intelligent transportation system,
  sort  = its,
  tag = nomencl
}

\DeclareAcronym{OMA}{
  short = \normalfont{OMA} ,
  long  = Orthogonal multiple access,
  sort  = oma,
  tag = nomencl
}

\DeclareAcronym{TDMA}{
  short = \normalfont{TDMA} ,
  long  = Time-division multiple access,
  sort  = tdma,
  tag = nomencl
}

\DeclareAcronym{FDMA}{
  short = \normalfont{FDMA} ,
  long  = Frequency division multiple access,
  sort  = fdma,
  tag = nomencl
}

\DeclareAcronym{CDM}{
  short = \normalfont{CDM} ,
  long  = Code-division multiplexing,
  sort  = cdm,
  tag = nomencl
}

\DeclareAcronym{PN}{
  short = \normalfont{PN} ,
  long  = Pseudo-noise,
  sort  = pn,
  tag = nomencl
}

\DeclareAcronym{MO}{
  short = \normalfont{MO} ,
  long  = Multiple-orthogonal,
  sort  = mo,
  tag = nomencl
}

\DeclareAcronym{CDMA}{
  short = \normalfont{CDMA} ,
  long  = Code-division multiple access,
  sort  = cdma,
  tag = nomencl
}

\DeclareAcronym{SCMA}{
  short = \normalfont{SCMA} ,
  long  = Sparse code multiple access,
  sort  = scma,
  tag = nomencl
}

\DeclareAcronym{PDMA}{
  short = \normalfont{PDMA} ,
  long  = Pattern division multiple access,
  sort  = pdma,
  tag = nomencl
}

\DeclareAcronym{MUSA}{
  short = \normalfont{MUSA} ,
  long  = Multiuser shared access,
  sort  = musa,
  tag = nomencl
}

\DeclareAcronym{MMTC}{
  short = \normalfont{mMTC} ,
  long  = Massive machine-type-communications,
  sort  = mmtc,
  tag = nomencl
}

\DeclareAcronym{LDPC}{
  short = \normalfont{LDPC} ,
  long  = Low-density parity-check,
  sort  = ldpc,
  tag = nomencl
}

\DeclareAcronym{AWGN}{
  short = \normalfont{AWGN} ,
  long  = Additive white Gaussian noise,
  sort  = awgn,
  tag = nomencl
}

\DeclareAcronym{FIR}{
  short = \normalfont{FIR} ,
  long  = Finite impulse response,
  sort  = fir,
  tag = nomencl
}

\DeclareAcronym{CIR}{
  short = \normalfont{CIR} ,
  long  = Channel impulse response,
  sort  = cir,
  tag = nomencl
}

\DeclareAcronym{MF}{
  short = \normalfont{MF} ,
  long  = Matched filter,
  sort  = mf,
  tag = nomencl
}

\DeclareAcronym{ZF}{
  short = \normalfont{ZF} ,
  long  = Zero-forcing,
  sort  = zf,
  tag = nomencl
}

\DeclareAcronym{SIC}{
  short = \normalfont{SIC} ,
  long  = Successive interference cancellation,
  sort  = sic,
  tag = nomencl
}

\DeclareAcronym{PIC}{
  short = \normalfont{PIC} ,
  long  = Parallel interference cancellation,
  sort  = pic,
  tag = nomencl
}

\DeclareAcronym{PDA}{
  short = \normalfont{PDA} ,
  long  = Probabilistic data association,
  sort  = pda,
  tag = nomencl
}

\DeclareAcronym{BP}{
  short = \normalfont{BP} ,
  long  = Believe propagation,
  sort  = bp,
  tag = nomencl
}

\DeclareAcronym{JSG}{
  short = \normalfont{JSG} ,
  long  = Joint sparse graph ,
  sort  = jsg,
  tag = nomencl
}

\DeclareAcronym{BPSK}{
  short = \normalfont{BPSK} ,
  long  = Binary phase-shift keying  ,
  sort  = bpsk,
  tag = nomencl
}

\DeclareAcronym{QPSK}{
  short = \normalfont{QPSK} ,
  long  = Quadrature phase-shift keying  ,
  sort  = qpsk,
  tag = nomencl
}

\DeclareAcronym{QAM}{
  short = \normalfont{QAM} ,
  long  = Quadrature amplitude modulation  ,
  sort  = qam,
  tag = nomencl
}

\DeclareAcronym{APP}{
  short = \normalfont{APP} ,
  long  = \textit{A posteriori} probability  ,
  sort  = app,
  tag = nomencl
}

\DeclareAcronym{LLR}{
  short = \normalfont{LLR} ,
  long  = Log-likelihood ratio  ,
  sort  = llr,
  tag = nomencl
}

\DeclareAcronym{DE}{
  short = \normalfont{DE} ,
  long  = Domain equalization  ,
  sort  = de,
  tag = nomencl
}

\DeclareAcronym{DFT}{
  short = \normalfont{DFT} ,
  long  = Discrete Fourier transform  ,
  sort  = dft,
  tag = nomencl
}

\DeclareAcronym{IDFT}{
  short = \normalfont{IDFT} ,
  long  = Inverse discrete Fourier transform  ,
  sort  = idft,
  tag = nomencl
}

\DeclareAcronym{CP}{
  short = \normalfont{CP} ,
  long  = Cyclic prefix  ,
  sort  = cp,
  tag = nomencl
}

\DeclareAcronym{MAI}{
  short = \normalfont{MAI} ,
  long  = Multiple-access interference  ,
  sort  = mai,
  tag = nomencl
}

\DeclareAcronym{RE}{
  short = \normalfont{RE} ,
  long  = Resource element  ,
  sort  = re,
  tag = nomencl
}

\DeclareAcronym{ML}{
  short = \normalfont{ML} ,
  long  =Maximum-likelihood   ,
  sort  = ml,
  tag = nomencl
}

\DeclareAcronym{FDE}{
  short = \normalfont{FDE} ,
  long  = Frequency domain equalization,
  sort  = fde,
  tag = nomencl
}

\DeclareAcronym{OFDM}{
  short = \normalfont{OFDM} ,
  long  = Orthogonal frequency-division multiplexing,
  sort  = ofdm,
  tag = nomencl
}

\DeclareAcronym{UD}{
  short = \normalfont{UD} ,
  long  = Uniquely decodable,
  sort  = ud,
  tag = nomencl
}

\DeclareAcronym{IrLDS}{
  short = \normalfont{IrLDS} ,
  long  = Irregular low-density spreading,
  sort  = irlds,
  tag = nomencl
}

\DeclareAcronym{VA}{
  short = \normalfont{VA} ,
  long  = Viterbi algorithm,
  sort  = va,
  tag = nomencl
}

\DeclareAcronym{SD}{
  short = \normalfont{SD} ,
  long  = Sphere-decoding,
  sort  = sd,
  tag = nomencl
}

\DeclareAcronym{SISO}{
  short = \normalfont{SISO} ,
  long  = Soft-input soft-output,
  sort  = siso,
  tag = nomencl
}

\DeclareAcronym{GML}{
  short = \normalfont{GML} ,
  long  = Global maximum likelihood detector,
  sort  = gml,
  tag = nomencl
}

\DeclareAcronym{SNR}{
  short = \normalfont{SNR} ,
  long  = Signal-to-noise ratio,
  sort  = snr,
  tag = nomencl
}

\DeclareAcronym{LRS}{
  short = \normalfont{LRS} ,
  long  = Large random spreading,
  sort  = lrs,
  tag = nomencl
}

\DeclareAcronym{NDA}{
  short = \normalfont{NDA} ,
  long  = Noiseless detection algorithm,
  sort  = nda,
  tag = nomencl
}

\DeclareAcronym{FDA}{
  short = \normalfont{FDA} ,
  long  = Fast detection algorithm,
  sort  = fda,
  tag = nomencl
}

\DeclareAcronym{QLSS}{
  short = \normalfont{QLSS} ,
  long  = Quasi-large sparse sequence,
  sort  = qlss,
  tag = nomencl
}

\DeclareAcronym{SER}{
  short = \normalfont{SER} ,
  long  = Symbol error rate,
  sort  = ser,
  tag = nomencl
}

\DeclareAcronym{MC}{
  short = \normalfont{MC} ,
  long  = Multicarrier,
  sort  = mc,
  tag = nomencl
}

\DeclareAcronym{JSFG}{
  short = \normalfont{JSFG} ,
  long  = Joint sparse factor graph,
  sort  = jsfg,
  tag = nomencl
}

\DeclareAcronym{SMI}{
  short = \normalfont{SMI} ,
  long  = Single user mutual information,
  sort  = smi,
  tag = nomencl
}

\DeclareAcronym{SVE}{
  short = \normalfont{SVE} ,
  long  = Spreading vector extension,
  sort  = sve,
  tag = nomencl
}

\DeclareAcronym{LDSM}{
  short = \normalfont{LDSM} ,
  long  = Low-density superposition modulation,
  sort  = ldsm,
  tag = nomencl
}

\DeclareAcronym{BBPSO}{
  short = \normalfont{BBPSO} ,
  long  = Bare-bone particle swarm optimization,
  sort  = bbpso,
  tag = nomencl
}

\DeclareAcronym{STS}{
  short = \normalfont{STS} ,
  long  = Steiner triple system,
  sort  = sts,
  tag = nomencl
}

\DeclareAcronym{BIBD}{
  short = \normalfont{BIBD} ,
  long  = Balanced incomplete block design,
  sort  = bibd,
  tag = nomencl
}

\DeclareAcronym{LDSMA}{
  short = \normalfont{LDSMA} ,
  long  = Low-density spreading multiple access,
  sort  = ldsma,
  tag = nomencl
}

\DeclareAcronym{GMAC}{
  short = \normalfont{GMAC} ,
  long  = Gaussian multiple access channel,
  sort  = gmac,
  tag = nomencl
}

\DeclareAcronym{BLER}{
  short = \normalfont{BLER} ,
  long  = Block error rate,
  sort  = bler,
  tag = nomencl
}

\DeclareAcronym{MAC}{
  short = \normalfont{MAC} ,
  long  = Multiple access channel,
  sort  = mac,
  tag = nomencl
}

\DeclareAcronym{EXIT}{
  short = \normalfont{EXIT} ,
  long  = Extrinsic information transfer,
  sort  = exit,
  tag = nomencl
}

\DeclareAcronym{SEP}{
  short = \normalfont{SEP} ,
  long  = Symbol error probability,
  sort  = sep,
  tag = nomencl
}

\DeclareAcronym{MS}{
  short = \normalfont{MS} ,
  long  = Mobile station,
  sort  = ms,
  tag = nomencl
}

\DeclareAcronym{SFBC}{
  short = \normalfont{SFBC} ,
  long  = Space-frequency block codes,
  sort  = sfbc,
  tag = nomencl
}

\DeclareAcronym{SM}{
  short = \normalfont{SM} ,
  long  = Spatial modulation,
  sort  = sm,
  tag = nomencl
}

\DeclareAcronym{SCDMA}{
  short = \normalfont{SCDMA} ,
  long  = Sparse code-division multiple-access,
  sort  = scdma,
  tag = nomencl
}

\DeclareAcronym{FBMC}{
  short = \normalfont{FBMC} ,
  long  = Filter-bank multicarrier,
  sort  = fbmc,
  tag = nomencl
}

\DeclareAcronym{LDWM}{
  short = \normalfont{LDWM} ,
  long  = Low-density weight matrix,
  sort  = ldwm,
  tag = nomencl
}

\DeclareAcronym{ULDS}{
  short = \normalfont{ULDS} ,
  long  = Ultra low-density spread,
  sort  = ulds,
  tag = nomencl
}

\DeclareAcronym{SSM}{
  short = \normalfont{SSM} ,
  long  = Sparse superposition matrix,
  sort  = ssm,
  tag = nomencl
}

\DeclareAcronym{SINR}{
  short = \normalfont{SINR} ,
  long  = Signal-to-interference-plus-noise ratio,
  sort  = sinr,
  tag = nomencl
}

\DeclareAcronym{TSC}{
  short = \normalfont{TSC} ,
  long  = Total squared correlation,
  sort  = tsc,
  tag = nomencl
}

\DeclareAcronym{WBE}{
  short = \normalfont{WBE} ,
  long  = Welch-bound-equality,
  sort  = wbe,
  tag = nomencl
}

\DeclareAcronym{MWBE}{
  short = \normalfont{MWBE} ,
  long  = Maximum-Welch-bound-equality,
  sort  = mwbe,
  tag = nomencl
}

\DeclareAcronym{CM}{
  short = \normalfont{CM} ,
  long  = Coded modulation,
  sort  = cm,
  tag = nomencl
}

\DeclareAcronym{MLCM}{
  short = \normalfont{MLCM} ,
  long  = Multilevel coded modulation,
  sort  = mlcl,
  tag = nomencl
}

\DeclareAcronym{TCM}{
  short = \normalfont{TCM} ,
  long  = Trellis-coded modulation,
  sort  = tcm,
  tag = nomencl
}

\DeclareAcronym{TTCM}{
  short = \normalfont{TTCM} ,
  long  = Turbo trellis-coded modulation,
  sort  = ttcm,
  tag = nomencl
}

\DeclareAcronym{ED}{
  short = \normalfont{ED} ,
  long  = Euclidean distance,
  sort  = ed,
  tag = nomencl
}

\DeclareAcronym{BICM}{
  short = \normalfont{BICM} ,
  long  = Bit-interleaved coded modulation,
  sort  = bicm,
  tag = nomencl
}

\DeclareAcronym{LTE}{
  short = \normalfont{LTE} ,
  long  = Long-term evolution,
  sort  = lte,
  tag = nomencl
}

\DeclareAcronym{QPP}{
  short = \normalfont{QPP} ,
  long  = Quadratic permutation polynomial,
  sort  = qpp,
  tag = nomencl
}

\DeclareAcronym{PEG}{
  short = \normalfont{PEG} ,
  long  = Progressive edge-growth,
  sort  = peg,
  tag = nomencl
}

\DeclareAcronym{eMBB}{
  short = \normalfont{eMBB} ,
  long  = Enhanced mobile broadband,
  sort  = embb,
  tag = nomencl
}

\DeclareAcronym{uRLLC}{
  short = \normalfont{uRLLC} ,
  long  = Ultra-reliable low-latency communications,
  sort  = urllc,
  tag = nomencl
}

\DeclareAcronym{mmWave}{
  short = \normalfont{mmWave} ,
  long  = Millimeter-wave,
  sort  = mmwave,
  tag = nomencl
}

\DeclareAcronym{RMS}{
  short = \normalfont{RMS} ,
  long  = Root-mean-square,
  sort  = rms,
  tag = nomencl
}

\usepackage[utf8]{inputenc}
\usepackage[T1]{fontenc}

\let\ieeeaccessyear\year
\let\year\texyear

\usepackage{tikz}
\usetikzlibrary{arrows, calc, decorations.markings, positioning,trees}

\makeatother
  
\let\year\ieeeaccessyear
\definecolor{accessblue}{RGB}{0,105,154}
\definecolor{greycolor}{cmyk}{0,0,0,.8}
\definecolor{grey}{cmyk}{0,0,0,.1}
\definecolor{black}{cmyk}{0,0,0,1}


\def\myyear{2020}
\makeatletter
\patchcmd{\@evenfoot}{2020}{\myyear}{}{}
\patchcmd{\@oddfoot}{2020}{\myyear}{}{}
\makeatother
\def\BibTeX{{\rm B\kern-.05em{\sc i\kern-.025em b}\kern-.08em
    T\kern-.1667em\lower.7ex\hbox{E}\kern-.125emX}}

\begin{document}
\history{Date of publication xxxx 00, 0000, date of current version xxxx 00, 0000.}
\doi{11.1109/ACCESS.2021.DOI}

		\title { Low-Complexity Decoder for Overloaded Uniquely Decodable Synchronous CDMA}

\author{\uppercase{Michel~Kulhandjian}\authorrefmark{1}, \IEEEmembership{Senior Member, IEEE},
\uppercase{Hovannes~Kulhandjian}\authorrefmark{2}, \IEEEmembership{Senior Member, IEEE},
\uppercase{Claude~D'Amours}\authorrefmark{1}, \IEEEmembership{Member, IEEE}, \uppercase{Halim Yanikomeroglu}\authorrefmark{3},
\IEEEmembership{Fellow, IEEE}, \uppercase{Dimitris A. Pados }\authorrefmark{4},
\IEEEmembership{Senior Member, IEEE} and \uppercase{Gurgen~Khachatrian}\authorrefmark{5},
\IEEEmembership{Member, IEEE}}
\address[1]{School of Electrical Engineering and Computer Science, University of Ottawa, Ottawa, Ontario, K1N 6N5, Canada (e-mail: mkk6@buffalo.edu,cdamours@uottawa.ca)}
\address[2]{Department of Electrical and Computer Engineering, California State University, Fresno, Fresno, CA 93740, U.S.A. (e-mail: hkulhandjian@csufresno.edu)}
\address[3]{Department of Systems \& Computer Engineering,
	Carleton University, Ottawa, Canada (e-mail: halim@sce.carleton.ca)}
\address[4]{Computer \& Electrical Engineering \& Computer Science \& I-SENSE Center, Florida Atlantic University, U.S.A., (e-mail: dpados@fau.edu)}
\address[5]{College of Science \& Engineering,
		American University of Armenia, Armenia, (e-mail: gurgenkh@aua.am)}

\corresp{Corresponding author: Michel Kulhandjian (e-mail: mkk6@buffalo.edu).}

	
	 \begin{abstract}
	We consider the problem of designing a low-complexity decoder for antipodal uniquely decodable (UD) /errorless code sets for overloaded synchronous code-division multiple access (CDMA) systems, where the number of signals $K_{\rm{max}}^a$ is the largest known for the given code length $L$. In our complexity analysis, we illustrate that compared to maximum-likelihood (ML) decoder, which has an exponential computational complexity for even moderate code lengths, the proposed decoder has a quasi-quadratic computational complexity. Simulation results in terms of bit-error-rate (BER) demonstrate that the performance of the proposed decoder has only a $1-2$ dB degradation in signal-to-noise ratio (SNR) at a BER of $10^{-3}$ when compared to ML. Moreover, we derive the proof of the minimum Manhattan distance of such UD codes and we provide the proofs for the propositions; these proofs constitute the foundation of the formal proof for the maximum number users $K_{\rm{max}}^a$ for $L=8$.  \end{abstract}

	\begin{keywords}
		Uniquely decodable (UD) codes, overloaded CDMA, overloaded binary and ternary spreading spreading codes.
	\end{keywords}

     \maketitle
     \printacronyms[include=nomencl,name=NOMENCLATURE] 

\newpage

	\section{{Introduction}}
	\renewcommand*{\thefootnote}{\arabic{footnote}}
	\renewcommand{\thefootnote}{\roman{footnote}}
	 \IEEEPARstart{I}{n} the last decade, wireless communication services have experienced explosive growth while communication technologies have progressed generation by generation. In the previous generations spanning from 1G to 4G, the multiple access schemes were mostly characterized by orthogonal multiple access (\acs{OMA}) techniques, where users are assigned orthogonal resources in either frequency, (frequency-division multiple access (\acs{FDMA})), time, (time-division multiple access (\acs{TDMA})) or code, (code-division multiple access (\acs{CDMA})). CDMA \cite{Viterbi1995} was the basic technology for 3G and for some 2G (IS-95) networks. High spectral- and power-efficiency, massive connectivity and low latency are among the requirements for next generation communications and these requirements are expected to increase in the future, as researchers turn their efforts towards sixth generation (\acs{6G}) wireless communications. Enhanced mobile broadband (\acs{eMBB}), ultra-reliable low-latency communications (\acs{uRLLC}) and massive machine-type communication (\acs{MMTC}) support a suite of compelling applications driving these requirements. Massive multiple-input multiple-output (\acs{MIMO}), non-orthogonal multiple access (\acs{NOMA}) and millimeter-wave (\acs{mmWave}) communications constitute promising techniques of addressing these stringent requirements \cite{Dai2018}.
	 
	 Supporting a large number of users communicating over a common channel may not be readily achievable by OMA techniques due to the multiple-access interference (\acs{MAI}) in rank-deficient systems, where the number of users is greater than that of the resource blocks. To meet the demand of increased bandwidth  efficiency in synchronous CDMA, a CDMA concept was introduced in \cite{HSari2000}, which can support many more users for a given code length compared to traditional CDMA. A number of signature design schemes have been studied for dense spreading in conventional CDMA, where low cross-correlation sequence sets are designed to minimize the overall MAI, which allows more users to simultaneously access the common channel. This in turn results in increased spectral efficiency. 
  Finding suitable spreading codes and decoding schemes for such overloaded systems is a challenging optimization problem. To address these challenges, numerous non-uniquely decodable (non-\acs{UD}) \cite{ Grant1998, Verdu1999, Vanhaverbeke2000, Sari2000, HSari2000, Vanhaverbeke2002, Vanhaverbeke2004, Ming2009}, and UD \cite{Erdos1963, Shapiro1963, Cantor1964, Lindstrom1964, Lindstrom1965, Cantor1966, Erdos1970,  Chang1979, Ferguson1982, Chang1984, Khachatrian1987, Khachatrian1989, Khachatrian1995, Khachatrian1998, Wu1997, Mow2009, Marvasti2009, ming2016, Marvasti2012, Marvasti2016, michel2012, michel2018, michelHindawi2018, michelWCNC2019, michelTernary2021} construction based code sets have been proposed. Examples of such non-UD code sets are pseudo-noise spreading (\acs{PN}) \cite{Grant1998, Verdu1999}, orthogonal/orthogonal CDMA (O/O), \cite{Vanhaverbeke2000, Sari2000}, PN/orthogonal CDMA (PN/O) \cite{HSari2000}, multiple-orthogonal CDMA (\acs{MO}) \cite{Vanhaverbeke2002}, improved O/O CDMA \cite{Vanhaverbeke2004}. Those codes employ two or more sets of orthogonal signal waveforms, which allows the system to accommodate more users than the signature length $L$. As a consequence of this, a significant level of MAI exists at the output of each user's matched filter due to the non-zero cross-correlation of different signatures. 
 
 Low cross-correlation sequence sets might not be the best criterion for very high rank-deficient systems. One important criterion in such rank-deficient systems is for the code set to be UD. By definition the UD codes are those in which the data of different users can be unambiguously decoded in a noiseless channel using linear recursive decoders \cite{michel2012}. Low-complexity linear decoders were introduced for these UD code sets using either binary $\{0,1\}$, or antipodal $\{\pm 1\}$, or alternatively ternary $\{0,\pm 1\}$ chips in \cite{Marvasti2009, ming2016,  Marvasti2012, Marvasti2016, michel2018, michelHindawi2018, michelWCNC2019, michelTernary2021}. On the other hand, Lu \emph{et al.} \cite{Kamabe2019} proposed $M$-ary code sets for the multiple-access adder channel. Various applications have been implementing UD code sets framework such as in multi-way physical-layer network coding conceived in \cite{michelHindawi2018,Xiang2019}.
 
 All of these multiple access concepts were introduced in order to serve a number of excess users beyond the available resources. These multiple access schemes are characterized by NOMA techniques \cite{Dai2015} in \acs{5G} and beyond wireless communications. Recently, several NOMA solutions have been actively investigated \cite{Dai2018}, which can be basically divided into two main categories, namely power-domain and code-domain NOMA. A few of the strong contenders of code-domain NOMA are low-density spreading aided CDMA (\acs{LDS}-CDMA) \cite{Hoshyar2008,MichelHanzo2021} and sparse code multiple access (\acs{SCMA}) \cite{Nikopour2013, michel2017}.
LDS-CDMA \cite{Hoshyar2008,MichelHanzo2021} must generally guarantee to be UD code set \cite{JVan2009}, which means non-zero Euclidean distance. The design of LDS type matrices offers flexible resource allocation, performs better in terms of handling the MAI that exists in rank-deficient systems and has low-complexity receivers compared to conventional CDMA. 

Unlike the dense UD code set, the LDS-CDMA structure can be represented by a factor graph, the classic message passing algorithm (MPA) can be employed for its detection. Due to the reason that UD code sets were originally designed for adder channels, generally a simple noiseless detection is developed. However in practice, the wireless transmission channel exhibits, among other things, selective fading, multipath and the near-far problem, which leads to unequal received power among users. Consequently, if synchronization, channel equalization are compensated, low-complexity detectors can be applied to wireless channels. 

Inspired by these attractive features of UD code sets, this paper investigates UD codes for synchronous uplink NOMA. In the case of dispersive fading channels, we can potentially equalize the channel effect by using channel precoding. Therefore, we consider developing a low-complexity detector for UD code sets that are proposed in \cite{michel2012} for overloaded synchronous CDMA systems. It is widely recognized that the complexity of an optimal detector is exponentially proportional to the number of users, which prohibits its practical implementation. Various suboptimal low-complexity detection techniques have been already proposed. These suboptimal approaches can be classified into two categories: linear and non-linear multiuser detectors. Linear multiuser detectors include among others, matched filter (\acs{MF}), minimum mean-square error (\acs{MMSE}), and zero-forcing (\acs{ZF}), etc. In a non-linear subtractive interference cancellation detector the interference is first estimated and then it is subtracted from the received signal before detection. The cancellation process can be carried out either successively (\acs{SIC}) \cite{Kobayashi2001}, or in parallel (\acs{PIC}) \cite{Varanas1990,Xue1999, Guo2000}. In non-linear iterative detectors \cite{Reed1998,Wang1999,  Morosi2007, Kumar2010,  Sasipriya2014},  probabilistic data association (\acs{PDA}) \cite{Romano2005} aims to suppress the MAI in each iteration in order to improve the overall error performance. Suboptimal polynomial time detectors that are based on the geometric approach are studied in \cite{Najkha2005,Manglani2006}.  

In general, linear as well as non-linear detectors cannot separate users in overloaded systems even in the case of asymptotically vanishing noise. Therefore, the spreading codes must have property such that decoding can achieve asymptotically zero probability of error multiuser detection when the signal-to-noise (\acs{SNR}) ratio becomes arbitrary large. The UD class of codes that guarantee ``errorless'' communication in an ideal (noiseless) synchronous CDMA/code-division multiplexing (\acs{CDM}) also shows a good performance in the presence of noise. 

Finding the overloaded UD class of codes for noiseless channel is directly related to coin-weighing problem, one of the problems that is discussed by Erd\H{o}s and R\'enyi in \cite{Erdos1963}. It can be considered as a special case of a general problem where authors in  \cite{Shapiro1963, Cantor1964, Cantor1966, Wu1997, Mow2009} refer to them as detecting matrices. Lindstr\"{o}m in \cite{Lindstrom1965} defines the same problem as the detecting set of vectors. Given an integer $q \geq 2$ and a finite set alphabet $\mathcal{M}$ of rational integers, let $\mathbf{v}_k$ for $1\leq k \leq K$ be $L$-dimensional (column) vectors with all components from $\mathcal{M}$ such that the $q^K$ sums
\begin{equation}
\sum_{k = 1}^K   \mathbf{v}_k \epsilon_k  \: \: (\epsilon_k = 0,1,2,\dots, q-1)
\end{equation}
\nid are all distinctly unique, then $\{ \mathbf{v}_1, \dots, \mathbf{v}_K \}$ are detecting set of vectors. Let $\mathsf{F}_q(L)$ be the maximal number of $L$-dimensional vectors and $\mathsf{f}_q(K)$ be the minimal vector length to form a detecting matrix for a given length $L$ and a number of vectors $K$. The problem of determining $\mathsf{f}_q(K)$ as a special case when $q=2$, $\mathcal{M} = \{0,1\}$ that can be equivalently expressed as a coin-weighing problem: what is the minimal number of weighings on an accurate scale to determine all false coins in a set of $K$ coins. The choice of coins for a weighing must not depend on results of previous weighings. This problem was first introduced by S\"oderberg and Shapiro \cite{Shapiro1963} for $K=5$. The minimal number of weighings, $L$, has only been found for a few different values of $K$ in \cite{Erdos1970}. However, Lindstr\"{o}m gives an explicit construction of $L \times  \upgamma(L+1)$ binary (alphabet $\{0, 1\}$) and $L \times \upgamma(L)+1$ antipodal (alphabet $\{\pm 1\}$) detecting matrices \cite{Lindstrom1964}, where $\upgamma(L)$\footnote{As an example, $\upgamma(8) = 12$.} is the number of ones in the binary expansion of all positive integers less than $L$. 
He also proved that the lower bound in the case of $\mathcal{M}=\{0,1\}$ or $\{\pm1\}$ is
\begin{equation}
\lim_{K\to\infty} \frac{\mathsf{f}_2(K) \log{K}}{K} =2.
\end{equation}
Cantor and Mills \cite{Cantor1966} constructed a class of $2^i \times (i+2)2^{(i-1)}$ ternary (alphabet $\{0, \pm 1\}$) detecting matrices for $i \in \mathbb{Z}^+$, which implies that in the case of $\mathcal{M}=\{0,\pm1\}$ the lower bound is
\begin{equation}
\lim_{K\to\infty} \frac{\mathsf{f}_3(K) \log{K}}{K} \leq 2.
\end{equation}
In the literature, most of the explicit construction of UD code sets are recursive \cite{Chang1979, Ferguson1982, Chang1984, Khachatrian1987, Khachatrian1989, Khachatrian1995, Khachatrian1998, Wu1997, Mow2009, Marvasti2009, ming2016, Marvasti2012, Marvasti2016, michel2012, michel2018, michelHindawi2018, michelWCNC2019, michelTernary2021}. To the best of our knowledge, it is worth mentioning that the maximum number 
of vectors of the explicit constructions of binary, antipodal and ternary code sets are $K_{\rm{max}}^b = \upgamma(L+1)$, $K_{\rm{max}}^a = \upgamma(L) +1$ and $K_{\rm{max}}^t = (i+2)2^{(i-1)}$,
as shown in Table \ref{table:binary}, Table \ref{table:antipodal} and Table \ref{table:ternary}, respectively. Several authors have proposed decoders with linear complexity in noiseless scenarios for the explicit construction, where the detecting matrix has the known-to-us maximum number of vectors, $K_{\rm{max}}$.  

\begin{table*}[h]
	\caption{Binary Codes} 
	\centering 
	\begin{threeparttable}
		\begin{tabular}{l l c c c c} 
			\hline\hline  
			\multicolumn{1}{c}{\multirow{2}{*}[-1.5pt]{\bf{Year}}} &
			\multicolumn{1}{l}{\multirow{2}{*}[-1.5pt]{\bf{Authors and Publications}}}  & \multicolumn{1}{c}{\multirow{2}{*}[-1.5pt]{\bf{Rows}}} & \multicolumn{1}{c}{\multirow{2}{*}[-1.5pt]{\bf{Columns}}} & \multicolumn{2}{c}{\multirow{1}{*}[-1.5pt]{\bf{Decoder}}} \\[0.5ex]  \cline{5-6}
			&  &  &  & \multirow{1}{*}[-1.5pt]{\bfseries{Noiseless}} & \multirow{1}{*}[-1.5pt]{\bfseries{AWGN}}\\ [1.0ex]
			\hline   \rule{-3pt}{2.5ex} 
			1963 & S\"{o}derberg and Shapiro \cite{Shapiro1963} & $L$  & $<\upgamma(L+1)$ & No  & No\\[0.6ex]
			1964 & Lindstr\"{o}m  \cite{Lindstrom1964}  & $L$ & $\bf{\boldsymbol{\upgamma}(L+1)}$\tnote{\dag} & No & No \\[0.6ex]
			1966 & Cantor and Mills \cite{Cantor1966}  & $2^i-1$ & ${i2^{(i-1)}}$ & No & No \\[0.6ex]
			1989 & Martirossian and Khachatrian \cite{Khachatrian1989}  & $L$ & $\bf{\boldsymbol{\upgamma}(L+1)}$ & Yes & No \\[0.6ex]
			2019 & Kulhandjian \textit{et al.} \cite{michelWCNC2019}  & $L$ & $\bf{\boldsymbol{\upgamma}(L+1)}$ & Yes & Yes \\[0.6ex]
			\hline 
		\end{tabular}
		\footnotesize
		\begin{tablenotes}
			\item[\dag] Code set constructions that achieve the maximum number of vectors $\mathbf{K}_{\rm{max}}$ are presented in bold.
		\end{tablenotes}
	\end{threeparttable}
	
	\label{table:binary}
\end{table*}

\begin{table*}[h]
	\caption{Antipodal Codes} 
	\centering 
	\begin{tabular}{l l c c c c} 
		\hline\hline  
		\multicolumn{1}{c}{\multirow{2}{*}[-1.5pt]{\bf{Year}}} &
		\multicolumn{1}{l}{\multirow{2}{*}[-1.5pt]{\bf{Authors and Publications}}}  & \multicolumn{1}{c}{\multirow{2}{*}[-1.5pt]{\bf{Rows}}} & \multicolumn{1}{c}{\multirow{2}{*}[-1.5pt]{\bf{Columns}}} & \multicolumn{2}{c}{\multirow{1}{*}[-1.5pt]{\bf{Decoder}}} \\[0.5ex]  \cline{5-6}
		&  &  &  & \multirow{1}{*}[-1.5pt]{\bfseries{Noiseless}} & \multirow{1}{*}[-1.5pt]{\bfseries{AWGN}}\\ [1.0ex]
		\hline   \rule{-3pt}{2.5ex} 
		1964 & Lindstr\"{o}m \cite{Lindstrom1964} & $L$  & $\bf{\boldsymbol{\upgamma}(L)+1}$ & No  & No\\[0.6ex]
		1987 & Khachatrian and Martirossian \cite{Khachatrian1987}  & $L$ & $\bf{\boldsymbol{\upgamma}(L)+1}$ & No & No \\[0.6ex]
		1995 & Khachatrian and Martirossian \cite{Khachatrian1995}  & $2^i$ & $\bf{i2^{(i-1)}+1}$ & Yes & No \\[0.6ex]
		2012 & Kulhandjian and Pados \cite{michel2012}  & $2^i$ & $\bf{i2^{(i-1)}+1}$ & Yes & No \\[0.6ex]
		\hline 
	\end{tabular}
	\label{table:antipodal}
\end{table*}

\begin{table*}[h]
	\caption{Ternary Codes} 
	\centering 
	\begin{tabular}{l l c c c c} 
		\hline\hline  
		\multicolumn{1}{c}{\multirow{2}{*}[-1.5pt]{\bf{Year}}} &
		\multicolumn{1}{l}{\multirow{2}{*}[-1.5pt]{\bf{Authors and Publications}}}  & \multicolumn{1}{c}{\multirow{2}{*}[-1.5pt]{\bf{Rows}}} & \multicolumn{1}{c}{\multirow{2}{*}[-1.5pt]{\bf{Columns}}} & \multicolumn{2}{c}{\multirow{1}{*}[-1.5pt]{\bf{Decoder}}} \\[0.5ex]  \cline{5-6}
		&  &  &  & \multirow{1}{*}[-1.5pt]{\bfseries{Noiseless}} & \multirow{1}{*}[-1.5pt]{\bfseries{AWGN}}\\ [1.0ex]
		\hline   \rule{-3pt}{2.5ex} 
		1966 & Cantor and Mills \cite{Cantor1966} & $2^i$  & $\bf{(i+2)2^{(i-1)}}$ & No  & No\\[0.6ex]
		1979 & Chang and Weldon \cite{Chang1979}  & $2^i$ & $\bf{(i+2)2^{(i-1)}}$ & Yes & No \\[0.6ex]
		1982 & Ferguson \cite{Ferguson1982}   & $2^i$ & $\bf{(i+2)2^{(i-1)}}$ & Yes & No \\[0.6ex]
		1984 & Chang \cite{Chang1984}  & $2^i$ & $\bf{(i+2)2^{(i-1)}}$ & No & No \\[0.6ex]
		1998 & Khachatrian and Martirossian \cite{Khachatrian1998}  & $2^i$ & $\bf{(i+2)2^{(i-1)}}$ & Yes & No \\[0.6ex]
		2012 & Mashayekhi and Marvasti \cite{Marvasti2012}  & $2^i$ & $2^{(i+1)}-1$ & Yes & Yes \\[0.6ex]
		2016 & Singh \textit{et al.} \cite{Marvasti2016}   & $2^i$ & $2^{(i+1)}-2$ & Yes & Yes \\[0.6ex]
		2018 & Kulhandjian \textit{et al.} \cite{michel2018}   & $2^i$ & $2^{(i+1)}+2^{(i-2)}-1$ & Yes & Yes \\[0.6ex]
		2021 & Kulhandjian \textit{et al.} \cite{michelTernary2021}   & $2^i$ & ${i2^{(i-1)}+1}$ & Yes & Yes \\[0.6ex]
		\hline 
	\end{tabular}
	\label{table:ternary}
\end{table*}


For noisy channels, recently in \cite{ming2016}, a class of antipodal code sequences, for overloaded CDM systems with simplified two-stage maximum-likelihood (\acs{ML}) detection, has been proposed. In addition to that, other overloaded matrices over the ternary alphabet are introduced in \cite{Marvasti2012} with a low-complexity
decoding algorithm. Similarly, in \cite{Marvasti2016} the authors propose overloaded code sets over the ternary alphabet that has a twin tree structured cross-correlation 
hierarchy that can be decoded with a simple multi-stage detector. Yet another construction of ternary codes that increases the number of columns, $K$, of UD codes for a such length compared to those proposed in \cite{Marvasti2012} and \cite{Marvasti2016} with a low-complexity polynomial time decoder is proposed in \cite{michel2018}.
The primary reason for such low-complexity decoders is that the code sets are constructed with a certain criteria, which entails lowering the maximum number of users $K < K_{\rm{max}}$, as shown in Table \ref{table:ternary}.

Apart from, binary, antipodal and ternary UD spreading codes higher alphabet $k$-ary spreading codes were studied by Lu \emph{et al.} \cite{LuShan2018} for the multiple-access adder channels.

In this work, for the first time we consider the problem of designing a low-complexity decoder that has a complexity of $\mathcal{O}(LK \:\mathsf{log}_2(K))$ for UD code sets in \cite{michel2012} having the maximum number of users $K_{\rm{max}}^a$. The code sets presented in \cite{michel2012} are also recursive, which make use of a linear map between vector spaces to Galois field extensions. These UD code sets are one possible construction of all possible distinct UD code sets, shown in Table \ref{table:antipodal}. Simulation results in terms of bit-error-rate (\acs{BER}) demonstrate that the proposed decoder has a degradation of only $1-2$ dB in SNR compared to the ML decoder at a BER of $10^{-3}$.

 Our contributions are summarized as follows:
      \vspace{-0.0cm}
      \begin{enumerate}[label=(\arabic*)]
       \item We present the proofs for the important \textit{\bf{Propositions 1-4}} that are presented for the first time in \cite{michel2012}.  Those new prepositions are actually the broken down versions of the unique decodability (UD) property.
       \item  Moreover, for the first time, based on \textit{\bf{Propositions 1-4}}, we formally prove that the maximum number of users for the given $L=8$ is $K_{\rm{max}}^a=13$. 
       \item The minimum Manhattan distance is proved to be $4$ for the recursive UD code sets in \cite{michel2012}.
      \item We develop a low-complexity decoder that has a complexity of $\mathcal{O}(LK\:\mathsf{log}_2(K))$ for our developed UD code sets in \cite{michel2012} having the maximum number of users $K_{\rm{max}}^a$.
       \item We compute the complexity of the deterministic noiseless detection algorithm (\acs{NDA}) presented in \cite{michel2012} and perform some complexity analysis for the proposed fast (low-complexity) detection algorithm (\acs{FDA}).
       
      \end{enumerate}

The rest of the paper is organized as follows. The minimum Manhattan distance of code sets in \cite{michel2012} is presented in Section \ref{minDist}, followed by the proofs of the \textit{\bf{Propositions 1-4}} and the maximum number of $K^a_{max}$ for the case of $L=8$ in Sections \ref{ProofOfProp} and \ref{Proof8_13}, respectively. Detailed discussion of the FDA is presented in Section \ref{fastDecoder}. The complexity analysis for both NDA and FDA algorithms is presented in Section \ref{performanceAnalysis}. After illustrating simulation results in Section \ref{simulation}, a few conclusions are drawn in Section \ref{conclusion}.

The following notations are used in this paper. All boldface lower case letters indicate column vectors and upper case letters indicate matrices, $()^T$ denotes transpose operation, \!\!\!\!\!$\mod$\!\! denotes the modulo operation,  $\mathsf{sgn}$ denotes the sign function, and $|\cdot|$ denotes cardinality of the set.

   \section{Proof of Propositions}
   \label{ProofOfProp}
   In order to facilitate the development of the proof it is beneficial to present the UD code set in \cite{michel2012} which are constructed as $\mathbf{C} = [\mathbf{H}_L \mathbf{V}_L] \in \{ \pm 1\}^{L \times K}$, where $\mathbf{V}_L \in \{\pm 1\}^{L \times (K-L)}$. We recall that the Sylvester-Hadamard matrix of order $2$ is  $\mathbf{H}_2 = \renewcommand{\baselinestretch}{0.7}
	{\normalsize 
		\begin{bmatrix}
		1&\m \m1\\
		1&\m-1
		\end{bmatrix}}$ and of order $2^{p+1}$ for $p = 1,2,...$ is $\mathbf{H}_{2^{p+1}} = \renewcommand{\baselinestretch}{0.7}
	{\normalsize 
		\begin{bmatrix}
		\mathbf{H}_{2^p}&\m \mathbf{H}_{2^p}\\
		\mathbf{H}_{2^p}&\m-\mathbf{H}_{2^p}
		\end{bmatrix}}$. Then, for any $p=1,2,...$, $\mathbf{H}_{2^p}\mathbf{H}_{2^p} = 2^p \mathbf{I}_{2^p \times 2^p}$, where $\mathbf{I}_{N \times N}$ is the $N \times N$ identity matrix. We introduce the notation $\mathbf{H}_4 = \left[\mathbf{h}_0, \mathbf{h}_1,\mathbf{h}_2,\mathbf{h}_3\right]$, $\left[\mathbf{a}_0, \mathbf{a}_1,\mathbf{a}_2,\mathbf{a}_3\right]\triangleq \left[\left[-1,1,1,1\right]^T,\left[1,-1,1,1\right]^T,\left[1,1,-1,1\right]^T,\left[1,1,1,-1\right]^T\right]$, and the negation function $\mathbf{x}^- \triangleq -\mathbf{x}$. We can see that set of vectors $G = \{\mathbf{h}_0,\mathbf{h}_1,\mathbf{h}_2,\mathbf{h}_3, \mathbf{a}_0,\mathbf{a}_1,\mathbf{a}_2,\mathbf{a}_3 \}$ together with $\odot$ operator form a finite group $(G, \odot)$. There exists an isomorphism $\varphi$, shown in Table \ref{isomorphism2} from $G$ to finite additive Abelian group $(\mathbb{F}_{2^4}, +)$ of extended Galois field $\mathbb{F}_{2^4}$, in other words $G$ is isomorphic to $(\mathbb{F}_{2^4}, +)$, $(G, \odot) \cong (\mathbb{F}_{2^4}, +)$. From linear algebra we know that there is an isomorphism from finite additive groups $(\mathbb{F}_{p^n}, +)$ to vector fields $(\mathbb{F}_{p}^n, +)$ and to $\mathbb{Z}_{p}^n$, that is $(\mathbb{F}_{p^n}, +) \cong (\mathbb{F}_{p}^n, +) \cong\mathbb{Z}_{p}^n$, \cite{David1991}.
	
	Table \ref{isomorphism2} shows the mapping of the vectors $\mathbf{h}_0,...,\mathbf{h}_3, \mathbf{a}_0, .., \mathbf{a}_3$ and its negated forms to elements in $\mathbb{F}_{2^4}$ with primitive polynomial $\alpha^4 + \alpha + 1 =0$, where $\alpha$ is the primitive element in extended Galois field GF$(2^4)$. Notice carefully that operation of the finite group $G$ is $\odot$, whereas the finite additive group $\mathbb{F}_{2^4}$ is $+$.
	
	\begin{table}[ht]
		
		\caption{Isomorphism $\varphi:G\mapsto \mathbb{F}_{2^4}$} 
		\centering  
		\begin{tabular}{c c c c} 
			\hline\hline                        
			Antipodal & Polynomial & Power  \\ [0.5ex] 
			\hline                  
			$\mathbf{h}_0$ & 0  & 0   \\ 
			$\mathbf{h}_2$ & $1$  & $1$ \\
			$\mathbf{h}_1$ & $\alpha$  & $\alpha$  \\
			$\mathbf{h}_0^-$ & $\alpha^2$  & $\alpha^2$   \\
			$\mathbf{a}_1$ & $\alpha^3$  & $\alpha^3$  \\
			$\mathbf{h}_3$ & $\alpha + 1$  & $\alpha^4$  \\
			$\mathbf{h}_1^-$ & $\alpha^2+\alpha$  & $\alpha^5$  \\
			$\mathbf{a}_1^-$ & $\alpha^3+\alpha^2$  & $\alpha^6$  \\
			$\mathbf{a}_2$ & $\alpha^3+\alpha +1$  & $\alpha^7$  \\
			$\mathbf{h}_2^-$ & $\alpha^2+1$  & $\alpha^8$  \\
			$\mathbf{a}_3$ & $\alpha^3+\alpha$  & $\alpha^9$  \\
			$\mathbf{h}_3^-$ & $\alpha^2+\alpha+1$  & $\alpha^{10}$  \\
			$\mathbf{a}_3^-$ & $\alpha^3+\alpha^2+\alpha$  & $\alpha^{11}$  \\
			$\mathbf{a}_2^-$ & $\alpha^3+\alpha^2+\alpha+1$  & $\alpha^{12}$  \\
			$\mathbf{a}_0$ & $\alpha^3+\alpha^2+1$  & $\alpha^{13}$  \\
			$\mathbf{a}_0^-$ & $\alpha^3+1$  & $\alpha^{14}$   \\ [1ex]
			\hline 
		\end{tabular}
		\label{isomorphism2} 
	\end{table}
	
	With the above formulation, the columns of $\mathbf{V}_L$, $\mathsf{F}_{\varphi} : \{\pm 1 \}^{L \times 1} \mapsto \mathbb{F}_{2^4}^{L' \times 1} $,  $\mathbf{f}_i = \mathsf{F}_{\varphi}(\mathbf{v}_i)$ maps the $i$-th column $\mathbf{v}_i$ to $\mathbf{f}_i$, for $i = 0,..., K-L-1$, where $\mathbf{f}_i = [f_{i,0} \ f_{i,1}, \dots, \ f_{i,L'}]^T$, $f_{i,j} \in \{0,1,\alpha,...,\alpha^{14} \}$, $0 \geq j \geq L'$, $L' = L/4 = 2^{p-2}$ for $p \geq 2$. As an example for $\mathbf{V}_8$ the mapped vectors are as two-dimensional vectors $\mathbf{f}_i = [f_{i,0} \ f_{i,1}]^T$.
	
   In order to prove \textit{\bf{Proposition 1-4}} that was first presented in \cite{michel2012}, we will expose some very interesting claims. One property of the Hadamard matrix is the following, if we replace $1$'s in Hadamard matrix with $0$'s, and replace the $-1$'s with $1$'s we create $L-1$ Hadamard binary channel codes, since the first column results in zero vector \cite{Todd2005}. Let the Hadamard columns $\mathbf{H}_L = [\mathbf{h}_0,...,\mathbf{h}_{L-1}]$ be mapped into linear binary codes $\mathbf{B}_L = [\mathbf{b}_0,...,\mathbf{b}_{L-1}]$. There is an isomorphism between binary addition and multiplication of $\{\pm 1 \}$ elements. Consequently, since binary addition of Hadamard linear codes is a Hadamard code itself then it is equivalent to element-wise multiplication of any Hadamard codes $\mathbf{H}_L$ is also in $\mathbf{H}_L$. Therefore, some properties can be derived. We denote the mapping of $\mathbf{h}_i$ for $ i=0,...,L-1$ vectors into binary $\mathbf{b}_i \in \{0,1\}^{L\times 1}$ vectors by $\mathbf{b}_i = f(\mathbf{h}_i)$ and inverse $\mathbf{h}_i = f^{-1}(\mathbf{b}_i)$, where $\mathbf{H}_L = [\mathbf{h}_0,...,\mathbf{h}_{L-1}]$ and $\mathbf{B}_L = [\mathbf{b}_0,...,\mathbf{b}_{L-1}]$. Define the $f(\mathbf{x}) = (\boldsymbol{1} - \mathbf{x})/2$ and the inverse $f^{-1}(\mathbf{y}) = [(-1)^{y_0},...,(-1)^{y_{L-1}}]$ functions. Then, the $j$th column corresponds to linear combination of $\mathbf{b}_j = j_1 \mathbf{b}_{1} \oplus j_2 \mathbf{b}_{2} \oplus j_4 \mathbf{b}_{4} \oplus ...\oplus j_{2^p} \mathbf{b}_{2^p}$, where $j$ has the binary representation $j = j_1 + 2j_2 + 4j_4 +...+2^pj_{2^p}$ \cite{Todd2005}. Here $\oplus$ is modular $2$ summation and $j_l \in \{0,1\}$ for $l = \{1,2,4,...,2^p\}$. Therefore, the resulting code is linear code $\mathbf{b}_k = \epsilon_0 \mathbf{b}_0 \oplus \epsilon_1 \mathbf{b}_1 \oplus ... \oplus \epsilon_{L-1} \mathbf{b}_{L-1}$ for all $ \epsilon_{i} \in \{0,1\}$, where $\mathbf{b}_k \in \mathbf{B}_L$.
\begin{table}[H]
	
	\caption{Binary addition and $\{\pm 1\}$ multiplication} 
	\centering  
	\begin{tabular}{c c c c} 
		\hline\hline                        
		Binary & Bipolor  \\ [0.5ex] 
		\hline                  
		0 + 0 = 0 & 1 $\times$ 1 = 1   \\ 
		0 + 1 = 1 & 1 $\times$ -1 = -1   \\
		1 + 0 = 1 & -1 $\times$ 1 = -1   \\
		1 + 1 = 0 & -1 $\times$ -1 = 1   \\ [1ex]
		\hline 
	\end{tabular}
	\label{table1} 
\end{table}
Let us look at the problem and assume that $\mathsf{Null}(\mathbf{C}) \in \mathcal{Z}$, where $\mathcal{Z} \in \{0,\pm 1\}^{K\times 1}$ excluding trivial case $\{0\}^{K\times 1}$ and let $\mathbf{z} \in \mathcal{Z} $. Then the nullspace of $\mathbf{C}$ can be formulated as such
\begin{subequations} \label{null02}
	\begin{eqnarray}
	\mathbf{C} \mathbf{z} &=& \boldsymbol{0} \label{first}\\
	\left [ \mathbf{H}_L \mathbf{V} \right ] \mathbf{z}  &=& \boldsymbol{0}_L \label{second}\\
	\mathbf{H}_L \mathbf{z}_1 + \mathbf{V}_L \mathbf{z}_2    &=& \boldsymbol{0} \label{third}\\
	\mathbf{H}_L \mathbf{z}_1     &=& -\mathbf{V}_L \mathbf{z}_2  \label{fourth}.
	\end{eqnarray}
\end{subequations}
\nid where $\mathbf{z} = \left[ \mathbf{z}_1^T   \mathbf{z}_2^T \right ]^T $, $\mathbf{z}_1 \in \{0,\pm 1 \}^{L\times 1}$, $\mathbf{z}_2 \in \{0,\pm 1 \}^{(K-L) \times 1}$ and $\boldsymbol{0} = \{ 0\}^{L\times 1}$.
If we assume $\mathbf{z}_1 = \boldsymbol{0}$ and for some $\mathbf{z}_2$ (\ref{null02}) is true, then (\ref{null02}) can be expressed in terms of $\mathbf{V}_L$ only as
\begin{eqnarray}\label{dep03}
\mathbf{V}_L \mathbf{z}_2    &=& \boldsymbol{0}. 
\end{eqnarray}

\begin{claim}
	\label{claim01}
	Element-wise multiplication of $\mathbf{H}_L$'s columns $\mathbf{h}_j = \epsilon_0 \mathbf{h}_0 \odot \epsilon_1 \mathbf{h}_1 \odot ... \odot \epsilon_{L-1} \mathbf{h}_{L-1}$  for all possible $0\leq i\leq L-1$, $ \epsilon_{i} \in \{0,1\} $ is true, where $\mathbf{h}_j \in \mathbf{H}_L$.
\end{claim}

\begin{claim}
	\label{claim02}
	If $\mathbf{V}_L$ does not intersect $\boldsymbol{0}$ over the non-trivial $\{0,\pm 1\}$ combinations and (\ref{null02}) is true, then  $\beta_0 \mathbf{v}_0 \odot \beta_1 \mathbf{v}_1 \odot ... \odot \beta_{K-L-1} \mathbf{v}_{K-L-1}= \alpha \mathbf{h}_j $ must be true for some $\beta_i \in \{0, 1\}$, $0\leq i \leq K-L-1$, $\alpha \in \{\pm 1\}$. However, if $\beta_0 \mathbf{v}_0 \odot \beta_1 \mathbf{v}_1 \odot ... \odot \beta_{K-L-1} \mathbf{v}_{K-L-1}= \alpha \mathbf{h}_j $ for some $\beta_i \in \{0, 1\}$, $0\leq i \leq K-L-1$, $\alpha \in \{\pm 1\}$ is true then it is not necessary that (\ref{null02}) is also true.
\end{claim}

\begin{claim}
	\label{claim}
	If (\ref{dep03}) is true, then  $\beta_0 \mathbf{v}_0 \odot \beta_1 \mathbf{v}_1 \odot ... \odot \beta_{K-L-1} \mathbf{v}_{K-L-1} = \pm \boldsymbol{1} $ must be true for some $\beta_i \in \{0, 1\}$, $0\leq i \leq K-L-1$ and $\boldsymbol{1} $ which is all one vector of dimension $L\times 1$. However, if $\beta_0 \mathbf{v}_0 \odot \beta_1 \mathbf{v}_1 \odot ... \odot \beta_{K-L-1} \mathbf{v}_{K-L-1} = \pm \boldsymbol{1} $ for some $\beta_i \in \{0, 1\}$, $0\leq i \leq K-L-1$ is true then it is not necessary that (\ref{dep03}) is also true.
\end{claim}

\begin{claim}
	\label{claim04}
	If $\mathbf{V}_L$ does not satisfy (\ref{null02}) and (\ref{dep03}) then multiplying any column of $\mathbf{V}_L$ by $-1$ will still not satisfy (\ref{null02}) and (\ref{dep03}).
\end{claim}
Authors in \cite{michel2012} showed by exhaustive search that in case when $L=4$ then $K_{\rm{max}}^a = 5$ and all the possible candidates are $[-1, 1, 1, 1]^T$, $[1,-1, 1, 1]^T$, $[1, 1, -1, 1]^T$, $[1, 1, 1, -1]^T$ and their negatives according to the Claim \ref{claim04}. Furthermore, we continue to present additional claims. Let $\mathbf{v}_i \in \{\pm 1\}^4$ for $0\leq i \leq N-1$ then,
\begin{claim}
	\label{claim05}
	If $\beta_0 \mathbf{v}_0 \odot \beta_1 \mathbf{v}_1 \odot ... \odot \beta_{N-1} \mathbf{v}_{N-1} = \alpha \mathbf{h}_j$ and $\alpha \in \{\pm 1\}$, $\mathbf{h}_j \in \mathbf{H}_4$ then $\forall N \in \mathbb{N}$, $[\mathbf{v}_0, ...,\mathbf{v}_{N-1} ] \mathbf{A}_v \boldsymbol{1} = \mathbf{H}_4 \mathbf{A}_H \boldsymbol{1}$ is true for some $\beta_i \in \{0, 1\}$, $0\leq i \leq N-1$, $N' \in \mathbb{N}$ is not necessarily equal to $N$, $\mathbf{A}_v \in \{0,\pm 1\}^{N \times N}$, $\mathbf{A}_H \in \{0,\pm 1\}^{4 \times N'}$, which have maximum one $\pm 1$ entry in each column and $0$'s elsewhere.
\end{claim}

\begin{claim}
	\label{claim06}
	If $0 = N \mod{2}$ and $\beta_0 \mathbf{v}_0 \odot \beta_1 \mathbf{v}_1 \odot ... \odot \beta_{N-1} \mathbf{v}_{N-1} = \pm \boldsymbol{1}$ then $[\mathbf{v}_0, ...,\mathbf{v}_{N-1} ] \mathbf{A}_v \boldsymbol{1} = \boldsymbol{0}$ is true for some $\beta_i \in \{0, 1\}$, $0\leq i \leq N-1$, $\mathbf{A}_v \in \{0,\pm 1\}^{N \times N}$, which have maximum one $\pm 1$ entry in each column and $0$'s elsewhere, except when $N = 4$ and $\mathbf{v}_0 = [-1, 1, 1, 1]^T$, $\mathbf{v}_1 = [1, -1, 1, 1]^T$, $\mathbf{v}_2 = [1, 1, -1, 1]^T$, $\mathbf{v}_3 = [1, 1, 1, -1]^T$.
\end{claim}

The generalization of \textit{\bf{Claims 1-6}} formalizes the \textit{\bf{Propositions 1-4}} and hence we presented the proofs.

\section{Proof $K_{\rm{max}}^a = 13$ for $L=8$ }
\label{Proof8_13}
\vspace{-0.0 cm}
For the case when $L =8$, we prove that the maximum number of columns we can append to $\mathbf{H}_8$ is actually $K^a_{\rm{max}} -L=5$. Note that all proposed UD or another words one-to-one matrix constructions in literature are $\mathbf{C} \subset \mathcal{C}$, where $\mathcal{C}$ is all possible antipodal UD code sets for a given $L$. In order to prove for the maximum number of possible vectors $K_{\rm{max}}^a$, we should look at all possible $\mathbf{V} $ and count how many structure of $\mathbf{V} $ hits any of the forbidden lattice points $\mathbf{H}_m \mathbf{z}_1$ and how many does not. If for a given $k$ column, we count the number of $\mathbf{V}$ that hits any forbidden lattice points is equal to the total number of possible $\mathbf{V}$ vector set, then we know that maximum number of columns of $\mathbf{V}$ that does not hit any forbidden lattice points should be smaller than $k$.

First, we transform antipodal vectors into polynomials with integer coefficients $\mathbb{Z}[x]$, $\mathsf{F}: \{\pm 1\}^{m\times 1} \mapsto \mathbb{Z}[x]$. Those polynomials represent the row location and number of $-1$s or $+1$s in any antipodal $\mathbf{v} \in \{\pm 1\}^{m\times 1}$ with dimension $m$. Let the polynomial be
\begin{equation}
\label{poly01}
G(x) = a_0x^0 + a_1x^1 + a_2 x^2 + ... + a_{m-1} x^{m-1},
\end{equation}
where $a_i \in \mathbb{Z}$. Additions of vector $\mathbf{v}$ in vector space is equivalent to the addition of $\mathbb{Z}[x]$ in polynomial space. Each antipodal vectors, $\mathbf{v}_j \in \{\pm1 \}^{m\times 1}$, where $0 \leq j \leq 2^m-1$, is mapped to the corresponding polynomials $G_n^j(x)$ and $G_p^j(x)$ to represent the $+1$ and $-1$ of the $\mathbf{v}_j$. As an example, for $m=4$, the antipodal vector, $\mathbf{v}_{10} = [1,-1,1,-1]^T$, is mapped to $G_n^{10}(x) = x^1 + x^3$ and $G_p^{10}(x) = x^0 + x^2$ polynomials. Observe that for any antipodal vector $
\mathbf{v}_j$, the addition of polynomials, $G_n^j(x) + G_p^j(x) = x^0 + x^1 + ... + x^{m-1}$.
In order to visualize polynomial additions in $2$-dimensional Euclidean space we can further transform $\mathbb{Z}[x]$ vector space into $\Lambda  \subseteq \mathbb{Z}^2$ integer lattice, $\mathsf{H}: \mathbb{Z}[x] \mapsto \mathbb{Z}^2$. Let us define functions $\sigma_n^j(m) = G_n^j(m)$, $\sigma_p^j(m) = G_p^j(m)$, which are the evaluations of polynomials $G_n^j(x)$ and $G_p^j(x)$ at $m$, where $m$ is the dimension of vector $\mathbf{v}_j$. 
By setting the x-axis and y-axis to be $G_n^j(x)$ and $G_p^j(x)$, we can build $\Lambda  \subseteq \mathbb{Z}^2$ lattice points, since evaluations $\sigma_n^j(m)$ and $\sigma_p^j(m)$ for each antipodal $\mathbf{v}_j$ vectors are integers. Taking the above example of antipodal vectors having dimension of $m=4$ the equivalent integer lattice points is shown in Table \ref{Lattice01}.
\begin{center}
	\begin{figure}[H]
	\hspace{-0.6cm}
		\centering
		\includegraphics[width=4.0 in]{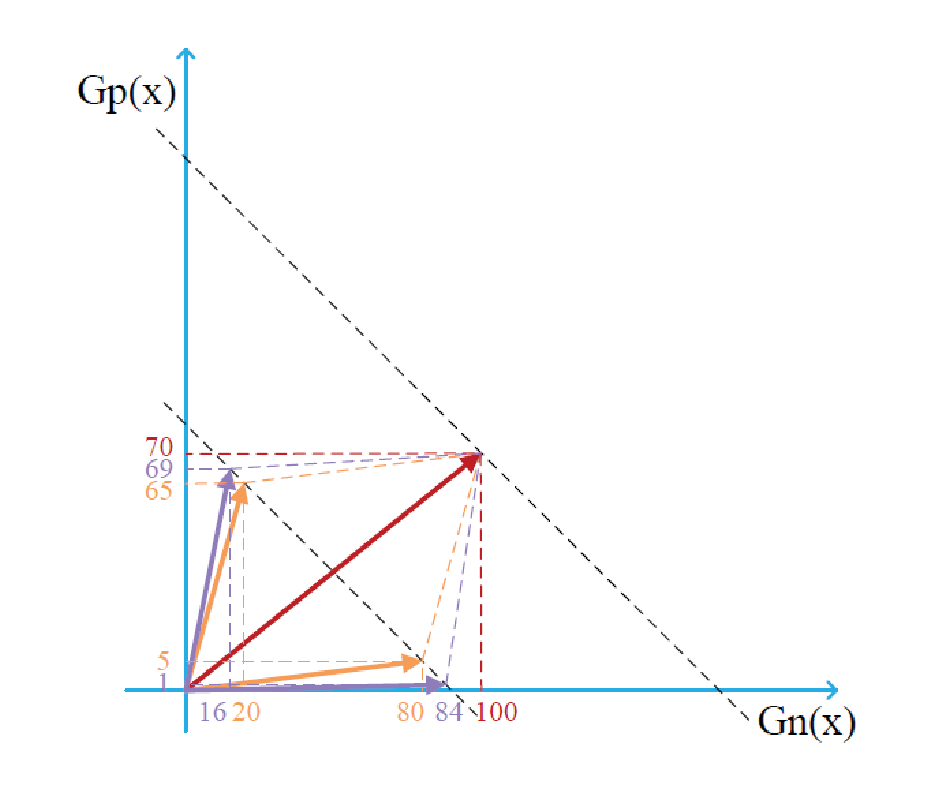}
		\centering \caption{Addition of lattice points in $\Lambda  \subseteq \mathbb{Z}^2$, $m = 4$.}\label{2DLattice}
	\end{figure}
\end{center}
\vspace{-0.4cm}
Let us define \textit{sublattices} $\Lambda_H  \subseteq \Lambda $ and $\Lambda_{V'} \subseteq \Lambda$ constructed by vectors of $\left[\mathbf{H}_4 -\mathbf{H}_4 \right]$ and $\mathbf{V}'$, where $\mathbf{V}' \in \{\pm 1\}^{4 \times 8}$, $\mathbf{v}'_i \notin \left[\mathbf{H}_4 -\mathbf{H}_4 \right] $ for $1\leq i \leq 8$. As a reminder, for the antipodal UD code set the goal is to construct $\mathbf{V}'$ with the maximum number of columns $K_{\rm{max}}^a - L$ such that $\Lambda_H \cap \Lambda_{V'} = \emptyset$.

\begin{table*}[h]
	\caption{Lattice points for all antipodal vectors $\mathbf{v}$ when $m=4$} 
	\centering  
	\begin{tabular}{ccccccccccccccccc} 
		\hline\hline                        
		$j$  & 0 & 1 & 2 & 3 & 4 & 5 & 6 & 7 & 8 & 9 & 10& 11& 12& 13& 14& 15 \\ [0.5ex] 
		\hline                 
		$G_n^j(m)$ & 0 & 1 & 4 & 5 & 16& 17& 20& 21& 64& 65& 68& 69& 80& 81& 84& 85  \\ 
		$G_p^j(m)$ & 85& 84& 81& 80& 69& 68& 65& 64& 21& 20& 17& 16& 5 & 4 & 1 & 0   \\
		\hline 
	\end{tabular}
	\label{Lattice01} 
\end{table*}
Fig. \ref{2DLattice} demonstrates a hit, in other words $\Lambda_H \cap \Lambda_{V'} \neq \emptyset$ for $k=2$ and $\mathbf{V}' =[\mathbf{v}_{4}, \mathbf{v}_{14} ] = \left[\left[1,1,-1,1\right]^T \left[1,-1,-1,-1\right]^T \right]$, which is equivalent to $\left[16,69\right]^T$ and $ \left[84,1\right]^T $ in $\Lambda_{V'}$. If we construct the third and the fourth columns to be  $\mathbf{v}_{6} = \left[1,-1,-1,1\right]^T$ and $\mathbf{v}_{12} = \left[1,1,-1,-1\right]^T$, equivalently $\left[20,65\right]^T$ and $ \left[80,5\right]^T $ in $\Lambda_{H}$ of $\mathbf{H}_4$ adding them we get $\left[2,0,-2,0\right]^T$, which is shown as a lattice point $\left[100,70\right]^T$ in Fig. \ref{2DLattice}.
Clearly, addition of $\left[16,69\right]^T$ and $ \left[84,1\right]^T $ lattice points also results $\left[100,70\right]^T$, hence, this is why intersection of two \textit{sublattices} is not an empty set. Therefore, it can be shown that all possible $\binom{8}{2}$ combination of two columns of $\mathbf{V}'$ will at least hit a lattice points of $\Lambda_H$, which means that $\Lambda_H \cap \Lambda_{V'} \neq \emptyset$ for $k=2.$ Hence, $|\mathbf{V}' |= 1$ for the case of $m=4$. Using this approach, we can construct \textit{sublattices} $\Lambda_H$ and $\Lambda_{V'}$ for the case of $m=8$ and find the upper bound of $k$ such that $\Lambda_H \cap \Lambda_{V'} = \emptyset$. This creates a framework to search the solution using geometric combinatorics (e.g. Minkowski sum, Minkowski geometry of numbers \cite{Olds2000}), partition and decomposition of $\Lambda$ into equivalence classes, formed by \textit{sublattices} and its' cosets to prove the $K_{\rm{max}}^a$ we can have for a given $m$ and how to generate those vectors in $\mathbf{V}'$.

Polynomials of $G_n^j(x)$ and $G_p^j(x)$ can also be represented by their exponents in the above example $G^{10}_n(x) \to E^{10}_n = \{1, 3\}$ and $G^{10}_p(x) \to E^{10}_p = \{0, 2\}$. Therefore, our problem of avoiding non-trivial combinations of $\mathbf{V}$ to hit any forbidden lattice points becomes
\begin{equation}
\label{poly02}
\biguplus_{\mathbf{h}_t \in \mathbf{H}_m} E_i^t  \neq \biguplus_{\mathbf{v}_t \in \mathbf{V}} E_j^t  \; \; i,j \in \{p,n\},
\end{equation}
\noindent where $\biguplus$ denotes the multiset union,  element-wise sums of each multisets and $1\geq t \geq 2^m-1$ integer value corresponds to antipodal vector. The objective is to avoid $\mathbf{V}$ $\biguplus_{\mathbf{v}_t \in \mathbf{V}} E_j^t$ sums of multisets $E_p$ or $E_n$ of $\mathbf{V}$ to hit any forbidden  $\biguplus_{\mathbf{h}_t \in \mathbf{H}_m} E_i^t $ sum of multisets $E_p$ or $E_n$ of $\mathbf{H}_m$. Obviously, there are $3^m$ forbidden multisets of $\mathbf{H}_m$  and we do not want $3^k-1$ multisets of $\mathbf{V}$ excluding trivial case hit any of the forbidden multisets. We can use multiset partition theories and study bipolar vectors by their $E_p$ and $E_n$ representations and prove the maximal number of $k$.

Note that in our matrix construction design of the $\mathbf{v}_i$'s that they are distinct and not equal to any of columns of $\mathbf{H}_m$ or $-\mathbf{H}_m$. If, however, any of $\mathbf{v}_i \in \pm \mathbf{H}_m$, then the multiset of $\mathbf{v}_i$ hits the forbidden multiset of $\mathbf{H}_m$. Such $\mathbf{v}_i$ vectors can never be included in vector set of UD codes. Additionally, the $\mathbf{v}_i$s can be replaced by  $-\mathbf{v}$ without violating the uniquly decodability property (\ref{null02}). Since all possible combinations of vectors including $-\mathbf{v}$ of $\mathbf{V}$ do not hit any of the forbidden multisets. In other words, if $[ \mathbf{H}_m \mathbf{V} ] \in \mathcal{C}$ so is any number of columns of $\mathbf{V}$ or $\mathbf{H}_m$ multiplied by $-1$.

One way to approach this problem is to classify all $\mathbf{v}$ bipolar vectors into groups then use inclusion-exclusion principle. There are $2^m$ number of $\mathbf{v}$'s that consist of $|E_n| = i$, where $i = \{0,1,...,m\}$ with all the combinations makes $\sum_{i=0}^m \binom{m}{i}= 2^m$.

Let us divide the total number of $m$-dimensional bipolar vectors into two classes; $\mathbf{v} \in \mathcal{B}_m^+$ if $0\leq | E_n| < m/2 $, $|E_n| = m/2 | E_p = \{1, 2, ..\}$ excluding $m$ $\mathbf{h}_i$'s columns from $\mathbf{H}_m$ and $\mathbf{v} \in \mathcal{B}_m^-$ if $m/2 < | E_n|\leq m $, $|E_n| = m/2 | E_n = \{1, 2,..\}$ excluding $m$ $-\mathbf{h}_i$'s columns from $-\mathbf{H}_m$. It's clear that if any $\mathbf{v} \in \mathcal{B}_m^+$ then $-\mathbf{v} \in \mathcal{B}_m^-$.  This narrows for our design to consider only distinct vector sets and the total number of such distinct $\mathbf{v}_i$'s to be considered in our $\mathbf{V}$ design is $\sum_{i=0}^{m/2} \binom{m}{i} - m = 2^{m-1} - m$. We need to construct from distinct vectors $\mathbf{v} \in \mathcal{B}_m^+$ such that they do not hit any forbidden multisets.

Therefore, the total possible number of $\mathbf{V}$ sets with $k$ columns is  $\binom{2^{m-1} - m}{k}$. Out of this total number of $\mathbf{V}$ sets with $k$ columns only some satisfy UD or one-to-one condition when appended to $\mathbf{H}_m$. If all the possible $\mathbf{V}$'s do not satisfy one-to-one condition that means $k$ exceeds the maximum number of columns that can be added to $\mathbf{H}_m$. Hence, we want to count how many combinations of $\mathbf{v} \in \mathcal{B}_m^+$ with $k$ columns hit the forbidden multisets. If the number of combination is equal the total number of $\mathbf{V}$ sets then we know that $k$ is not the maximum. Counting that number will help us to prove the maximum number of columns $k$.

We classify $\mathcal{B}_m^+$ into different groups, so that any combinations of vectors in similar group hits the forbidden multisets and therefore, such $\mathbf{v}$ that belong to the same group must be avoided in our design.

In our example, for $m=8$ and $k=2$, we classify $\mathcal{B}_8^+$ into groups and count how many $\mathbf{V}$'s do not satisfy the UD condition or one-to-one condition out of $\binom{2^{8-1} - 8}{2} = \binom{120}{2}=7140$ possible vector sets. By looking at all possible combinations of $\mathbf{V} =  [\mathbf{v}_{j_1} \mathbf{v}_{j_2} ]$ , $|E_n^{j_1}|, |E_n^{j_2}| \in \{1,...,8\}$ we only consider $(|E_n^{j_1}|, |E_n^{j_2}|) = \{ (1,3), (2,2), (3,3), (4,4)\}$. Since to hit $[\mathbf{h}_1 \pm \mathbf{h}_i]$, where $2\leq i \leq 8$, the vectors $[\mathbf{v}_{j_1} \mathbf{v}_{j_2} ]$ must have $(|E_n^{j_1}|, |E_n^{j_2}|) = (1,3)$ and $(|E_n^{j_1}|, |E_n^{j_2}|) = (2,2)$. We can easily show that if $[\mathbf{v}_{j_1} \mathbf{v}_{j_2} ]$ hit forbidden multisets with $(|E_n^{j_1}|, |E_n^{j_2}|) = (1,3)$ and $(|E_n^{j_1}|, |E_n^{j_2}|) = (2,2)$ then $[-\mathbf{v}_{j_1} -\mathbf{v}_{j_2} ]$ with $(|E_n^{j_1}|, |E_n^{j_2}|) = (7,5)$ and $(|E_n^{j_1}|, |E_n^{j_2}|) = (6,6)$ hit $[-\mathbf{h}_1 \mp \mathbf{h}_l]$ and $l \neq i$. Also, if $[\mathbf{v}_{j_1} \mathbf{v}_{j_2} ]$ hit forbidden multisets with $(|E_n^{j_1}|, |E_n^{j_2}|) = (2,2)$ then $[-\mathbf{v}_{j_1} \mathbf{v}_{j_2} ]$ and $[\mathbf{v}_{j_1} -\mathbf{v}_{j_2} ]$ with $(|E_n^{j_1}|, |E_n^{j_2}|) = (2,6)$ hit $[\pm \mathbf{h}_l \pm \mathbf{h}_p]$ and $i\neq l$, $p\neq i$. Hitting $[\pm \mathbf{h}_i \pm \mathbf{h}_j]$, where $2\leq i \neq j \leq 8$ multisets then vectors must have $(|E_n^{j_1}|, |E_n^{j_2}|) = (3,5)$ and $(|E_n^{j_1}|, |E_n^{j_2}|) = (4,4)$. Since we only work with distinct vectors $\mathcal{B}_8^+$ then $(|E_n^{j_1}|, |E_n^{j_2}|) = (3,5)$ that has $[\mathbf{v}_{j_1} -\mathbf{v}_{j_2}] $ is equivalent to $[\mathbf{v}_{j_1} \mathbf{v}_{j_2}] $ with $(|E_n^{j_1}|, |E_n^{j_2}|) = (3,3)$.

Now, let us classify $(|E_n^{j_1}|, |E_n^{j_2}|) = (1,3)$,  $(|E_n^{j_1}|, |E_n^{j_2}|) = (2,2)$, $(|E_n^{j_1}|, |E_n^{j_2}|) = (3,3)$, $(|E_n^{j_1}|, |E_n^{j_2}|) = (4,4)$ into $\mathcal{A}_i$'s, $\mathcal{D}_i$'s, $\mathcal{F}_i$'s and $\mathcal{G}_i$'s groups. We know that the total number of $\mathbf{v}$s with $|E_n| = 1$ is $\binom{8}{1} =8$, $|E_n| = 2$ is $\binom{8}{2} =28$,$|E_n| = 3$ is $\binom{8}{3} =56$ and $|E_n| = 4$ is $\binom{8}{4} =70$.

Here is how we divide $56$ of $|E_n| = 3$ and $8$ of $|E_n| = 1$ into $\mathcal{A}_i$'s,  $1\leq i \leq 8$ groups.

\vspace{0.3cm}
$\mathcal{A}_1 = 
\renewcommand{\baselinestretch}{0.9}
{\footnotesize \setcounter{MaxMatrixCols}{34}
	\begin{bmatrix} -1  \\ \phantom{-}1   \\ \phantom{-}1   \\  \phantom{-}1 \\ \phantom{-}1 \\ \phantom{-}1  \\ \phantom{-}1  \\ \phantom{-}1   \end{bmatrix} \begin{bmatrix} \phantom{-}1 & \phantom{-}1 & \phantom{-}1 & \phantom{-}1 & \phantom{-}1 & \phantom{-}1 & \phantom{-}1\\ -1 & -1 & -1 & \phantom{-}1 & \phantom{-}1 & \phantom{-}1 & \phantom{-}1  \\ -1 & \phantom{-}1 & \phantom{-}1 & -1 & -1 & \phantom{-}1 & \phantom{-}1   \\  -1 & \phantom{-}1 & \phantom{-}1 & \phantom{-}1 & \phantom{-}1 & -1 & -1 \\ \phantom{-}1 & -1 & \phantom{-}1 & -1 & \phantom{-}1 & -1 & \phantom{-}1 \\ \phantom{-}1 & -1 & \phantom{-}1 & \phantom{-}1 & -1 & \phantom{-}1 & -1  \\ \phantom{-}1 & \phantom{-}1 & -1 & -1 &  \phantom{-}1 &  \phantom{-}1 & -1 \\ \phantom{-}1 & \phantom{-}1 & -1 & \phantom{-}1 & -1 & -1 &  \phantom{-}1  \end{bmatrix}} $,\\
$\mathcal{A}_2 =  
\renewcommand{\baselinestretch}{0.9}
{\footnotesize \setcounter{MaxMatrixCols}{34}
	\begin{bmatrix} \phantom{-}1  \\ -1   \\ \phantom{-}1   \\  \phantom{-}1 \\ \phantom{-}1 \\ \phantom{-}1  \\ \phantom{-}1  \\ \phantom{-}1   \end{bmatrix} \begin{bmatrix} -1 & -1 & -1 & \phantom{-}1 & \phantom{-}1 & \phantom{-}1 & \phantom{-}1\\ \phantom{-}1 & \phantom{-}1 & \phantom{-}1 & \phantom{-}1 & \phantom{-}1 & \phantom{-}1 & \phantom{-}1  \\ -1 & \phantom{-}1 & \phantom{-}1 & -1 & -1 & \phantom{-}1 & \phantom{-}1   \\  -1 & \phantom{-}1 & \phantom{-}1 & \phantom{-}1 & \phantom{-}1 & -1 & -1 \\ \phantom{-}1 & -1 & \phantom{-}1 & -1 & \phantom{-}1 & -1 & \phantom{-}1 \\ \phantom{-}1 & -1 & \phantom{-}1 & \phantom{-}1 & -1 & \phantom{-}1 & -1  \\ \phantom{-}1 & \phantom{-}1 & -1 & \phantom{-}1 &  -1 & -1 & \phantom{-}1 \\ \phantom{-}1 & \phantom{-}1 & -1 & -1 & \phantom{-}1 & \phantom{-}1 &  -1  \end{bmatrix}} $,

$\mathcal{A}_3 = 
\renewcommand{\baselinestretch}{0.9}
{\footnotesize \setcounter{MaxMatrixCols}{34}
	\begin{bmatrix} \phantom{-}1  \\ \phantom{-}1   \\ -1   \\  \phantom{-}1 \\ \phantom{-}1 \\ \phantom{-}1  \\ \phantom{-}1  \\ \phantom{-}1   \end{bmatrix} \begin{bmatrix} -1 & -1 & -1 & \phantom{-}1 & \phantom{-}1 & \phantom{-}1 & \phantom{-}1\\ -1 & \phantom{-}1 & \phantom{-}1 & -1 & -1 & \phantom{-}1 & \phantom{-}1  \\ \phantom{-}1 & \phantom{-}1 & \phantom{-}1 & \phantom{-}1 & \phantom{-}1 & \phantom{-}1 & \phantom{-}1   \\  -1 & \phantom{-}1 & \phantom{-}1 & \phantom{-}1 & \phantom{-}1 & -1 & -1 \\ \phantom{-}1 & -1 & \phantom{-}1 & -1 & \phantom{-}1 & -1 & \phantom{-}1 \\ \phantom{-}1 & \phantom{-}1 & -1 & \phantom{-}1 & -1 & -1 & \phantom{-}1  \\ \phantom{-}1 & -1 & \phantom{-}1 & \phantom{-}1 &  -1 & \phantom{-}1 & -1 \\ \phantom{-}1 & \phantom{-}1 & -1 & -1 & \phantom{-}1 & \phantom{-}1 &  -1  \end{bmatrix}}$,\\
$\mathcal{A}_4 =
\renewcommand{\baselinestretch}{0.9}
{\footnotesize \setcounter{MaxMatrixCols}{34}
	\begin{bmatrix} \phantom{-}1  \\ \phantom{-}1   \\  \phantom{-}1  \\  -1 \\ \phantom{-}1 \\ \phantom{-}1  \\ \phantom{-}1  \\ \phantom{-}1   \end{bmatrix} \begin{bmatrix} -1 & -1 & -1 & \phantom{-}1 & \phantom{-}1 & \phantom{-}1 & \phantom{-}1\\ -1 & \phantom{-}1 & \phantom{-}1 & -1 & -1 & \phantom{-}1 & \phantom{-}1  \\ -1 & \phantom{-}1 & \phantom{-}1 & \phantom{-}1 & \phantom{-}1 & -1 & -1   \\  \phantom{-}1 & \phantom{-}1 & \phantom{-}1 & \phantom{-}1 & \phantom{-}1 & \phantom{-}1 & \phantom{-}1 \\ \phantom{-}1 & -1 & \phantom{-}1 & -1 & \phantom{-}1 & -1 & \phantom{-}1 \\ \phantom{-}1 & \phantom{-}1 & -1 & \phantom{-}1 & -1 & -1 & \phantom{-}1  \\ \phantom{-}1 & \phantom{-}1 & -1 & -1 &  \phantom{-}1 & \phantom{-}1 & -1 \\ \phantom{-}1 & -1 & \phantom{-}1 & \phantom{-}1 & -1 & \phantom{-}1 &  -1  \end{bmatrix}}$,

$\mathcal{A}_5 =
\renewcommand{\baselinestretch}{0.9}
{\footnotesize \setcounter{MaxMatrixCols}{34}
	\begin{bmatrix} \phantom{-}1  \\ \phantom{-}1   \\  \phantom{-}1  \\  \phantom{-}1 \\ -1 \\ \phantom{-}1  \\ \phantom{-}1  \\ \phantom{-}1   \end{bmatrix} \begin{bmatrix} -1 & -1 & -1 & \phantom{-}1 & \phantom{-}1 & \phantom{-}1 & \phantom{-}1\\ -1 & \phantom{-}1 & \phantom{-}1 & -1 & -1 & \phantom{-}1 & \phantom{-}1  \\ \phantom{-}1 & -1 & \phantom{-}1 & -1 & \phantom{-}1 & -1 & \phantom{-}1   \\  \phantom{-}1 & \phantom{-}1 & -1 & \phantom{-}1 & -1 & -1 & \phantom{-}1 \\ \phantom{-}1 & \phantom{-}1 & \phantom{-}1 & \phantom{-}1 & \phantom{-}1 & \phantom{-}1 & \phantom{-}1 \\ -1 & \phantom{-}1 & \phantom{-}1 & \phantom{-}1 & \phantom{-}1 & -1 & -1  \\ \phantom{-}1 & -1 & \phantom{-}1 & \phantom{-}1 &  -1 & \phantom{-}1 & -1 \\ \phantom{-}1 & \phantom{-}1 & -1 & -1 & \phantom{-}1 & \phantom{-}1 &  -1  \end{bmatrix}}$,
	
$\mathcal{A}_6 =  
\renewcommand{\baselinestretch}{0.9}
{\footnotesize \setcounter{MaxMatrixCols}{34}
	\begin{bmatrix} \phantom{-}1  \\ \phantom{-}1   \\  \phantom{-}1  \\  \phantom{-}1 \\ \phantom{-}1 \\ -1  \\ \phantom{-}1  \\ \phantom{-}1   \end{bmatrix} \begin{bmatrix} -1 & -1 & -1 & \phantom{-}1 & \phantom{-}1 & \phantom{-}1 & \phantom{-}1\\ -1 & \phantom{-}1 & \phantom{-}1 & -1 & -1 & \phantom{-}1 & \phantom{-}1  \\ \phantom{-}1 & -1 & \phantom{-}1 & -1 & \phantom{-}1 & -1 & \phantom{-}1   \\  \phantom{-}1 & \phantom{-}1 & -1 & \phantom{-}1 & -1 & -1 & \phantom{-}1 \\ -1 & \phantom{-}1 & \phantom{-}1 & \phantom{-}1 & \phantom{-}1 & -1 & -1 \\ \phantom{-}1& \phantom{-}1 & \phantom{-}1 & \phantom{-}1 & \phantom{-}1 & \phantom{-}1 & \phantom{-}1  \\ \phantom{-}1 & \phantom{-}1 & -1 & -1 &  \phantom{-}1 & \phantom{-}1 & -1 \\ \phantom{-}1 & -1 & \phantom{-}1 & \phantom{-}1 & -1 & \phantom{-}1 &  -1  \end{bmatrix}}$,
	
$\mathcal{A}_7 =  
\renewcommand{\baselinestretch}{0.9}
{\footnotesize \setcounter{MaxMatrixCols}{34}
	\begin{bmatrix} \phantom{-}1  \\ \phantom{-}1   \\  \phantom{-}1  \\  \phantom{-}1 \\ \phantom{-}1 \\ \phantom{-}1  \\ -1  \\ \phantom{-}1   \end{bmatrix} \begin{bmatrix} -1 & -1 & -1 & \phantom{-}1 & \phantom{-}1 & \phantom{-}1 & \phantom{-}1\\ -1 & \phantom{-}1 & \phantom{-}1 & -1 & -1 & \phantom{-}1 & \phantom{-}1  \\ \phantom{-}1 & -1 & \phantom{-}1 & -1 & \phantom{-}1 & -1 & \phantom{-}1   \\  \phantom{-}1 & \phantom{-}1 & -1 & \phantom{-}1 & -1 & -1 & \phantom{-}1 \\ \phantom{-}1 & -1 & \phantom{-}1 & \phantom{-}1 & -1 & \phantom{-}1 & -1 \\ \phantom{-}1& \phantom{-}1 & -1 & -1 & \phantom{-}1 & \phantom{-}1 & -1  \\ \phantom{-}1 & \phantom{-}1 & \phantom{-}1 & \phantom{-}1 &  \phantom{-}1 & \phantom{-}1 & \phantom{-}1 \\ -1 & \phantom{-}1 & \phantom{-}1 & \phantom{-}1 & \phantom{-}1 & -1 &  -1  \end{bmatrix}}$,
	
$\mathcal{A}_8 = 
\renewcommand{\baselinestretch}{0.9}
{\footnotesize \setcounter{MaxMatrixCols}{34}
	\begin{bmatrix} \phantom{-}1  \\ \phantom{-}1   \\  \phantom{-}1  \\  \phantom{-}1 \\ \phantom{-}1 \\ \phantom{-}1  \\ \phantom{-}1   \\ -1   \end{bmatrix} \begin{bmatrix} -1 & -1 & -1 & \phantom{-}1 & \phantom{-}1 & \phantom{-}1 & \phantom{-}1\\ -1 & \phantom{-}1 & \phantom{-}1 & -1 & -1 & \phantom{-}1 & \phantom{-}1  \\ \phantom{-}1 & -1 & \phantom{-}1 & -1 & \phantom{-}1 & -1 & \phantom{-}1   \\  \phantom{-}1 & \phantom{-}1 & -1 & \phantom{-}1 & -1 & -1 & \phantom{-}1 \\ \phantom{-}1 & \phantom{-}1  & -1 & -1 & \phantom{-}1  & \phantom{-}1 & -1 \\ \phantom{-}1& -1 & \phantom{-}1  & \phantom{-}1  & -1 & \phantom{-}1 & -1  \\ -1 & \phantom{-}1 & \phantom{-}1 & \phantom{-}1 &  \phantom{-}1 & -1 & -1 \\ \phantom{-}1  & \phantom{-}1 & \phantom{-}1 & \phantom{-}1 & \phantom{-}1 & \phantom{-}1  &  \phantom{-}1   \end{bmatrix}} $\\

\nid equivalently, we can then write them in multiset form as
\[ \mathcal{A}_1 = 
\renewcommand{\baselinestretch}{0.9}
{\footnotesize \setcounter{MaxMatrixCols}{34}
	\begin{array}{lcr}
	E_p= \{1, 2, 3, 4, 5, 6, 7\} & E_n=\{ 0\}  \\
	E_p=\{ 0,  4, 5, 6, 7\} & E_n=\{ 1,2, 3\} \\
	E_p=\{ 0,  2, 3, 6, 7\} & E_n=\{ 1,4, 5\} \\
	E_p=\{ 0,  2, 3, 4, 5\} & E_n=\{ 1,6, 7\} \\
	E_p=\{ 0,  1, 3, 5, 7\} & E_n=\{ 2,4, 6\} \\
	E_p=\{ 0,  1, 3, 4, 6\} & E_n=\{ 2,5, 7\} \\
	E_p=\{ 0,  1, 2, 5, 6\} & E_n=\{ 3,4, 7\} \\
	E_p=\{ 0,  1, 2, 4, 7\} & E_n=\{ 3,5, 6\}  \end{array}}\]
	
	\[\mathcal{A}_2 = 
    \renewcommand{\baselinestretch}{0.9}
    {\footnotesize \setcounter{MaxMatrixCols}{34}	
	\begin{array}{lcr}
	E_p= \{0, 2, 3, 4, 5, 6, 7\} & E_n=\{ 1\}  \\
	E_p=\{ 1,  4, 5, 6, 7\} & E_n=\{ 0,2, 3\} \\
	E_p=\{ 1,  2, 3, 6, 7\} & E_n=\{ 0,4, 5\} \\
	E_p=\{ 1,  2, 3, 4, 5\} & E_n=\{ 0,6, 7\} \\
	E_p=\{ 0,  1, 3, 5, 6\} & E_n=\{ 2,4, 7\} \\
	E_p=\{ 0,  1, 3, 4, 7\} & E_n=\{ 2,5, 6\} \\
	E_p=\{ 0,  1, 2, 5, 7\} & E_n=\{ 3,4, 6\} \\
	E_p=\{ 0,  1, 2, 4, 6\} & E_n=\{ 3,5, 7\}  \end{array}}\]

\[ \mathcal{A}_3 = 
\renewcommand{\baselinestretch}{0.9}
{\footnotesize \setcounter{MaxMatrixCols}{34}
	\begin{array}{lcr}
	E_p= \{0,1,  3, 4, 5, 6, 7\} & E_n=\{ 2\}  \\
	E_p=\{ 2,  4, 5, 6, 7\} & E_n=\{ 0,1, 3\} \\
	E_p=\{ 1,  2, 3, 5, 7\} & E_n=\{ 0,4, 6\} \\
	E_p=\{ 1,  2, 3, 4, 6\} & E_n=\{ 0,5, 7\} \\
	E_p=\{ 0,  2, 3, 5, 6\} & E_n=\{ 1,4, 7\} \\
	E_p=\{ 0,  2, 3, 4, 7\} & E_n=\{ 1,5, 6\} \\
	E_p=\{ 0,  1, 2, 6, 7\} & E_n=\{ 3,4, 5\} \\
	E_p=\{ 0,  1, 2, 4, 5\} & E_n=\{ 3,6, 7\}  \end{array}}\]
	
\[	\mathcal{A}_4 = 
\renewcommand{\baselinestretch}{0.9}
{\footnotesize \setcounter{MaxMatrixCols}{34}
    \begin{array}{lcr}
	E_p= \{0, 1, 2, 4, 5, 6, 7\} & E_n=\{ 3\}  \\
	E_p=\{ 1,  4, 5, 6, 7\} & E_n=\{ 0,1, 2\} \\
	E_p=\{ 1,  2, 3, 6, 7\} & E_n=\{ 0,4, 7\} \\
	E_p=\{ 1,  2, 3, 4, 5\} & E_n=\{ 0,5, 6\} \\
	E_p=\{ 0,  1, 3, 5, 6\} & E_n=\{ 1,4, 6\} \\
	E_p=\{ 0,  1, 3, 4, 7\} & E_n=\{ 1,5, 7\} \\
	E_p=\{ 0,  1, 2, 5, 7\} & E_n=\{ 2,4, 5\} \\
	E_p=\{ 0,  1, 2, 4, 6\} & E_n=\{ 2,6, 7\}  \end{array}}\]

\[ \mathcal{A}_5 = 
\renewcommand{\baselinestretch}{0.9}
{\footnotesize \setcounter{MaxMatrixCols}{34}
	\begin{array}{lcr}
	E_p= \{0,1,  2, 3, 5, 6, 7\} & E_n=\{ 4\}  \\
	E_p=\{ 2,  3, 4, 6, 7\} & E_n=\{ 0,1, 5\} \\
	E_p=\{ 1,  3, 4, 5, 7\} & E_n=\{ 0,2, 6\} \\
	E_p=\{ 1,  2, 4, 5, 6\} & E_n=\{ 0,3, 7\} \\
	E_p=\{ 0,  3, 4, 5, 6\} & E_n=\{ 1,2, 7\} \\
	E_p=\{ 0,  2, 4, 5, 7\} & E_n=\{ 1,3, 6\} \\
	E_p=\{ 0,  1, 4, 6, 7\} & E_n=\{ 2,3, 5\} \\
	E_p=\{ 0,  1, 2, 3, 4\} & E_n=\{ 5,6, 7\}  \end{array}}\]
	
\[	\mathcal{A}_6 =
	\renewcommand{\baselinestretch}{0.9}
{\footnotesize \setcounter{MaxMatrixCols}{34}
	\begin{array}{lcr}
	E_p= \{0, 1, 2, 3, 4, 6, 7\} & E_n=\{ 5\}  \\
	E_p=\{ 2,  3, 5, 6, 7\} & E_n=\{ 0,1, 4\} \\
	E_p=\{ 1,  3, 4, 5, 6\} & E_n=\{ 0,2, 7\} \\
	E_p=\{ 1,  2, 4, 5, 7\} & E_n=\{ 0,3, 6\} \\
	E_p=\{ 0,  3, 4, 5, 7\} & E_n=\{ 1,2, 6\} \\
	E_p=\{ 0,  2, 4, 5, 6\} & E_n=\{ 1,3, 7\} \\
	E_p=\{ 0,  1, 5, 6, 7\} & E_n=\{ 2,3, 4\} \\
	E_p=\{ 0,  1, 2, 3, 5\} & E_n=\{ 4,6, 7\}  \end{array}}\]

\[ \mathcal{A}_7 = 
\renewcommand{\baselinestretch}{0.9}
{\footnotesize \setcounter{MaxMatrixCols}{34}
	\begin{array}{lcr}
	E_p= \{0,1,  2, 3, 4, 5, 7\} & E_n=\{ 6\}  \\
	E_p=\{ 2,  3, 4, 5, 6\} & E_n=\{ 0,1, 7\} \\
	E_p=\{ 1,  3, 5, 6, 7\} & E_n=\{ 0,2, 4\} \\
	E_p=\{ 1,  2, 4, 6, 7\} & E_n=\{ 0,3, 5\} \\
	E_p=\{ 0,  3, 4, 6, 7\} & E_n=\{ 1,2, 5\} \\
	E_p=\{ 0,  2, 5, 6, 7\} & E_n=\{ 1,3, 4\} \\
	E_p=\{ 0,  1, 4, 5, 6\} & E_n=\{ 2,3, 7\} \\
	E_p=\{ 0,  1, 2, 3, 6\} & E_n=\{ 4,5, 7\}  \end{array}}\]
	
\[	\mathcal{A}_8 =
\renewcommand{\baselinestretch}{0.9}
{\footnotesize \setcounter{MaxMatrixCols}{34}
\begin{array}{lcr}
	E_p= \{0, 1, 2, 3, 4, 6, 7\} & E_n=\{ 7\}  \\
	E_p=\{ 2,  3, 4, 5, 7\} & E_n=\{ 0,1, 6\} \\
	E_p=\{ 1,  3, 4, 6, 7\} & E_n=\{ 0,2, 5\} \\
	E_p=\{ 1,  2, 5, 6, 7\} & E_n=\{ 0,3, 4\} \\
	E_p=\{ 0,  3, 5, 6, 7\} & E_n=\{ 1,2, 4\} \\
	E_p=\{ 0,  2, 4, 6, 7\} & E_n=\{ 1,3, 5\} \\
	E_p=\{ 0,  1, 4, 5, 7\} & E_n=\{ 2,3, 6\} \\
	E_p=\{ 0,  1, 2, 3, 7\} & E_n=\{ 4,5, 6\}  \end{array}}\]

We can prove that
\begin{equation}
\label{poly03}
E_i^1 + E_i^t  =  E_j^{t_1} + E_j^{t_2}  \; \; i,j \in \{p,n\},
\end{equation}

\noindent holds only if $t_1 = \{t_1 | \mathbf{a}_{t_1} \in \mathcal{A}_h\} $, $t_2 = \{t_2 | \mathbf{a}_{t_2} \in \mathcal{A}_h\} $, $t_1 \neq t_2$, $\mathbf{h}_t \in \mathbf{H}_8 | t \in \{2,...,8\}$ and $h \in \{1,...,8\}$ and does not hold if $t_1 = \{t_1 | \mathbf{a}_{t_1} \in \mathcal{A}_{h_1}\} $, and $t_2 = \{t_2 | \mathbf{a}_{t_2} \in \mathcal{A}_{h_2}\} $, where $h_1 \neq h_2$. Therefore, we can conclude that we can append two vectors $\mathbf{v}_1 \in A_j$ and $\mathbf{v}_2 \in \mathcal{A}_i$ to $\mathbf{H}_m$, which do not hit any forbidden multisets by choosing any two vectors from different groups $1\leq i \neq j \leq 8$.

Here is how we divide $28 $ of $|E_n| = 2$ into $\mathcal{D}_i$'s,  $1\leq i \leq 7$ groups.    \\

$\mathcal{D}_1 = 
\renewcommand{\baselinestretch}{0.9}
{\footnotesize \setcounter{MaxMatrixCols}{34}
	\begin{bmatrix} -1 & \phantom{-}1 & \phantom{-}1 & \phantom{-}1 \\ -1 & \phantom{-}1 & \phantom{-}1 & \phantom{-}1  \\ \phantom{-}1 & -1 & \phantom{-}1 & \phantom{-}1  \\  \phantom{-}1 & -1 & \phantom{-}1 & \phantom{-}1 \\ \phantom{-}1 & \phantom{-}1 & -1 & \phantom{-}1 \\ \phantom{-}1 & \phantom{-}1 & -1 & \phantom{-}1 \\ \phantom{-}1 & \phantom{-}1 & \phantom{-}1 & -1 \\ \phantom{-}1 & \phantom{-}1 & \phantom{-}1 & -1  \end{bmatrix}},
\mathcal{D}_2 = 
\renewcommand{\baselinestretch}{0.9}
{\footnotesize \setcounter{MaxMatrixCols}{34}
	\begin{bmatrix} -1 & \phantom{-}1 & \phantom{-}1 & \phantom{-}1 \\ \phantom{-}1 & -1 & \phantom{-}1 & \phantom{-}1  \\ -1 & \phantom{-}1 & \phantom{-}1 & \phantom{-}1  \\  \phantom{-}1 & -1 & \phantom{-}1 & \phantom{-}1 \\ \phantom{-}1 & \phantom{-}1 & -1 & \phantom{-}1 \\ \phantom{-}1 & \phantom{-}1 & \phantom{-}1 & -1 \\ \phantom{-}1 & \phantom{-}1 & -1 & \phantom{-}1 \\ \phantom{-}1 & \phantom{-}1 & \phantom{-}1 & -1 \\ \end{bmatrix}},\\
\mathcal{D}_3 = 
\renewcommand{\baselinestretch}{0.9}
{\footnotesize \setcounter{MaxMatrixCols}{34}
	\begin{bmatrix} -1 & \phantom{-}1 & \phantom{-}1 & \phantom{-}1 \\ \phantom{-}1 & -1 & \phantom{-}1 & \phantom{-}1  \\ \phantom{-}1 & -1 & \phantom{-}1 & \phantom{-}1  \\  -1 & \phantom{-}1 & \phantom{-}1 & \phantom{-}1 \\ \phantom{-}1 & \phantom{-}1 & -1 & \phantom{-}1 \\ \phantom{-}1 & \phantom{-}1 & \phantom{-}1 & -1 \\ \phantom{-}1 & \phantom{-}1 & \phantom{-}1 & -1 \\ \phantom{-}1 & \phantom{-}1 & -1 & \phantom{-}1 \\ \end{bmatrix}} $,
$\mathcal{D}_4 = 
\renewcommand{\baselinestretch}{0.9}
{\footnotesize \setcounter{MaxMatrixCols}{34}
	\begin{bmatrix} -1 & \phantom{-}1 & \phantom{-}1 & \phantom{-}1 \\ \phantom{-}1 & -1 & \phantom{-}1 & \phantom{-}1  \\ \phantom{-}1 & \phantom{-}1 & -1 & \phantom{-}1  \\  \phantom{-}1 & \phantom{-}1 & \phantom{-}1 & -1 \\ -1 & \phantom{-}1 & \phantom{-}1 & \phantom{-}1 \\ \phantom{-}1 & -1 & \phantom{-}1 & \phantom{-}1 \\ \phantom{-}1 & \phantom{-}1 & -1 & \phantom{-}1 \\ \phantom{-}1 & \phantom{-}1 & \phantom{-}1 & -1  \end{bmatrix}},\\
\mathcal{D}_5 = 
\renewcommand{\baselinestretch}{0.9}
{\footnotesize \setcounter{MaxMatrixCols}{34}
	\begin{bmatrix} -1 & \phantom{-}1 & \phantom{-}1 & \phantom{-}1 \\ \phantom{-}1 & -1 & \phantom{-}1 & \phantom{-}1  \\ \phantom{-}1 & \phantom{-}1 & -1 & \phantom{-}1  \\  \phantom{-}1 & \phantom{-}1 & \phantom{-}1 & -1 \\ \phantom{-}1 & -1 & \phantom{-}1 & \phantom{-}1 \\ -1 & \phantom{-}1 & \phantom{-}1 & \phantom{-}1 \\ \phantom{-}1 & \phantom{-}1 & \phantom{-}1 & -1 \\ \phantom{-}1 & \phantom{-}1 & -1 & \phantom{-}1 \\ \end{bmatrix}},
\mathcal{D}_6 = 
\renewcommand{\baselinestretch}{0.9}
{\footnotesize \setcounter{MaxMatrixCols}{34}
	\begin{bmatrix} -1 & \phantom{-}1 & \phantom{-}1 & \phantom{-}1 \\ \phantom{-}1 & -1 & \phantom{-}1 & \phantom{-}1  \\ \phantom{-}1 & \phantom{-}1 & -1 & \phantom{-}1  \\  \phantom{-}1 & \phantom{-}1 & \phantom{-}1 & -1 \\ \phantom{-}1 & \phantom{-}1 & -1 & \phantom{-}1 \\ \phantom{-}1 & \phantom{-}1 & \phantom{-}1 & -1 \\ -1 & \phantom{-}1 & \phantom{-}1 & \phantom{-}1 \\ \phantom{-}1 & -1 & \phantom{-}1 & \phantom{-}1 \\ \end{bmatrix}} $,
$\mathcal{D}_7 =
\renewcommand{\baselinestretch}{0.9}
{\footnotesize \setcounter{MaxMatrixCols}{34}
	\begin{bmatrix} -1 & \phantom{-}1 & \phantom{-}1 & \phantom{-}1 \\ \phantom{-}1 & -1 & \phantom{-}1 & \phantom{-}1  \\ \phantom{-}1 & \phantom{-}1 & -1 & \phantom{-}1  \\  \phantom{-}1 & \phantom{-}1 & \phantom{-}1 & -1 \\ \phantom{-}1 & \phantom{-}1 & \phantom{-}1 & -1 \\ \phantom{-}1 & \phantom{-}1 & -1 & \phantom{-}1 \\ \phantom{-}1 & -1 & \phantom{-}1 & \phantom{-}1 \\ -1 & \phantom{-}1 & \phantom{-}1 & \phantom{-}1   \\ \end{bmatrix}} $\\

\nid equivalently, we can then write in multiset form as

\[ \mathcal{D}_1 = 
\renewcommand{\baselinestretch}{0.9}
{\footnotesize \setcounter{MaxMatrixCols}{34}
	\begin{array}{lcr}
	E_p= \{ 2, 3, 4, 5, 6, 7\} & E_n=\{ 0, 1\}  \\
	E_p=\{ 0, 1, 4, 5, 6, 7\} & E_n=\{ 2, 3\} \\
	E_p=\{ 0, 1, 2, 3, 6, 7\} & E_n=\{ 4, 5\} \\
	E_p=\{ 0, 1, 2, 3, 4, 5\} & E_n=\{ 6, 7\} \end{array}}\]
	
\[	\mathcal{D}_2 =
\renewcommand{\baselinestretch}{0.9}
{\footnotesize \setcounter{MaxMatrixCols}{34}
\begin{array}{lcr}
	E_p= \{ 1, 3, 4, 5, 6, 7\} & E_n=\{ 0, 2\}  \\
	E_p=\{ 0, 2, 4, 5, 6, 7\} & E_n=\{ 1, 3\} \\
	E_p=\{ 0, 1, 2, 3, 5, 7\} & E_n=\{ 4, 6\} \\
	E_p=\{ 0, 1, 2, 3, 4, 6\} & E_n=\{ 5, 7\} \end{array}}\]
\[ \mathcal{D}_3 = 
\renewcommand{\baselinestretch}{0.9}
{\footnotesize \setcounter{MaxMatrixCols}{34}
	\begin{array}{lcr}
	E_p= \{ 1, 2, 4, 5, 6, 7\} & E_n=\{ 0,3 \}  \\
	E_p=\{ 0, 3, 4, 5, 6, 7\} & E_n=\{ 1, 2\} \\
	E_p=\{ 0, 1, 2, 3, 5, 6\} & E_n=\{ 4, 7\} \\
	E_p=\{ 0, 1, 2, 3, 4, 7\} & E_n=\{ 5, 6\} \end{array}}\]
	
\[	\mathcal{D}_4 = 
\renewcommand{\baselinestretch}{0.9}
{\footnotesize \setcounter{MaxMatrixCols}{34}
    \begin{array}{lcr}
	E_p= \{ 1, 2, 3, 5, 6, 7\} & E_n=\{ 0, 4\}  \\
	E_p=\{ 0, 2, 3, 4, 6, 7\} & E_n=\{ 1, 5\} \\
	E_p=\{ 0, 1, 3, 4, 5, 7\} & E_n=\{ 2, 6\} \\
	E_p=\{ 0, 1, 2, 4, 5, 6\} & E_n=\{ 3, 7\} \end{array}}\]
\[ \mathcal{D}_5 = 
\renewcommand{\baselinestretch}{0.9}
{\footnotesize \setcounter{MaxMatrixCols}{34}
	\begin{array}{lcr}
	E_p= \{ 1, 2, 3, 4, 6, 7\} & E_n=\{ 0,5 \}  \\
	E_p=\{ 0, 2, 3, 5, 6, 7\} & E_n=\{ 1, 4\} \\
	E_p=\{ 0, 1, 3, 4, 5, 6\} & E_n=\{ 2, 7\} \\
	E_p=\{ 0, 1, 2, 4, 5, 7\} & E_n=\{ 3, 6\} \end{array}}\]
	
\[	\mathcal{D}_6 = 
	\renewcommand{\baselinestretch}{0.9}
{\footnotesize \setcounter{MaxMatrixCols}{34}
	\begin{array}{lcr}
	E_p= \{ 1, 2, 3, 4, 5, 7\} & E_n=\{ 0, 6\}  \\
	E_p=\{ 0, 2, 3, 4, 5, 6\} & E_n=\{ 1, 7\} \\
	E_p=\{ 0, 1, 3, 5, ,6, 7\} & E_n=\{ 2, 4\} \\
	E_p=\{ 0, 1, 2, 4, 6, 7\} & E_n=\{ 3, 5\} \end{array}}\]
\[ \mathcal{D}_7 = 
\renewcommand{\baselinestretch}{0.9}
{\footnotesize \setcounter{MaxMatrixCols}{34}
	\begin{array}{lcr}
	E_p= \{ 1, 2, 3, 4, 5, 6\} & E_n=\{ 0,7 \}  \\
	E_p=\{ 0, 2, 3, 4, 5, 7\} & E_n=\{ 1, 6\} \\
	E_p=\{ 0, 1, 3, 4, 6, 7\} & E_n=\{ 2, 5\} \\
	E_p=\{ 0, 1, 2, 5, 6, 7\} & E_n=\{ 3, 4\} \end{array}}\]

It can be proved that
\begin{equation}
\label{poly04}
E_i^1 + E_i^t  =  E_j^{t_1} + E_j^{t_2}  \; \; i,j \in \{p,n\},
\end{equation}
and
\begin{equation}
\label{poly05}
E_i^{t'} + E_i^{t''}  =  E_j^{t_1} + E_j^{t_2}  \; \; i,j \in \{p,n\},
\end{equation}
holds only if $t_1 = \{t_1 | \mathbf{d}_{t_1} \in \mathcal{D}_h\} $, $t_2 = \{t_2 | \mathbf{d}_{t_2} \in \mathcal{D}_h\} $, $t_1 \neq t_2$, where $\mathbf{h}_{t}, \mathbf{h}_{t'}, \mathbf{h}_{t''},  \in \mathbf{H}_8 | t, t', t'' \in \{2,...,8\}$ and $h \in \{1,...,7\}$ and does not hold if if $t_1 = \{t_1 | \mathbf{d}_{t_1} \in \mathcal{D}_{h_1}\} $, and $t_2 = \{t_2 | \mathbf{d}_{t_2} \in \mathcal{D}_{h_2}\} $, where $h_1 \neq h_2$. Therefore, we can conclude that we can append two vectors $\mathbf{v}_1 \in \mathcal{D}_j$ and $\mathbf{v}_2 \in \mathcal{D}_i$ to $\mathbf{H}_m$, which do not hit any forbidden multisets by choosing any two vectors from different groups $1\leq i \neq j \leq 7$.

Dividing $56$ of $|E_n| = 3$  into $\mathcal{F}_i$, $1\leq i \leq 8$ groups is equivalent exactly $\mathcal{A}_i$'s. Similarly, we can prove that
\begin{equation}
\label{poly06}
E_i^{t'} + E_i^{t''}  =  E_j^{t_1} + E_j^{t_2}  \; \; i,j \in \{p,n\},
\end{equation}
holds only if $t_1 = \{t_1 | \mathbf{f}_{t_1} \in \mathcal{F}_h\} $, $t_2 = \{t_2 | \mathbf{f}_{t_2} \in \mathcal{F}_h\} $, $t_1 \neq t_2$, where $\mathbf{h}_{t'}, \mathbf{h}_{t''},  \in \mathcal{H}_8 | t', t'' \in \{2,...,8\}$ and $h \in \{1,...,8\}$ and does not hold if $t_1 = \{t_1 | \mathbf{f}_{t_1} \in \mathcal{F}_{h_1}\} $, and $t_2 = \{t_2 | \mathbf{f}_{t_2} \in \mathcal{F}_{h_2}\} $, where $h_1 \neq h_2$. Therefore, we can conclude that we can append two vectors $\mathbf{v}_1 \in \mathcal{F}_j$ and $\mathbf{v}_2 \in \mathcal{F}_i$ to $\mathbf{H}_m$, which do not hit any forbidden multisets by choosing any two vectors from different groups $1\leq i \neq j \leq 8$.

Here is how we divide $70/2 - 7 =  28 $ of $|E_n| = 4$ into $\mathcal{G}_i$'s,  $1\leq i \leq 7$ groups.  \\  

$\mathcal{G}_1 =
\renewcommand{\baselinestretch}{0.9}
{\footnotesize \setcounter{MaxMatrixCols}{34}
	\begin{bmatrix}  \phantom{-}1 & \phantom{-}1 & \phantom{-}1 & \phantom{-}1 \\ -1 & -1 & \phantom{-}1 & \phantom{-}1  \\ -1 &  \phantom{-}1 & -1 & \phantom{-}1  \\  -1 &  \phantom{-}1 & \phantom{-}1 & -1 \\ \phantom{-}1 & -1 & -1 & \phantom{-}1 \\ \phantom{-}1 & -1 &  \phantom{-}1 & -1 \\ \phantom{-}1 & \phantom{-}1 & -1 & -1 \\ -1 & -1 & -1 & -1  \end{bmatrix}},
\mathcal{G}_2 = 
\renewcommand{\baselinestretch}{0.9}
{\footnotesize \setcounter{MaxMatrixCols}{34}
	\begin{bmatrix}  \phantom{-}1 & \phantom{-}1 & \phantom{-}1 & \phantom{-}1 \\ -1 & -1 & \phantom{-}1 & \phantom{-}1  \\ -1 & \phantom{-}1 & -1 & \phantom{-}1  \\  -1 &  \phantom{-}1 & \phantom{-}1 & -1 \\ \phantom{-}1 & -1 &  \phantom{-}1 & -1 \\ \phantom{-}1 & -1 & -1 &  \phantom{-}1 \\ -1 & -1 & -1 & -1 \\ \phantom{-}1 & \phantom{-}1 & -1 & -1 \\ \end{bmatrix}},\\
\mathcal{G}_3 = 
\renewcommand{\baselinestretch}{0.9}
{\footnotesize \setcounter{MaxMatrixCols}{34}
	\begin{bmatrix}  \phantom{-}1 & \phantom{-}1 & \phantom{-}1 & \phantom{-}1 \\ -1 & -1 & \phantom{-}1 & \phantom{-}1  \\ -1 &  \phantom{-}1 & -1 & \phantom{-}1  \\  -1 & \phantom{-}1 & \phantom{-}1 & -1 \\ \phantom{-}1 & \phantom{-}1 & -1 & -1 \\ -1 & -1 & -1 &  -1 \\ \phantom{-}1 & -1 & -1 &  \phantom{-}1 \\ \phantom{-}1 & -1 &  \phantom{-}1 & -1 \\ \end{bmatrix}} $
$\mathcal{G}_4 =
\renewcommand{\baselinestretch}{0.9}
{\footnotesize \setcounter{MaxMatrixCols}{34}
	\begin{bmatrix}  \phantom{-}1 & \phantom{-}1 & \phantom{-}1 & \phantom{-}1 \\ -1 & -1 & \phantom{-}1 & \phantom{-}1  \\ -1 & \phantom{-}1 & -1 & \phantom{-}1  \\  -1 & \phantom{-}1 & \phantom{-}1 & -1 \\ -1 & -1 & -1 & -1 \\ \phantom{-}1 &  \phantom{-}1 & -1 & -1 \\ \phantom{-}1 & -1 &  \phantom{-}1 & -1 \\ \phantom{-}1 & -1 & -1 &  \phantom{-}1  \end{bmatrix}},\\
\mathcal{G}_5 = 
\renewcommand{\baselinestretch}{0.9}
{\footnotesize \setcounter{MaxMatrixCols}{34}
	\begin{bmatrix}  \phantom{-}1 & \phantom{-}1 & \phantom{-}1 & \phantom{-}1 \\ -1 & -1 & \phantom{-}1 & \phantom{-}1  \\ \phantom{-}1 & \phantom{-}1 & -1 & -1  \\  -1 & -1 & -1 & -1 \\ -1 &  \phantom{-}1 & -1 & \phantom{-}1 \\ -1 & \phantom{-}1 & \phantom{-}1 & -1 \\ \phantom{-}1 & -1 & -1 &  \phantom{-}1 \\ \phantom{-}1 & -1 &  \phantom{-}1 & -1 \\ \end{bmatrix}},
\mathcal{G}_6 = 
\renewcommand{\baselinestretch}{0.9}
{\footnotesize \setcounter{MaxMatrixCols}{34}
	\begin{bmatrix}  \phantom{-}1 & \phantom{-}1 & \phantom{-}1 & \phantom{-}1 \\ -1 & -1 & \phantom{-}1 & \phantom{-}1  \\ -1 & -1 & -1 & -1  \\  \phantom{-}1 & \phantom{-}1 & -1 & -1 \\ -1 & \phantom{-}1 &  -1 &  \phantom{-}1 \\ -1 &  \phantom{-}1 & \phantom{-}1 & -1 \\  \phantom{-}1 & -1 & \phantom{-}1 & -1 \\ \phantom{-}1 & -1 & -1 & \phantom{-}1 \\ \end{bmatrix}}, $\\
$\mathcal{G}_7 =
\renewcommand{\baselinestretch}{0.9}
{\footnotesize \setcounter{MaxMatrixCols}{34}
	\begin{bmatrix}  \phantom{-}1 & \phantom{-}1 & \phantom{-}1 & \phantom{-}1 \\ -1 & -1 & -1 & -1  \\ -1 & -1 &  \phantom{-}1 & \phantom{-}1  \\  \phantom{-}1 & \phantom{-}1 & -1 & -1 \\ -1 & \phantom{-}1 & -1 &  \phantom{-}1 \\ \phantom{-}1 & -1 &  \phantom{-}1 & -1 \\ -1 &  \phantom{-}1 & \phantom{-}1 & -1 \\  \phantom{-}1 & -1 & -1 & \phantom{-}1   \\ \end{bmatrix}} $

\nid equivalently, we can then write in multiset form as

\[ \mathcal{G}_1 = 
\renewcommand{\baselinestretch}{0.9}
{\footnotesize \setcounter{MaxMatrixCols}{34}
	\begin{array}{lcr}
	E_p= \{ 0, 4, 5, 6 \} & E_n=\{ 1, 2, 3, 7\}  \\
	E_p= \{ 0, 2, 3, 6 \} & E_n=\{ 1, 4, 5, 7\} \\
	E_p= \{ 0, 1, 3, 5 \} & E_n=\{ 2, 4, 6, 7\} \\
	E_p= \{ 0, 1, 2, 4 \} & E_n=\{ 3, 5, 6, 7\}  \end{array}}\]
	
\[	\mathcal{G}_2 = 
\renewcommand{\baselinestretch}{0.9}
{\footnotesize \setcounter{MaxMatrixCols}{34}
\begin{array}{lcr}
	E_p= \{ 0, 4, 5, 7 \} & E_n=\{ 1, 2, 3, 6\}  \\
	E_p= \{ 0, 2, 3, 7 \} & E_n=\{ 1, 4, 5, 6\} \\
	E_p= \{ 0, 1, 3, 4 \} & E_n=\{ 2, 5, 6, 7\} \\
	E_p= \{ 0, 1, 2, 5 \} & E_n=\{ 3, 4, 6, 7\} \end{array}}\]
\[ \mathcal{G}_3 =
\renewcommand{\baselinestretch}{0.9}
{\footnotesize \setcounter{MaxMatrixCols}{34}
	\begin{array}{lcr}
	E_p= \{ 0, 4, 6, 7 \} & E_n=\{ 1, 2, 3, 5\}  \\
	E_p= \{ 0, 2, 3, 4 \} & E_n=\{ 1, 5, 6, 7\} \\
	E_p= \{ 0, 1, 3, 7 \} & E_n=\{ 2, 4, 5, 6\} \\
	E_p= \{ 0, 1, 5, 6 \} & E_n=\{ 2, 3, 4, 7\}  \end{array}}\]
	
\[	\mathcal{G}_4 = 
\renewcommand{\baselinestretch}{0.9}
{\footnotesize \setcounter{MaxMatrixCols}{34}
    \begin{array}{lcr}
	E_p= \{ 0, 5, 6, 7 \} & E_n=\{ 1, 2, 3, 4\}  \\
	E_p= \{ 0, 2, 3, 5 \} & E_n=\{ 1, 4, 6, 7\} \\
	E_p= \{ 0, 1, 3, 6 \} & E_n=\{ 2, 4, 5, 7\} \\
	E_p= \{ 0, 1, 2, 7 \} & E_n=\{ 3, 4, 5, 6\} \end{array}}\]
\[ \mathcal{G}_5 =
\renewcommand{\baselinestretch}{0.9}
{\footnotesize \setcounter{MaxMatrixCols}{34}
	\begin{array}{lcr}
	E_p= \{ 0, 2, 6, 7 \} & E_n=\{ 1, 3, 4, 5\}  \\
	E_p= \{ 0, 2, 4, 5 \} & E_n=\{ 1, 3, 6, 7\} \\
	E_p= \{ 0, 1, 5, 7 \} & E_n=\{ 2, 3, 4, 6\} \\
	E_p= \{ 0, 1, 4, 6 \} & E_n=\{ 2, 3, 5, 7\}  \end{array}}\]
	
\[	\mathcal{G}_6 = 
\renewcommand{\baselinestretch}{0.9}
{\footnotesize \setcounter{MaxMatrixCols}{34}
\begin{array}{lcr}
	E_p= \{ 0, 3, 6, 7 \} & E_n=\{ 1, 2, 4, 5\}  \\
	E_p= \{ 0, 3, 4, 5 \} & E_n=\{ 1, 2, 6, 7\} \\
	E_p= \{ 0, 1, 5, 6 \} & E_n=\{ 2, 3, 4, 7\} \\
	E_p= \{ 0, 1, 4, 7 \} & E_n=\{ 2, 3, 5, 6\} \end{array}}\]
\[ \mathcal{G}_7 =
\renewcommand{\baselinestretch}{0.9}
{\footnotesize \setcounter{MaxMatrixCols}{34}
	\begin{array}{lcr}
	E_p= \{ 0, 3, 5, 7 \} & E_n=\{ 1, 2, 4, 6\}  \\
	E_p= \{ 0, 3, 4, 6 \} & E_n=\{ 1, 2, 5, 7\} \\
	E_p= \{ 0, 2, 5, 6 \} & E_n=\{ 1, 3, 4, 7\} \\
	E_p= \{ 0, 2, 4, 7 \} & E_n=\{ 1, 3, 5, 6\}  \end{array}} \]

It can be proved that
\begin{equation}
\label{poly07}
E_i^{t'} + E_i^{t''}  =  E_j^{t_1} + E_j^{t_2}  \; \; i,j \in \{p,n\},
\end{equation}
holds only if $t_1 = \{t_1 | \mathbf{g}_{t_1} \in \mathcal{G}_h\} $, $t_2 = \{t_2 | \mathbf{g}_{t_2} \in \mathcal{G}_h\} $, $t_1 \neq t_2$, where $\mathbf{h}_{t'}, \mathbf{h}_{t''},  \in \mathbf{H}_8 | t', t'' \in \{2,...,8\}$ and $h \in \{1,...,7\}$ and does not hold if $t_1 = \{t_1 | \mathbf{g}_{t_1} \in \mathcal{G}_{h_1}\} $, and $t_2 = \{t_2 | \mathbf{g}_{t_2} \in \mathcal{G}_{h_2}\} $, where $h_1 \neq h_2$. Therefore, we can conclude that we can append two vectors $\mathbf{v}_1 \in \mathcal{G}_j$ and $\mathbf{v}_2 \in \mathcal{G}_i$ to $\mathbf{H}_m$, which do not hit any forbidden multisets by choosing any two vectors from different groups $1\leq i \neq j \leq 7$.
Note that $\mathcal{A}_i$'s group can be constructed from $\mathcal{A}_i$'s and $\mathcal{D}_i$'s such as $\mathcal{A}_1 = \{ \mathcal{A}_2+\mathcal{D}_1, \mathcal{A}_3+\mathcal{D}_2, \mathcal{A}_4+\mathcal{D}_3, \mathcal{A}_5+\mathcal{D}_4, \mathcal{A}_6+\mathcal{D}_5, \mathcal{A}_7+\mathcal{D}_6, \mathcal{A}_8+\mathcal{D}_7  \}$, $\mathcal{A}_2 = \{ \mathcal{A}_1+\mathcal{D}_1, \mathcal{A}_3+\mathcal{D}_3, \mathcal{A}_4+\mathcal{D}_2, \mathcal{A}_5+\mathcal{D}_5, \mathcal{A}_6+\mathcal{D}_4, \mathcal{A}_7+\mathcal{D}_7, \mathcal{A}_8+\mathcal{D}_6  \}$, $\mathcal{A}_3 = \{ \mathcal{A}_1+\mathcal{D}_2, \mathcal{A}_2+\mathcal{D}_3, \mathcal{A}_4+\mathcal{D}_1, \mathcal{A}_5+\mathcal{D}_6, \mathcal{A}_6+\mathcal{D}_7, \mathcal{A}_7+\mathcal{D}_4, \mathcal{A}_8+\mathcal{D}_5, \}$,  $\mathcal{A}_4 = \{ \mathcal{A}_1+\mathcal{D}_3, \mathcal{A}_2+\mathcal{D}_2, \mathcal{A}_3+\mathcal{D}_1, \mathcal{A}_5+\mathcal{D}_7, \mathcal{A}_6+\mathcal{D}_6, \mathcal{A}_7+\mathcal{D}_5, \mathcal{A}_8+\mathcal{D}_4  \}$, $\mathcal{A}_5 = \{ \mathcal{A}_1+\mathcal{D}_4, \mathcal{A}_2+\mathcal{D}_5, \mathcal{A}_3+\mathcal{D}_6, \mathcal{A}_4+\mathcal{D}_7, \mathcal{A}_6+\mathcal{D}_1, \mathcal{A}_7+\mathcal{D}_2, \mathcal{A}_8+\mathcal{D}_3  \}$, $\mathcal{A}_6 = \{ \mathcal{A}_1+\mathcal{D}_5, \mathcal{A}_2+\mathcal{D}_4, \mathcal{A}_3+\mathcal{D}_7, \mathcal{A}_4+\mathcal{D}_6, \mathcal{A}_5+\mathcal{D}_1, \mathcal{A}_7+\mathcal{D}_3, \mathcal{A}_8+\mathcal{D}_2  \}$, $\mathcal{A}_7 = \{ \mathcal{A}_1+\mathcal{D}_6, \mathcal{A}_2+\mathcal{D}_7, \mathcal{A}_3+\mathcal{D}_4, \mathcal{A}_4+\mathcal{D}_5, \mathcal{A}_5+\mathcal{D}_2, \mathcal{A}_6+\mathcal{D}_3, \mathcal{A}_8+\mathcal{D}_1  \}$, $\mathcal{A}_8 = \{ \mathcal{A}_1+\mathcal{D}_7, \mathcal{A}_2+\mathcal{D}_6, \mathcal{A}_3+\mathcal{D}_5, \mathcal{A}_4+\mathcal{D}_4, \mathcal{A}_5+\mathcal{D}_3, \mathcal{A}_6+\mathcal{D}_2, \mathcal{A}_7+\mathcal{D}_1  \}$. Also the $\mathcal{G}_i$'s group can be constructed from different combinations of  $\mathcal{A}_i$'s or $\mathcal{D}_i$'s groups. For example, $\mathcal{G}_1 = \{ \mathcal{A}_1+\mathcal{A}_8, \mathcal{A}_2 +\mathcal{A}_7, \mathcal{A}_3 +\mathcal{A}_6, \mathcal{A}_4 +\mathcal{A}_5 \}$, $\mathcal{G}_2 = \{ \mathcal{A}_1+\mathcal{A}_7, \mathcal{A}_2 +\mathcal{A}_8, \mathcal{A}_3 +\mathcal{A}_5, \mathcal{A}_4 +\mathcal{A}_6 \}$, $\mathcal{G}_3 = \{ \mathcal{A}_1+\mathcal{A}_6, \mathcal{A}_2 +\mathcal{A}_5, \mathcal{A}_3 +\mathcal{A}_8, \mathcal{A}_4 +\mathcal{A}_7 \}$, $\mathcal{G}_4 = \{ \mathcal{A}_1+\mathcal{A}_5, \mathcal{A}_2 +\mathcal{A}_6, \mathcal{A}_3 +\mathcal{A}_7, \mathcal{A}_4 +\mathcal{A}_8 \}$, $\mathcal{G}_5 = \{ \mathcal{A}_1+\mathcal{A}_4, \mathcal{A}_2 +\mathcal{A}_3, \mathcal{A}_5 +\mathcal{A}_8, \mathcal{A}_6 +\mathcal{A}_7 \}$, $\mathcal{G}_6 = \{ \mathcal{A}_1+\mathcal{A}_3, \mathcal{A}_2 +\mathcal{A}_4, \mathcal{A}_5 +\mathcal{A}_7, \mathcal{A}_6 +\mathcal{A}_8 \}$ , $\mathcal{G}_7 = \{ \mathcal{A}_1+\mathcal{A}_2, \mathcal{A}_3 +\mathcal{A}_4, \mathcal{A}_5 +\mathcal{A}_6, \mathcal{A}_7 +\mathcal{A}_8 \}$ and $\mathcal{G}_1 = \{ \mathcal{D}_1+\mathcal{D}_6, \mathcal{D}_2 +\mathcal{D}_5, \mathcal{D}_3 +\mathcal{D}_4 \}$, $\mathcal{G}_2 = \{ \mathcal{D}_1+\mathcal{D}_7, \mathcal{D}_2 +\mathcal{D}_4, \mathcal{D}_3 +\mathcal{D}_5 \}$, $\mathcal{G}_3 = \{ \mathcal{D}_1+\mathcal{D}_4, \mathcal{D}_2 +\mathcal{D}_7, \mathcal{D}_3 +\mathcal{D}_6 \}$, $\mathcal{G}_4 = \{ \mathcal{D}_1+\mathcal{D}_5, \mathcal{D}_2 +\mathcal{D}_6, \mathcal{D}_3 +\mathcal{D}_7 \}$, $\mathcal{G}_5 = \{ \mathcal{D}_1+\mathcal{D}_2, \mathcal{D}_4 +\mathcal{D}_7, \mathcal{D}_5 +\mathcal{D}_6 \}$, $\mathcal{G}_6 = \{ \mathcal{D}_1+\mathcal{D}_3, \mathcal{D}_4 +\mathcal{D}_6, \mathcal{D}_5 +\mathcal{D}_7 \}$ and $\mathcal{G}_7 = \{ \mathcal{D}_2+\mathcal{D}_3, \mathcal{D}_4 +\mathcal{D}_5, \mathcal{D}_6 +\mathcal{D}_7 \}$.

Now, let us count how many vector sets $\mathbf{V} = [\mathbf{v}_{j_1} \mathbf{v}_{j_2} ]$ hit forbidden multisets. We can easily count after classifying $\mathcal{B}_8^+$ into groups as discussed above. Hence, the total number of vector sets that hit forbidden matrices are
\begin{equation}\nonumber
\label{total2}
\!\binom{8}{1}\!\times 7 +\!\!  \binom{4}{2}\!\times 7 +  \!\binom{7}{2}\!\times 8 + \!\!\binom{4}{2}\!\times 7\! = \!56 \!+ \!42 + \!168 + \!42 = 308.
\end{equation}

There are no other different combinations of two vectors in $\mathcal{B}_8^+$ that can hit the forbidden multisets. We find that our computed number of comination that hits the forbidden multiset $308 < 7140$ is less than the total number of two vectors combination sets, then we can claim that the maximum number of vectors that can be added to $\mathbf{H}_8$ is $(K_{\rm{max}}^a -L) \geq 2$. This method of classifying $\mathcal{B}_8^+$ into groups not only helps us to prove the maximum number of vectors but also on how to construct such vector sets that posses unique decodability property (\ref{null02}).

Similar computation can be carried out for the cases $k=3, 4, 5$ and still claim that the number of computation is less than the total number of $k=3, 4, 5$ vectors combination sets. As an example, we present two of such combinations below for the case of $k=5$,

\[\mathbf{V}_1
\!\!= \!\! \!
\renewcommand{\baselinestretch}{0.9}
{\footnotesize \setcounter{MaxMatrixCols}{34}
	\begin{bmatrix} -1 & \phantom{-}1 & \phantom{-}1 & -1 & \phantom{-}1 \\ \phantom{-}1 & -1 & \phantom{-}1 & \phantom{-}1 & -1 \\ \phantom{-}1 & \phantom{-}1  & -1 & \phantom{-}1 & -1  \\  \phantom{-}1 & \phantom{-}1 & \phantom{-}1 & \phantom{-}1  & -1 \\ \phantom{-}1 & \phantom{-}1 & \phantom{-}1 & -1 & -1 \\ \phantom{-}1 & \phantom{-}1 & \phantom{-}1 & \phantom{-}1 & \phantom{-}1\\ \phantom{-}1 & \phantom{-}1 & \phantom{-}1 & \phantom{-}1 & \phantom{-}1  \\ \phantom{-}1 & \phantom{-}1 & \phantom{-}1 & \phantom{-}1 & \phantom{-}1  \end{bmatrix}},\\
\!\!\mathbf{V}_2 \! 
\!= \!\! \!
\renewcommand{\baselinestretch}{0.9}
{\footnotesize \setcounter{MaxMatrixCols}{34}
	\begin{bmatrix} -1 & \phantom{-}1 & \phantom{-}1 & \phantom{-}1 & -1 \\ \phantom{-}1 & -1 & \phantom{-}1 & \phantom{-}1 & -1 \\ \phantom{-}1 & \phantom{-}1  & -1 & -1 & \phantom{-}1  \\  \phantom{-}1 & \phantom{-}1 & \phantom{-}1 & \phantom{-}1  & -1 \\ \phantom{-}1 & \phantom{-}1 & \phantom{-}1 & -1 & -1 \\ \phantom{-}1 & \phantom{-}1 & \phantom{-}1 & \phantom{-}1 & \phantom{-}1\\ \phantom{-}1 & \phantom{-}1 & \phantom{-}1 & \phantom{-}1 & \phantom{-}1  \\ \phantom{-}1 & \phantom{-}1 & \phantom{-}1 & \phantom{-}1 & \phantom{-}1  \end{bmatrix}}\!, \]
\[ \mathbf{V}_1 =
\renewcommand{\baselinestretch}{0.9}
{\footnotesize \setcounter{MaxMatrixCols}{34}
	\begin{array}{lcr}
	E_p= \{ 1, 2, 3, 4, 5, 6, 7\} & E_n=\{ 0\}  \\
	E_p=\{ 0, 2, 3, 4, 5, 6, 7\} & E_n=\{ 1\} \\
	E_p=\{ 0, 1, 3, 4, 5, 6, 7\} & E_n=\{ 2\} \\
	E_p=\{ 1, 2, 3, 4, 5, 6,7\} & E_n=\{ 0, 4\} \\
	E_p=\{ 0,  5, 6,7\} & E_n=\{ 1,2,3,4\} \end{array}},\]
	
\[ \mathbf{V}_2 = 
\renewcommand{\baselinestretch}{0.9}
{\footnotesize \setcounter{MaxMatrixCols}{34}
     \begin{array}{lcr}
	E_p= \{ 1, 2, 3, 4, 5, 6, 7\} & E_n=\{ 0\}  \\
	E_p=\{ 0, 2, 3, 4, 5, 6, 7\} & E_n=\{ 1\} \\
	E_p=\{ 0, 1, 3, 4, 5, 6, 7\} & E_n=\{ 2\} \\
	E_p=\{ 0,1,  3, 4,  6,7\} & E_n=\{ 2, 5\} \\
	E_p=\{ 2,  5, 6,7\} & E_n=\{ 0,1,3,4\}. \end{array}}.\]
	
In both construction examples $\mathbf{V}_1$ and $\mathbf{V}_2$, we cannot find any $2$ vector combinations that belong to the same $\mathcal{A}_i$'s, $\mathcal{D}_i$'s, $\mathcal{F}_i$'s and $\mathcal{G}_i$'s. Therefore, all possible combinations of $2$ vectors do not hit any of forbidden multisets.

However, once we add any other vector $\mathbf{v}_6 \in \mathcal{B}_8^+$ to the above sets some of the combinations of resulting vector sets hit one of the forbidden multisets. This means that if $k=6$ we compute the number of combinations that hit the forbidden multisets is exactly equal to $\binom{2^{8-1} - 8}{6} = \binom{120}{6}=3,652,745,460$. Hence, the maximum number of vectors $\mathbf{v}_i$ that can be appended to $\mathbf{H}_8$ is $f_2(8) = 5$. The proof is complete. Next, we also show the different groups of such combinations.

Among all possible constructions, here we show that we can only have these group combinations $(\mathcal{A}, \mathcal{A}, \mathcal{A}, \mathcal{D}, \mathcal{G})$, $(\mathcal{A}, \mathcal{A}, \mathcal{D}, \mathcal{D}, \mathcal{D})$, $(\mathcal{A}, \mathcal{A}, \mathcal{D}, \mathcal{D}, \mathcal{G})$, $(\mathcal{A}, \mathcal{A}, \mathcal{D}, \mathcal{G}, \mathcal{G})$,  $(\mathcal{A}, \mathcal{D}, \mathcal{D}, \allowbreak \mathcal{D}, \mathcal{D})$,  $(\mathcal{A}, \mathcal{D}, \mathcal{D}, \mathcal{D}, \mathcal{G})$, $(\mathcal{A}, \mathcal{D}, \mathcal{D}, \mathcal{G}, \mathcal{G})$ and $(\mathcal{A}, \mathcal{D}, \mathcal{G}, \mathcal{G}, \mathcal{G})$. We take each combinations and using the rules developed in \cite{michel2012}, we show in Apendix \ref{appendixB} that we cannot add any more columns from any groups.

\section{Minimum Distance of Code sets}
	\label{minDist}	
	The Manhattan distance \cite{craw2010} equivalently (${\ell}_1$)-norm of two $L$-dimensional vectors $\mathbf{y}_i$ and $\mathbf{y}_j$ for $i \neq j$ is defined as
	\begin{eqnarray}
	\label{dist}
	d_L (\mathbf{y}_i, \mathbf{y}_j) = \sum_{t=1}^L | y_{i,t} -y_{j,t}|,
	\end{eqnarray}
	\nid where $|\cdot|$ denotes complex amplitude. Then the general minimum Manhattan distance of received vectors for a given antipodal code set can be formulated by
	\begin{eqnarray}
	\label{MinDistCode}
	d_{\rm{min}} (\mathbf{C}) = \argmin_{\substack{\mathbf{x}_i,\mathbf{x}_j \in \{\pm 1\}^{K\times 1} \\ \mathbf{y}_i =\mathbf{C}\mathbf{x}_i,\mathbf{y}_j =\mathbf{C}\mathbf{x}_j}} d_L (\mathbf{y}_i, \mathbf{y}_j).
	\end{eqnarray} 
	\begin{theorem}
		\label{theor1}
		Let $\mathfrak{C} \in \{ \pm 1\}^{L \times K}$ represent the set of all antipodal matrices constructed by distinct\footnote{Not only the columns required to be distinct but we assume any column multiplication with minus one should result distinct columns as well.} columns, then the minimum distance of the code set, $\delta (\mathfrak{C})$, is equal to $4$, where
		\begin{equation}
		    \delta (\mathfrak{C}) =\argmin_{\substack{\mathbf{C}' \in \mathfrak{C}}} d_{min} (\mathbf{C}').
		\end{equation}
		
	\end{theorem}
	\begin{proof}
		Let assume that $\delta (\mathfrak{C}) = d_{\rm{min}} (\mathbf{C}) = d_L (\mathbf{y}_n, \mathbf{y}_m) $, where $\mathbf{y}_n = \mathbf{C}\mathbf{x}_{n}$, $\mathbf{y}_m = \mathbf{C}\mathbf{x}_{m}$, $\mathbf{y}_n, \mathbf{y}_m \in \mathcal{\mathcal{N}}^{L\times 1}$, $\mathcal{N} \in \{\pm K, \pm (K-2), ...\}$, and $\mathbf{x}_n, \mathbf{x}_m \in \{\pm 1\}^{K\times 1}$. The minimum value obtained when the difference vector $\mathbf{y} = \mathbf{y}_n- \mathbf{y}_m = \mathbf{C}(\mathbf{x}_{n}-\mathbf{x}_{m}) = \mathbf{C}\bar{\mathbf{x}}$ has only one non-zero element $y_c \neq 0$, $y_{n,c} \neq y_{m,c}$, and $L-1$ zeros $y_i = 0$, $y_{n,i} = y_{m,i}$ for $i \neq c$ where $\bar{\mathbf{x}} \in \{0, \pm 2\}^{K \times 1}$. In order to achieve only one non-zero element of $\mathbf{y}$ then $\mathbf{c}^i\bar{\mathbf{x}} = 0$ for $ \forall i \in \{1, \dots, L \}$ except $i \neq c$ where $\mathbf{c}^i$ is the $i$-th row of matrix $\mathbf{C}$. One obvious remark is that the number of $\pm 2$ elements in $\bar{\mathbf{x}}$ must be even for the inner product, $\mathbf{c}^i\bar{\mathbf{x}}$, to be zero. One possible option is to have all zeros but two non-zero elements in $\bar{\mathbf{x}}$. In this case the matrix $\mathbf{C}$ must have at least two columns $\mathbf{c}_j$ and $\mathbf{c}_k$ with $c_{j,i}= c_{k,i}$ or $c_{j,i}= -c_{k,i}$ for $ \forall i \in \{1, \dots, L \}$ except $i \neq c$ where $1 \leq j\neq k \leq K$ to result in a vector $\mathbf{y}$ having only one non-zero element. In other words, the difference of $\mathbf{c}_j$ and $\mathbf{c}_k$ is either in one or $L-1$ elements. Therefore, $y_{n,c} - y_{m,c} =\pm 2(c_{j,c}-c_{k,c})$ where $c_{j,c}-c_{k,c}$ results in $d_{\rm{min}} (\mathbf{C}) = |y_{n,c} - y_{m,c} | = 4$. 
	\end{proof}
	Now that we proved that $\delta (\mathfrak{C}) = 4$, we will try to find $d_{\rm{min}} (\mathbf{C})$ of our proposed UD code sets $\mathbf{C} \in \mathcal{C} \subset \mathfrak{C}$, where $\mathcal{C} \in \{ \pm 1\}^{L \times K}$ is the set of all the antipodal UD code sets. Let us follow the option of having all zeros but two non-zero elements in $\bar{\mathbf{x}}$. In the case of $L=4$ the columns $\mathbf{c}_1$ and $\mathbf{c}_5$ differ in the first element only. If $\bar{\mathbf{x}} = [2, 0,0,0,-2]^T$ then the difference vector is $\mathbf{y} = [4,0,0,0]^T$. Note that even if we substitute the $\mathbf{c}_1$ by $-\mathbf{c}_1$ we can still obtain same $\mathbf{y}$ with $\bar{\mathbf{x}} = [2, 0,0,0,2]^T$. Based on our construction in \cite{michel2012}, we look at the case of $L=8$. Observe that all the elements of the columns $9$-th and $12$-th, $[\alpha^{13}, 0]^T$ and $[\alpha^{13}, \alpha^{13}]^T$ of the $\mathbf{C}$ are equal except the $5$-th element in which they differ. If we select $x_{n,9} \neq x_{m,9}$, $x_{n,12} \neq x_{m,12}$, and $x_{n,i} = x_{m,i}$ for all $i \notin \{9,12\}$ then $y_{n,5} =2$ and $y_{m,5} =-2$ or $y_{n,5} =-2$ and $y_{m,5} =2$, which will result in $d_L (\mathbf{y}_n, \mathbf{y}_m) = 4$. With this specific observation together with the Theorem \ref{theor1}, we conclude that $d_{\rm{min}} (\mathbf{C}) = 4$. From this observation, we learn that if any two columns differ at one element in a UD code set we assure that $d_{\rm{min}} (\mathbf{C}) = 4$. Similarly, for $L=16$  columns $17$-th and $27$-th, $[\alpha^{13}, 0, \alpha^{13}, 0]^T$ and $[\alpha^{13}, 0, 0, 0]^T$  differ in one element only. Due to our recursive construction in \cite{michel2012} for $L = 2^p$, where $p = 5, 6, ...$ columns $p2^{(p-2)}+3$ and $(p-1)2^{(p-1)}+3$ differ in one element only. Therefore, all the UD code set generated in \cite{michel2012} has $d_{\rm{min}} (\mathbf{C}) = 4$.
	
	\section{Fast Decoding Algorithm in AWGN}
	\label{fastDecoder}
	The recursive linear NDA decoder, discussed in \cite{michel2012}, is not suitable for the noisy transmission channel. Let the received vector in the presence of noise be mathematically formulated as
	\begin{eqnarray}
	\label{systemModel}
	\mathbf{y} &=& A\mathbf{C}\mathbf{x} + \mathbf{n} \\
	&=& \sum_{k=1}^K A  \mathbf{c}_k x_k  + \mathbf{n},
	\end{eqnarray} 	
	\noindent where $A$ denotes the amplitude, $\mathbf{c}_k \in \{\pm 1\}^{L \times 1}$ are signatures for $1\leq k \leq K$, $\mathbf{x} \in \{\pm 1\}^{K\times 1}$ is user data and $\mathbf{n}$ is the Additive white Gaussian noise (\acs{AWGN}) channel noise vector with a variance of $\sigma^2$. 
	
		\vspace{-0.0cm}

	The objective of the receiver is as follows; recover the user data $\mathbf{\hat{x}}$ given the received vector $\mathbf{y}$ in (\ref{systemModel}) and $\mathbf{C}$, so that the mean square error $\mathbb{E}\{||\mathbf{x}-\mathbf{\hat{x}}||^2\} $ is minimized. The ML solution is given by
	 \begin{eqnarray}
	\label{MLDecoding}
	\widehat{\mathbf{x}} = \argmin_{\mathbf{x} \in \{\pm 1\}^{K\times 1}} ||\mathbf{y} - \mathbf{C}\mathbf{x} ||^2.
	\end{eqnarray}
	 It is widely recognized that obtaining the ML solution is generally NP-hard \cite{Lupas1989}. 
	
	Our detection problem, where the overloaded signature matrix has UD structure, can be solved efficiently if there is a function that maps $\mathbf{y} \mapsto \widehat{\mathbf{y}} \in \Lambda \subset \mathcal{N}^{L\times 1}$, where $\Lambda$ is a $\mathbb{Z}$-module with rank $L$. Therefore, it is equivalent to finding the closest vector point in a lattice $\Lambda$, such that 
	\begin{eqnarray}
	\label{MinDistY}
	\widehat{\mathbf{y}} = \argmin_{\mathbf{y}' \in \mathcal{N}^{L\times 1}} d_L (\mathbf{y}, \mathbf{y}').
	\end{eqnarray}
	Gaining the knowledge of $\widehat{\mathbf{y}}$, one of the points in $\Lambda$ generated by $\mathbf{C}$, we can obtain $\mathbf{\hat{x}}$ unambiguously (uniquely) by applying the NDA \cite{michel2012}, since $\mathbf{C}$ satisfies the unique decodability criteria. However, there is no known polynomial algorithm to solve for $\widehat{\mathbf{y}}$ from a given $\mathbf{y}$.
	
	Without loss of generality, we design our low-complexity decoding algorithm for code sets that are generated by the seed matrix $\mathbf{V}_8$ which is given by:
	\begin{eqnarray}
	\label{V8_2}
	{\mathbf{V}_8}=
	\renewcommand{\baselinestretch}{1.2}
	{\footnotesize \setcounter{MaxMatrixCols}{34}
		\begin{bmatrix}
		\alpha^{14}&\m1&\m\alpha&\m\alpha^{3}&\m\alpha^{3}\\
		0&\m0&\m0&\m\alpha^{3}&\m\alpha^{6}
		\end{bmatrix}.}
	\end{eqnarray}
 However, we do not necessarily imply that our proposed decoder cannot be applied to other recursive UD code sets such as in \cite{Lindstrom1964, Khachatrian1987, Khachatrian1995}. A slight modification may be required depending on a given $\mathbf{C}$ matrix. We present our proposed low-complexity FDA for $\mathbf{C}_L$, $L\in 2^i$, where $i \in \{2,3, ...\}$, which is portrayed in Table \ref{FDAalg}. To quickly summarize, the FDA estimates iteratively the number and positions of $-1$s in $\mathbf{\hat{x}}$ from the received values in $\mathbf{y}$. Those estimates are updated when the FDA compares each value of the vector $\mathbf{y}$ against quantized levels. Those quantized levels are computed based on the information from the previous rows and the current values of the vector $\mathbf{y}$.
 \vspace{-0.3cm}
	\begin{center}
		\begin{table}[H]
			\label{mergePartAlg}
			\begin{center}
				\begin{tabular}{l}
					\hline \hline \rule{0pt}{3ex} 
					\nid \textbf{Merge, $\boldsymbol{\mathsf{meP}}$ function}    \\
					\hline \rule{0pt}{3ex} 
					\nid \textbf{{Input}:} $dP', \mathbf{m}, n, K, r_c, m_{LR}, mP$ \\
					\hspace{0.3cm} 1: $dPart \gets \{\emptyset\}, \mathcal{B} \gets  mP(r_c), p' \gets 1$ \\
					\hspace{0.3cm} 2: \textbf{for} $p \gets 1$ to $len(dP')$   \\ 
					\hspace{0.3cm} 3:  \hspace{0.3cm} $A_L \gets dP'(p,3)$, $f \gets true$\\
					\hspace{0.3cm} 4:  \hspace{0.3cm} \textbf{if} $A_L \neq 0$\\
					\hspace{0.3cm} 5:  \hspace{0.6cm} $dPart(p', 1:2) \gets [ 0, A_L ]$, $\mathcal{L}_0 \gets dP'(p, 5:4+A_L)$\\
					\hspace{0.3cm} 6:  \hspace{0.6cm} $\mathcal{L}_1 \gets \mathcal{L}_0 \cap \mathcal{B}(1) $, $\mathcal{R}_1 \gets \mathcal{L}_0 \cap \mathcal{B}(2) $\\
					\hspace{0.3cm} 7:  \hspace{0.6cm} $dPart(p', 3:4+A_L) \gets [ |\mathcal{L}_1|, |\mathcal{R}_1|, \mathcal{L}_1, \mathcal{R}_1 ]$\\
					\hspace{0.3cm} 8:  \hspace{0.6cm} $r \gets r_c$, $f \gets false$\\
					\hspace{0.3cm} 9:  \hspace{0.6cm} \textbf{if} $\mathcal{L}_0 = mP(r-1,1)$\\
					\hspace{0.3cm}10:  \hspace{0.9cm} $dPart(p', 1) \gets m_{LR}(r-1,3)$, $f \gets true$\\
					\hspace{0.3cm}11:  \hspace{0.6cm} \textbf{else} \\
					\hspace{0.3cm}12:  \hspace{0.9cm} \textbf{while} $f = false$ AND $r >1$\\
					\hspace{0.3cm}13:  \hspace{1.2cm} $\mathcal{I}_L  \gets 
					\{i | \mathbf{m}(i) = -1, i \in \mathcal{L}_0\}$\\
					\hspace{0.3cm}14:  \hspace{1.2cm} \textbf{if} $\mathcal{I}_L = \emptyset$\\
					\hspace{0.3cm}15:  \hspace{1.5cm} $dPart(p', 1) \gets \sum_{i \in \mathcal{L}_0} \mathbf{m}(i)$, $f \gets true$\\
					\hspace{0.3cm}16:  \hspace{1.2cm} \textbf{else} \\
					\hspace{0.3cm}17:  \hspace{1.5cm} $\mathcal{L}_0'  \gets [1:K] \setminus \footnotemark  \mathcal{L}_0$ \\
					\hspace{0.3cm}18:  \hspace{1.5cm} $\mathcal{I}_{L'}  \gets 
					\{i | \mathbf{m}(i) = -1, i \in \mathcal{L}_0'\}$ \\
					\hspace{0.3cm}19:  \hspace{1.5cm} \textbf{if} $\mathcal{I}_{L'} = \emptyset$\\
					\hspace{0.3cm}20:  \hspace{1.8cm} $dPart(p', 1) \gets n - \sum_{i \in \mathcal{L}_0'} \mathbf{m}(i)$, $f \! \gets \! true$\\
					\hspace{0.3cm}21:  \hspace{1.5cm} \textbf{else} \\
					\hspace{0.3cm}22:  \hspace{1.8cm} $r \gets r - 1$\\
					\hspace{0.3cm}23:  \hspace{1.8cm} $\mathbf{m}  \gets \mathsf{uMissing}(m_{LR}(r, 3), mP(r,1) , \mathbf{m} )$ \\
					\hspace{0.3cm}24:  \hspace{0.6cm} \textbf{if} $f = false$ \\
					\hspace{0.3cm}25:  \hspace{0.9cm} break \\
					\hspace{0.3cm}26:  \hspace{0.6cm} $p' \gets p'+1$ \\
					\hspace{0.3cm}27:  \hspace{0.3cm} $A_R \gets dP'(p,4)$ \\
					\hspace{0.3cm}28:  \hspace{0.3cm} \textbf{if} $A_R \neq 0$ AND $f = true$\\
					\hspace{0.3cm}29:  \hspace{0.6cm} $dPart(p', 1:2) \gets [ 0, A_R ]$, \\
					\hspace{0.3cm}30: \hspace{0.6cm} $\mathcal{L}_0 \gets dP'(p, 5\!+\!A_L\!:\!4\!+\!A_L\!+\!A_R)$\\
					\hspace{0.3cm}31:  \hspace{0.6cm} $\mathcal{L}_2 \gets \mathcal{L}_0 \cap \mathcal{B}(1) $, $\mathcal{R}_2 \gets \mathcal{L}_0 \cap \mathcal{B}(2) $\\
					\hspace{0.3cm}32:  \hspace{0.6cm} $dPart(p', 3:4+A_R) \gets [ |\mathcal{L}_2|, |\mathcal{R}_2|, \mathcal{L}_2, \mathcal{R}_2 ]$\\
					\hspace{0.3cm}33:  \hspace{0.6cm} $r \gets r_c$, $f \gets false$\\
					\hspace{0.3cm}34:  \hspace{0.6cm} \textbf{if} $\mathcal{L}_0 = mP(r-1,2)$\\
					\hspace{0.3cm}35:  \hspace{0.9cm} $dPart(p', 1) \gets m_{LR}(r-1,4)$, $f \gets true$\\
					\hspace{0.3cm}36:  \hspace{0.6cm} \textbf{else} \\
					\hspace{0.3cm}37:  \hspace{0.9cm} \textbf{while} $f = false$ AND $r >1$\\
					\hspace{0.3cm}38:  \hspace{1.2cm} $\mathcal{I}_R  \gets 
					\{i | \mathbf{m}(i) = -1, i \in \mathcal{L}_0\}$\\
					\hspace{0.3cm}39:  \hspace{1.2cm} \textbf{if} $\mathcal{I}_R = \emptyset$\\
					\hspace{0.3cm}40:  \hspace{1.5cm} $dPart(p', 1) \gets \sum_{i \in \mathcal{L}_0} \mathbf{m}(i)$, $f \gets true$\\
					\hspace{0.3cm}41:  \hspace{1.2cm} \textbf{else} \\
					\hspace{0.3cm}42:  \hspace{1.5cm} $\mathcal{L}_0'  \gets [1:K] \setminus  \mathcal{L}_0$ \\
					\hspace{0.3cm}43:  \hspace{1.5cm} $\mathcal{I}_{L'}  \gets 
					\{i | \mathbf{m}(i) = -1, i \in \mathcal{L}_0'\}$ \\
					\hspace{0.3cm}44:  \hspace{1.5cm} \textbf{if} $\mathcal{I}_{L'} = \emptyset$\\
					\hspace{0.3cm}45:  \hspace{1.8cm} $dPart(p', 1) \gets n - \sum_{i \in \mathcal{L}_0'} \mathbf{m}(i)$, $f \!\gets \!true$\\
					\hspace{0.3cm}46:  \hspace{1.5cm} \textbf{else} \\
					\hspace{0.3cm}47:  \hspace{1.8cm} $r \gets r - 1$\\
					\hspace{0.3cm}48:  \hspace{1.8cm} $\mathbf{m}  \gets \mathsf{uMissing}(m_{LR}(r, 4), mP(r,2) , \mathbf{m} )$ \\
					\hspace{0.3cm}49:  \hspace{0.6cm} \textbf{if} $f = false$ \\
					\hspace{0.3cm}50:  \hspace{0.9cm} break \\
					\hspace{0.3cm}51:  \hspace{0.6cm} $p' \gets p'+1$ \\
					\hspace{0.3cm}52:  \hspace{0.0cm} \textbf{if} $f = false$ \\
					\hspace{0.3cm}53:  \hspace{0.25cm} $tP \!\! \gets \!\! [m_{LR}',\! K, m_{LR}(r_c\!-\!1,1),\! m_{LR}(r_c\!-\!1, 2),\! mP(r_c\!-\!1)] $\\
					\hspace{0.3cm}54:  \hspace{0.25cm} $[dPart,m] \gets \mathsf{meP}(tP, \mathbf{m}, n, K, r_c, m_{LR}, mP)$\\
					\nid \textbf{{Output}:} $dPart$, $\mathbf{m}$ \\
					\hline
				\end{tabular}\vspace{-0.0cm}
			\end{center}
		\end{table}
	\end{center}
	\footnotetext{$\mathcal{A} \setminus \mathcal{B} = \{x : x \in \mathcal{A} \text{   AND   } x \notin \mathcal{B} \}$.} 
 
 Therefore, the FDA attempts to find the closest lattice point that is generated by matrix $\mathbf{C}$ from the estimate $\hat{\mathbf{x}}$ to the received vector $\mathbf{y} \in \mathbb{R}^{L \times 1}$. This is achieved by performing quantization on each row of vector $\mathbf{y}$ to obtain $\mathbf{z} \in \mathcal{N}^{L\times 1}$ such that $\mathbf{z}$ is a valid lattice point. In order to demonstrate how the FDA works, it is beneficial to describe each function in detail. The quantizer $\mathsf{Q} : \mathbb{R} \mapsto \mathcal{N} $,  $z_1 = \mathsf{Q}(y, -K, K, 2)$ maps a received real value $y \in \mathbb{R}$ to one of the constellation in $\{\pm K, \pm (K-2), ...\}$ as follows,
	
	\begin{equation}
	\label{quantize}
 \mathsf{Q}(y, t^-, t^+,s) =  
  \begin{cases} 
   t^- & \!\!\!\text{if } y \leq t^-\!+\!s/2 \\
   t^-\!+\!s(i'-1) & \!\!\!\begin{array}{l}
\!\!\!\text{if }  t^-\!+\!s(i'-3/2) <y \dots\\
\!\!\!\dots  \leq t^-\!+\!s(i'-1/2)
\end{array}\\
   t^+       & \!\!\!\text{if } y > t^+\!-\!s/2
  \end{cases}
\end{equation}

\noindent where $t^-$, $t^+$ and $s$ are the input parameters for the minimum, maximum, step-size values and $i'\in \mathbb{N}$ is the internal value that decides the quantization level, respectively. Furthermore, let integer $n$ and vector $\mathbf{m}$ denote the number of $-1$s and locations, where the $-1$s occur in $\mathbf{\hat{x}}$, respectively. Mathematically, $n$ and $\mathbf{m}$ can be expressed as
	\begin{equation}
	n = (K-\mathbf{\hat{x}}^T\boldsymbol{1} )/2,
	\end{equation}
	and
	\begin{equation}
	\mathbf{m} = (\boldsymbol{1}-\mathbf{\hat{x}})/2.
	\end{equation}
\vspace{-0.1cm}
	\begin{center}
		\begin{table}[h]
			\vspace{-0.5cm} \caption{Fast Decoder Algorithm (FDA)}
			\label{FDAalg}
			\begin{center}
				\begin{tabular}{l}
					\hline \hline \rule{0pt}{3ex} 
					\nid \textbf{FDA}  \\
					\hline \rule{0pt}{3ex} 
					\nid \textbf{{Input}:} $\mathbf{y}$, $K$ \\
					\hspace{0.3cm} 1: $z_1 \gets \mathsf{Q}(y_1, -K,K,2)$ \\
					\hspace{0.3cm} 2: \textbf{if} $|z_1| = K$,   $\mathbf{\hat{x}}\gets \mathsf{sgn}(z_1)\boldsymbol{1}_K$\\ 
					\hspace{0.3cm} 3:  \textbf{else}\\
					\hspace{0.3cm} 4:  \hspace{0.3cm} $\mathbf{m} \gets -\boldsymbol{1}_K$, $r_c \gets 1$, $n \gets (K-z_1)/2$\\
					\hspace{0.3cm} 5:  \hspace{0.3cm} $m_{LR}(r_c,3) \gets n$\\
					\hspace{0.3cm} 6:  \hspace{0.3cm} $dP(r_c) \gets [n, K, m_{LR}(r_c,1), m_{LR}(r_c, 2), mP(r_c)]$\\
					\hspace{0.3cm} 7:  \hspace{0.3cm} $c_{AL} \gets \boldsymbol{0}$, $\mathbf{z} \gets \boldsymbol{0}$, $s_{I}\gets 1$, $c_{T} \gets 1$\\
					\hspace{0.3cm} 8:  \hspace{0.3cm} \textbf{while}  $s_{I}=1$ AND $c_{T}< N_c$\\
					\hspace{0.3cm} 9:  \hspace{0.6cm} $s_{I}\gets 0$\\
					\hspace{0.3cm}10:  \hspace{0.6cm} \textbf{while} $r_c < L$, $r_c \gets r_c +1$\\
					\hspace{0.3cm}11:  \hspace{0.9cm} $[dP(r_c),m] \!\! \gets \!\! \mathsf{meP}(dP(r_c\!\!-\!\!1), \mathbf{m}, \! n,\! K, r_c, m_{LR}, mP)$\\
					\hspace{0.3cm}12:  \hspace{0.9cm} $A^- \gets \mathsf{minT}(dP(r_c))$, $A^+ \gets \mathsf{maxT}(dP(r_c))$\\
					\hspace{0.3cm}13:  \hspace{0.9cm} $\mathbf{z}(r_c) \gets \mathsf{Q}(y', A^-, A^+,4)$\\
					\hspace{0.3cm}14:  \hspace{0.9cm} $m_{LR}(r_c,3) \gets (2n-\mathbf{z}(r_c)-$\\
					\hspace{0.3cm}14:  \hspace{0.89cm} $(m_{LR}(r_c,2) - m_{LR}(r_c,1)))/4$\\
					\hspace{0.3cm}15:  \hspace{0.9cm} $m_{LR}(r_c,4) \gets n- m_{LR}(r_c,3)$ \\
					\hspace{0.3cm}16:  \hspace{0.9cm} $\mathbf{m} \gets \mathsf{uM}(\mathbf{m}, m_{LR}, r_c, mP)$ \\
					\hspace{0.3cm}17:  \hspace{0.6cm} $\mathbf{m} \gets \mathsf{f_c}(\mathbf{m}, m_{LR}, n)$, $\mathbf{t}_d \gets \mathbf{z} - \mathbf{C}(-2\mathbf{m} + \boldsymbol{1})$\\
					\hspace{0.3cm}18:  \hspace{0.6cm}  \textbf{if} $\mathbf{t}_d \notin \boldsymbol{0}$,   $s_{I}\gets 1$, $r_c \gets i_d$\footnotemark   \\
					\hspace{0.3cm}19:  \hspace{0.9cm} $c_{AL}(r_c+1) \gets  c_{AL}(r_c+1)+1$ \\ 
					\hspace{0.3cm}20:  \hspace{0.6cm} $c_{T} \gets c_T +1$ \\
					\hspace{0.3cm}21:  \hspace{0.3cm} $\mathbf{\hat{x}} \gets -2\mathbf{m} + \boldsymbol{1}$  \\
					\nid \textbf{{Output}:} $\mathbf{\hat{x}}$ \\
					\hline
				\end{tabular}\vspace{-0.1cm}
			\end{center}
		\end{table}
	\end{center}
	\footnotetext{$i_d$ is the lowest index that $\mathbf{t}_d(i_d) \neq 0$.} 
	For the trivial case when $z_1 = K$ or $z_1 = -K$ the algorithm outputs the decision vector $\mathbf{\hat{x}}$ without proceeding to the next steps. Otherwise, it will initialize the vector $\mathbf{m}\gets \mathbf{1}_K$, index $r_c \gets 1$ and $n \gets (K-z_1)/2$. Then the value of $n$ is recorded in the table $m_{LR}$ at the row $r_c = 1$ and the column $n_L$. The table $m_{LR}$ keeps track of $l_{L}$, $l_{R}$, $n_L$ and $n_R$ values for each row and $l_{L}$ and $l_{R}$ are defined as the number of $+1$s and $-1$s of the row of code set $\mathbf{C}$, $n_L$ and $n_R$ are the number of $-1$s of the estimated $\mathbf{\hat{x}}$ that corresponds to locations of $+1$ and $-1$ of each row of the code set.

	For example, $m_{LR}$ table for the code set in Fig. \ref{C_4_5_01} can be constructed as
	\begin{table}[H]
  \begin{center}
    \label{tab:tableMLR}
    \begin{tabular}{c|c|c|c}
      ${l_{L}}$ &  $l_{R}$ & $n_{L}$ & $n_{R}$\\  
       \hline \rule{0pt}{2ex}
      5 & 0 & $n$ & .\\ 
      2 & 3 & . & .\\ 
      3 & 2 & . & .\\ 
      3 & 2 & . & .\\ 
    \end{tabular}
  \end{center}
\end{table}
	\noindent where the last two columns presented as dots are filled up in the further steps of the algorithm. Whereas $mP(r_c)$ is the actual column indices of $+1$s and $-1$s in the row $r_c$. In the case of code set in Fig. \ref{C_4_5_01}, the matrix $mP$ is defined as
	
	\begin{table}[h!]
  \begin{center}
    \label{tab:tableMPART}
    \begin{tabular}{c|c|c}
      ${r_{c}}$ &  $index_{L}$ & $index_{R}$   \\  
       \hline \rule{0pt}{2ex}
      1 & \{$\emptyset$\} & \{1,2,3,4,5\}   \\ 
      2 &  \{2,4,5\} &  \{1,3\} \\ 
      3 & \{3,4\} & \{1,2,5\}  \\ 
      4 & \{2,3\} & \{1,4,5\}  \\ 
    \end{tabular}
  \end{center}
\end{table}

	The algorithm proceeds by partitioning each row $r_c$ and saving the estimated number of $-1$s of vector $\mathbf{\hat{x}}$, $n'$, the partition size, $K'$, the number of $+1$s and $-1$s, $l_{L'}$ and $l_{R'}$ of the specific partition in $dP(r_c)$. In Step $7$, the adaptive parameter $c_{AL}$, stopping iteration flag $s_I$, and the number of repetition count $c_T$ (e.g., algorithm can repeat steps from $10$ to $16$ up to a maximum of $N_c$ times, since it directly depends on the variance of the noise) are initialized. At each row $r_c$, $dP(r_c)$ gets updated by calling $\mathsf{meP}()$ function. The function $\mathsf{meP}(dP(r_c-1), \mathbf{m}, n, K, r_c, m_{LR}, mP)$ scans each partition of the row $r_c$ with updated values and if it finds one or more partitions completely identified the exact locations of $-1$s, hence will skip partitioning further. $A^-$ and $A^+$ are the minimum and maximum values calculated for a given partitions at each row.
	\vspace{-0.5cm}
	\begin{center}
		\begin{table}[H]
			\label{minT}
			\begin{center}
				\begin{tabular}{l}
					\hline \hline \rule{0pt}{3ex} 
					\nid \textsf{\textbf{MinT} function}  \\
					\hline \rule{0pt}{3ex} 
					\nid \textbf{{Input}:} $dP$ \\
					\hspace{0.3cm} 1: $A^- \gets 0$ \\
					\hspace{0.3cm} 2: \textbf{for} $i \gets 1$ to $\mathsf{len}(dP)$\\
					\hspace{0.3cm} 3:  \hspace{0.3cm} $A^- \gets A^- + \mathsf{minF}(dP(i,1), dP(i,3), dP(i,2))$\\
					\nid \textbf{{Output}:} $A^-$ \\
					\hline
				\end{tabular}\vspace{-0.0cm}
			\end{center}
		\end{table}
	\end{center}
	\vspace{-0.8cm}
	\begin{center}
		\begin{table}[H]
			\label{maxT}
			\begin{center}
				\begin{tabular}{l}
					\hline \hline \rule{0pt}{3ex} 
					\nid \textsf{\textbf{MaxT} function}  \\
					\hline \rule{0pt}{3ex} 
					\nid \textbf{{Input}:} $dP$ \\
					\hspace{0.3cm} 1: $A^+ \gets 0$ \\
					\hspace{0.3cm} 2: \textbf{for} $i \gets 1$ to $\mathsf{len}(dP)$\\ 
					\hspace{0.3cm} 3:  \hspace{0.3cm} $A^+ \gets A^+ + \mathsf{maxF}(dP(i,1), dP(i,4), dP(i,2))$\\
					\nid \textbf{{Output}:} $A^+$ \\
					\hline
				\end{tabular}\vspace{-0.0cm}
			\end{center}
		\end{table}
	\end{center}
	\noindent where $\mathsf{minF}(n, L, K) = 2|n-L| -K$ and $\mathsf{maxF}(n, R, K) = -2|n-R| +K$. In line $13$ of FDA, we define $y'= y(r_c)+2\mathsf{sgn}(y(r_c)-z(r_c))c_{AL}(r_c)$, where $c_{AL}$ is an integer vector that is incremented in Step $19$ by one if estimated $\mathbf{z}$ is not one of the lattice vertices generated by $\mathbf{C}$. In line $18$, we scan the rows from $1$ to $L$ to find the first $r_c$, where $\mathbf{z}$ differ from estimated lattice vertex. The function $\mathsf{uM}(\mathbf{m}, m_{LR}, r_c, mP)$ updates $\mathbf{m}$ with the given updated parameters as follows
	
	\vspace{-0.0cm}
	\begin{center}
		\begin{table}[H]
			\label{updateM}
			\begin{center}
				\begin{tabular}{l}
					\hline \hline \rule{0pt}{3ex} 
					\nid \textsf{\textbf{uM} function}  \\
					\hline \rule{0pt}{3ex} 
					\nid \textbf{{Input}:} $\mathbf{m}, m_{LR}, r, mP$ \\
					\hspace{0.3cm} 1: \textbf{if} $m_{LR}(r, 3) = 0$ \\
					\hspace{0.3cm} 2: \hspace{0.3cm} $\mathbf{m}(mP(r,1)) \gets 0$ \\
					\hspace{0.3cm} 3: \textbf{elseif} $m_{LR}(r, 3) = m_{LR}(r, 1)$ \\
					\hspace{0.3cm} 4: \hspace{0.3cm} $\mathbf{m}(mP(r,1)) \gets 1$ \\
					\hspace{0.3cm} 5: \textbf{else} \\
					\hspace{0.3cm} 6: \hspace{0.3cm} $\mathbf{m} \gets \mathsf{uMissing}(m_{LR}(r, 3), mP(r,1), \mathbf{m}) $ \\
					\hspace{0.3cm} 7: \textbf{if} $m_{LR}(r, 4) = 0$ \\
					\hspace{0.3cm} 8: \hspace{0.3cm} $\mathbf{m}(mP(r,2)) \gets 0$ \\
					\hspace{0.3cm} 9: \textbf{elseif} $m_{LR}(r, 4) = m_{LR}(r, 2)$ \\
					\hspace{0.3cm}10: \hspace{0.3cm} $\mathbf{m}(mP(r,2)) \gets 1$ \\
					\hspace{0.3cm}11: \textbf{else} \\
					\hspace{0.3cm}12: \hspace{0.3cm} $\mathbf{m} \gets \mathsf{uMissing}(m_{LR}(r, 4), mP(r,2), \mathbf{m}) $ \\
					\nid \textbf{{Output}:} $\mathbf{m}$ \\
					\hline
				\end{tabular}\vspace{-0.0cm}
			\end{center}
		\end{table}
	\end{center}
	\noindent where the function $\mathsf{uMissing}$ in lines $6$ and $12$ of $\mathsf{uM}$ are defined as
	\vspace{-0.0cm}
	\begin{center}
		\begin{table}[H]
			\label{updateMissing}
			\begin{center}
				\begin{tabular}{l}
					\hline \hline \rule{0pt}{3ex} 
					\nid \textsf{\textbf{uMissing function}}  \\
					\hline \rule{0pt}{3ex} 
					\nid \textbf{{Input}:} $n'$, $\mathcal{A}$, $\mathbf{m}$ \\
					\hspace{0.3cm} 1: $\mathcal{I}_L \gets \{i | \mathbf{m}(i) = -1, i \in \mathcal{A} \}$ \\
					\hspace{0.3cm} 2: \textbf{if} $|\mathcal{I}_L | = 1$ \\ 
					\hspace{0.3cm} 3:  \hspace{0.3cm} $\mathbf{m}(\mathcal{I}_L) \gets n' - \sum_{i \neq \mathcal{I}_L}\mathbf{m}(i) $\\
					\nid \textbf{{Output}:} $\mathbf{m}$ \\
					\hline
				\end{tabular}\vspace{-0.0cm}
			\end{center}
		\end{table}
	\end{center}
	In line $17$ of the FDA, the function $\mathsf{f_c}(\mathbf{m}, m_{LR}, n)$ finds all the locations of $-1$s in $\mathbf{m}$ based on the UD structure of $\mathbf{C}$ codes. As an example, in case of code set presented in Fig. \ref{C_4_5_01}, the function $\mathsf{f_c}$ is expressed as
	\vspace{-0.0cm}
	\begin{center}
		\begin{table}[H]
			\label{fcAlgorithm}
			\begin{center}
				\begin{tabular}{l}
					\hline \hline \rule{0pt}{3ex} 
					\nid $\boldsymbol{\mathsf{f_c}}$ \textsf{\textbf{function,}} $\boldsymbol{\mathsf{L=4}}$   \\
					\hline \rule{0pt}{3ex} 
					\nid \textbf{{Input}:} $\mathbf{m}$, $m_{LR}$, $n$\\
					\hspace{0.3cm} 1: $n' \gets \sum_{i \in  \mathcal{A}_5}m_{LR}(i, 4)$ \\
					\hspace{0.3cm} 2: $\mathbf{m}(5) \gets \mathsf{mod}(n',2) $ \\ 
					\hspace{0.3cm} 3: $m' \gets (n'-\mathbf{m}(5))/2 $ \\
					\hspace{0.3cm} 4: $\mathbf{m}(1) \gets n-m'-\mathbf{m}(5) $ \\
					\hspace{0.3cm} 5: $\mathbf{m}(2) \gets m_{LR}(3,3)-\mathbf{m}(1)-\mathbf{m}(5) $ \\
					\hspace{0.3cm} 6: $\mathbf{m}(3) \gets m_{LR}(2,3)-\mathbf{m}(1) $ \\
					\hspace{0.3cm} 7: $\mathbf{m}(4) \gets m_{LR}(2,4)-\mathbf{m}(2)-\mathbf{m}(5) $ \\
					\nid \textbf{{Output}:} $\mathbf{m}$ \\
					\hline
				\end{tabular}\vspace{-0.0cm}
			\end{center}
		\end{table}
	\end{center}
	\noindent where $\mathcal{A}_5 = \{2, 3, 4\}$. In case the information in $\mathbf{m}$ does not correspond to $\mathbf{z}$, which is verified in line $18$ of FDA it then sets  $r_c$ to the row, where the discrepancy happened and re-runs from Step $10$ again until it finds $\mathbf{m}$ that correspondence to $\mathbf{z}$. 
	
	{ For the case of Rayleigh fading channel instead of AWGN in (\ref{systemModel}) the proposed FDA is still applicable to perform detection. However, the channel coefficients for each user $k$ should be known at the receiver side. For the frequency-selective fading channels, we can employ transmitter precoding scheme to overcome multipath channel effect as proposed by Fantuz and D'Amours, which is detailed in \cite{Fantuz2019}. Briefly, this transmit precoding scheme exploits the knowledge of the channel impulse response for transforming the multipath channel into a single-path non-dispersive channel, which is equivalent over non-dispersive Rayleigh fading channel model.}
	
	\section{{Complexity Analysis}}
	\label{performanceAnalysis}
	
	In this section, we discuss the complexity analysis of the proposed NDA and FDA algorithms. The NDA decoder for the noiseless transmission channels, discussed in \cite{michel2012}, deciphers the data of all users at the receiver side in a recursive manner. At each step, it performs additions, comparisons and multiplications to decipher the bits of the users. 
	\vspace{-0.5cm}
		\begin{figure}[H]
			\hspace{-0.4cm}
			\includegraphics[width=3.8 in]{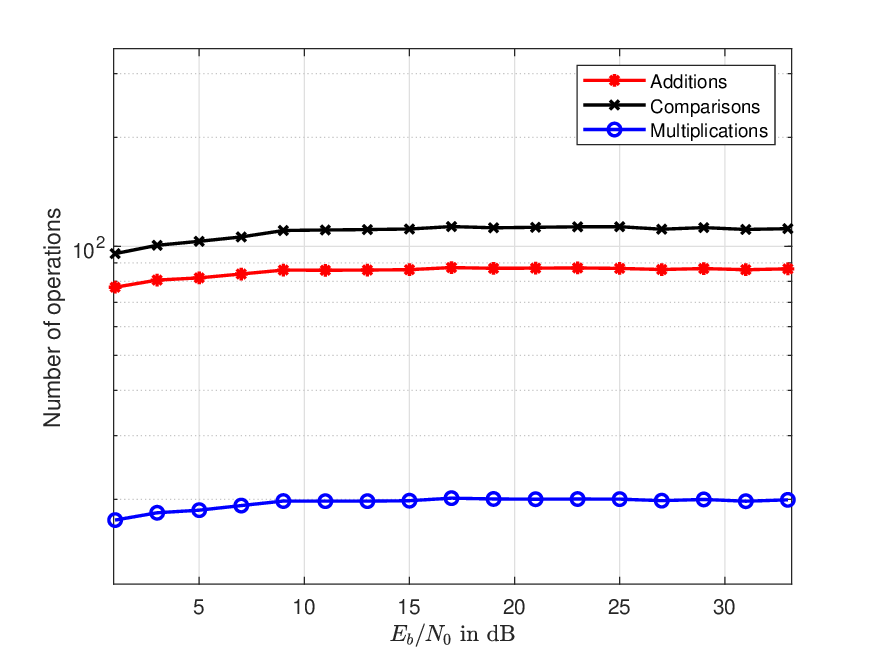}
           \caption{Complexity for $L=4$.} \label{complexity_4by5}
		\end{figure}
	\vspace{-0.5cm}
	The NDA is deterministic with an exact number of execution steps. After $L/8$ number of execution steps it calls the NDA algorithm recursively, using two smaller vectors composed of the upper and lower $L/2$ elements of the received vector. For that reason to compute the complexity of the algorithm we first break the algorithm into two blocks $B_1$ and $B_2$ from steps $1$ to $8$ and $9$ to $12$. 
	\vspace{-0.5cm}
		\begin{figure}[H]
		\hspace{-0.4cm}
			\includegraphics[width=3.8 in]{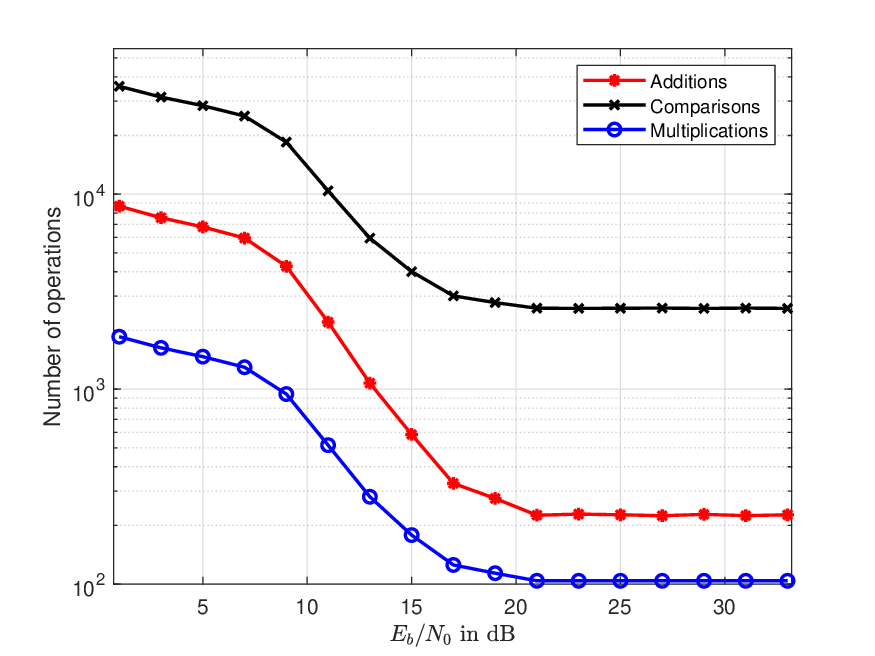}
           \caption{Complexity for $L=8$.}  \label{complexity_8by13}
		\end{figure}
	\vspace{-0.5cm}
	We denote $N_1$ and $N_2$ to represent the total number of additions, comparisons and multiplications ($N^i_{add}$, $N^i_{comp}$, $N^i_{mult}$) of block $B_i$ at each recursive calls. As an example $N_1 = (N^1_{add}, N^1_{comp}, N^1_{mult})$, where $N^1_{add} = 10$, $N^1_{comp} = 3$ and $N^1_{mult} = 0$ since there is $3$, $3$ and $4$ additions in steps $2$, $4$ and $6$, and $1$ comparison in steps $3$, $5$ and $8$, respectively. Similarly, when $L=16$ the $N_2 = (N^2_{add}, N^2_{comp}, N^2_{mult})$, where $N^2_{add} = (3+L)L/2+4=156$, $N^2_{comp} = 1$ and $N^2_{mult} = L^2/2=128$, respectively. For the trivial case when $L =4$ we have $N_4 = (N^4_{add}, N^4_{comp}, N^4_{mult})$, where $N^4_{add} = 4$, $N^4_{comp} = 4$ and $N^4_{mult} = 4$.
	\vspace{-0.5cm}
		\begin{figure}[H]
			\hspace{-0.4cm}
	\includegraphics[width=3.8in]{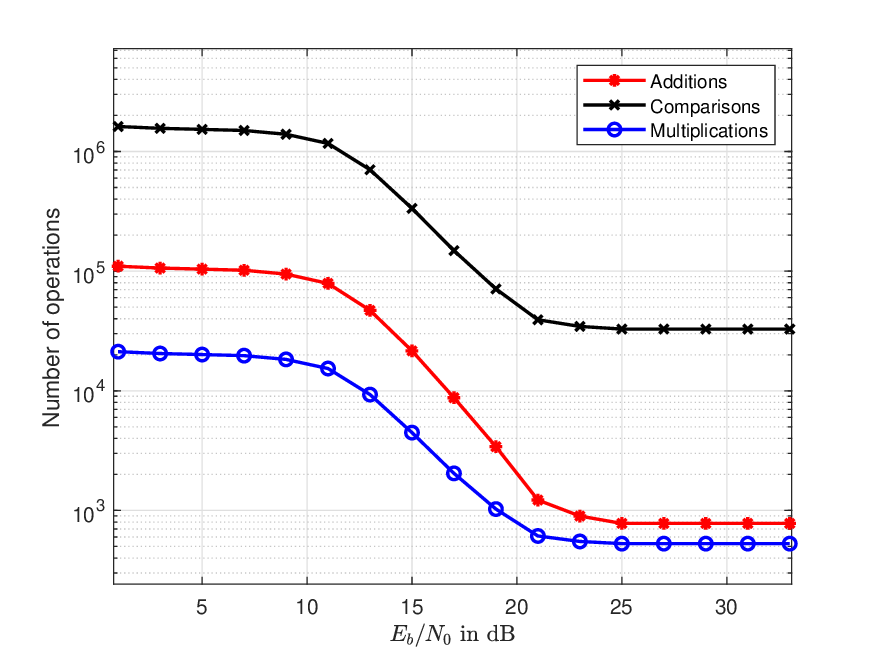}
           \caption{Complexity for $L=16$.} 		\label{complexity_16by33}
		\end{figure}
	\vspace{-0.5cm}
	The complexity when $L=8$ is $N_1 + 2N_4$ since the NDA will not execute block $B_2$. In the case when $L=16$, the complexity of NDA is $4N_1 + N_2+ 4N_4$. It follows that when $L>8$ the complexity of the recursive NDA can be represented as such
 \begin{equation}
     (i-2)2^{(i-3)}N_1 + ((i-4)2^{(i-3)}+1)N_2+ 2^{(i-2)}N_4,
 \end{equation}
 \noindent where $L = 2^i$ for $i \in \{4, 5, ... \}$. Based on this calculation, we can conclude that the complexity of NDA algorithm is $\mathcal{O}(L\:\mathsf{log}_2(L))$.
 \begin{table}[H]
	\caption{Complexity comparison of the Detectors} 
	\centering 
	\begin{tabular}{l c c c c} 
		\hline\hline \rule{0pt}{3ex}  
		\bf{Algorithms} & \bf{Complexity}  & \bf{Main procedures}\\ [0.5ex]
		\hline \rule{0pt}{3ex}  
		 NDA & $\mathcal{O}(L\:\mathsf{log}_2(L))$ &Comparisons \\[1ex]
		\:\:FDA & $\mathcal{O}(LK\:\mathsf{log}_2(K))$ &Comparisons \\[1ex]
		\:\:MMSE-PIC & $\mathcal{O}(LK^2)$ &Inversion + multiplication \\[1ex]
		\:\:Slab-sphere & $\mathcal{O}(LK^2)$ & Multiplication + comparisons \\[1ex]
		\:\:PDA &$\mathcal{O}(L^2K^2)$ & Inversion + multiplication \\[1ex]
		\:\:ML &$\mathcal{O}(2^K)$ & Multiplication + addition \\[1ex]
		\hline 
	\end{tabular}
	\label{table:complexity_bigO}
\end{table}
 As one would normally expect, the complexity of the decoder in noisy channels is much higher than in the noiseless channels. However, the complexity of the proposed FDA decoder in Section \ref{fastDecoder} is not more complex than the NDA in terms of big $\mathcal{O}$ notation. It is important to state that the proposed FDA requires neither matrix inversion nor decomposition, instead, only additions, comparisons and multiplications. The algorithm goes through each row of the received vector to decode one or more users. The best case scenario of our FDA would be to satisfy the condition in Step $2$ with the complexity of maximum of $K$ comparisons, $N_{comp} = K$. We should note that unlike the deterministic NDA, FDA has an element of probability in the execution steps. Therefore, to compute the overall complexity of the FDA we need to consider the worst case scenario. In Step $11$ the complexity of $\mathsf{meP}$ can be shown to have $6K$ additions and $(1+ \mathsf{log}_2(K))6K +2$ comparisons. The complexity of steps $12$ and $13$ is $6$ additions and $K$ comparisons. Similarly, the complexity of steps $14$ and $15$ is $4$ additions. The complexity of $\mathsf{uM}$ method is $2K$ additions and $2K+6$ comparisons. Those execution steps from $11$ to $16$ are repeated $L$ times. Finally, in Step $17$ the complexity is $(2+K)L + \mathcal{O}(K)$ additions and $LK$ multiplications. Considering all the components, the total complexity can be shown to be $N_{add} = 9LK+12L + \mathcal{O}(LK)$, $N_{comp} = 6LK\mathsf{log}_2(K)+9LK +8L$ and $N_{mult} = LK$. Therefore, we can conclude by looking at the higher order terms of comparisons, $N_{comp}$, since it is higher than additions and multiplications, hence the overall complexity is $\mathcal{O}(LK \:\mathsf{log}_2(K))$. 
We note that in both NDA and FDA algorithms we do not consider assignments in our complexity computations. We can see that the complexity of FDA is comparably larger than NDA but much lower than the MMSE-PIC, Slab-sphere, PDA and ML decoders, which have complexities of $\mathcal{O}(LK^2)$, $\mathcal{O}(LK^2)$, $\mathcal{O}(L^2K^2)$, and $\mathcal{O}(2^K)$, respectively, as shown in Table \ref{table:complexity_bigO}. The exact computation for the cases $L=4$, $L = 8$ and $L=16$ are shown in Table \ref{table:complexity_numbers}.
 \begin{table}[H]
	\caption{Complexity Of Decoders} 
	\centering 
	\begin{tabular}{l l c c c} 
		\hline\hline \rule{0pt}{3ex}  
		\bf{Decoder} & \bf{Complexity} & ($\mathbf{4\times5}$) & ($\mathbf{8\times13}$) & ($\mathbf{16\times33}$) \\ [0.5ex]
		\hline \rule{0pt}{3ex}  
		\multirow{2}*{FDA} & Additions & $248$ &  $1,136$  & $5,472$ \\
        & Comparisons & $491$ &  $3,309$  & $20,861$ \\
        & Multiplications & $20$ &  $104$  & $528$ \\ [1ex]
		\:\:PDA & Inver. + add. & $400$  & $10,816$  & $278,784$\\[1ex]
		\:\:ML & Mult. + add. & $2^5$ & $2^{13}$ & $2^{33}$ \\[1ex]
		\hline 
	\end{tabular}
	\label{table:complexity_numbers}
\end{table}
\begin{figure*}[!b]
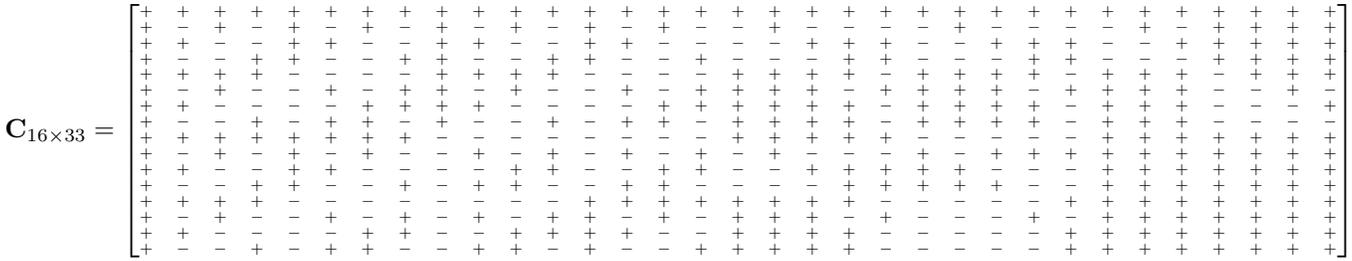

		\begin{eqnarray}
		{\bf C}_{16\times 33}=
		\renewcommand{\baselinestretch}{1}
		{\tiny \setcounter{MaxMatrixCols}{34}
			\begin{bmatrix}
			+&\m+&\m+&\m+&\m+&\m+ &\m+&\m+&\m+&\m+ &\m+&\m+&\m+&\m+&\m+&\m+&\m+&\m+&\m+&\m+&\m+&\m+&\m+&\m+&\m+&\m+&\m+&\m+&\m+&\m+&\m+&\m+&\m+\\
			+&\m-&\m+&\m-&\m+&\m- &\m+&\m-&\m+&\m- &\m+&\m-&\m+&\m-&\m+&\m-&\m-&\m+&\m-&\m-&\m-&\m-&\m+&\m-&\m-&\m-&\m-&\m+&\m-&\m+&\m+&\m+&\m+\\
			+&\m+&\m-&\m-&\m+&\m+ &\m-&\m-&\m+&\m+ &\m-&\m-&\m+&\m+&\m-&\m-&\m-&\m-&\m+&\m+&\m+&\m-&\m-&\m+&\m+&\m+&\m-&\m-&\m+&\m+&\m+&\m+&\m+\\
			+&\m-&\m-&\m+&\m+&\m- &\m-&\m+&\m+&\m- &\m-&\m+&\m+&\m-&\m-&\m+&\m-&\m-&\m-&\m+&\m+&\m-&\m-&\m-&\m+&\m+&\m-&\m-&\m-&\m+&\m+&\m+&\m+\\
			+&\m+&\m+&\m+&\m-&\m- &\m-&\m-&\m+&\m+ &\m+&\m+&\m-&\m-&\m-&\m-&\m+&\m+&\m+&\m+&\m-&\m+&\m+&\m+&\m+&\m-&\m+&\m+&\m+&\m-&\m+&\m+&\m+\\
			+&\m-&\m+&\m-&\m-&\m+ &\m-&\m+&\m+&\m- &\m+&\m-&\m-&\m+&\m-&\m+&\m+&\m+&\m+&\m-&\m+&\m+&\m+&\m+&\m-&\m+&\m+&\m+&\m+&\m-&\m-&\m+&\m-\\
			+&\m+&\m-&\m-&\m-&\m- &\m+&\m+&\m+&\m+ &\m-&\m-&\m-&\m-&\m+&\m+&\m+&\m+&\m+&\m+&\m-&\m+&\m+&\m+&\m+&\m-&\m+&\m+&\m+&\m-&\m-&\m-&\m+\\
			+&\m-&\m-&\m+&\m-&\m+ &\m+&\m-&\m+&\m- &\m-&\m+&\m-&\m+&\m+&\m-&\m+&\m+&\m+&\m+&\m-&\m+&\m+&\m+&\m+&\m-&\m+&\m+&\m+&\m-&\m-&\m-&\m-\\
			+&\m+&\m+&\m+&\m+&\m+ &\m+&\m+&\m-&\m- &\m-&\m-&\m-&\m-&\m-&\m-&\m+&\m+&\m+&\m+&\m+&\m-&\m-&\m-&\m-&\m-&\m+&\m+&\m+&\m+&\m+&\m+&\m+\\
			+&\m-&\m+&\m-&\m+&\m- &\m+&\m-&\m-&\m+ &\m-&\m+&\m-&\m+&\m-&\m+&\m-&\m+&\m-&\m-&\m-&\m+&\m-&\m+&\m+&\m+&\m+&\m+&\m+&\m+&\m+&\m+&\m+\\
			+&\m+&\m-&\m-&\m+&\m+ &\m-&\m-&\m-&\m- &\m+&\m+&\m-&\m-&\m+&\m+&\m-&\m-&\m+&\m+&\m+&\m+&\m+&\m-&\m-&\m-&\m+&\m+&\m+&\m+&\m+&\m+&\m+\\
			+&\m-&\m-&\m+&\m+&\m- &\m-&\m+&\m-&\m+ &\m+&\m-&\m-&\m+&\m+&\m-&\m-&\m-&\m-&\m+&\m+&\m+&\m+&\m+&\m-&\m-&\m+&\m+&\m+&\m+&\m+&\m+&\m+\\
			+&\m+&\m+&\m+&\m-&\m- &\m-&\m-&\m-&\m- &\m-&\m-&\m+&\m+&\m+&\m+&\m+&\m+&\m+&\m+&\m-&\m-&\m-&\m-&\m-&\m+&\m+&\m+&\m+&\m+&\m+&\m+&\m+\\
			+&\m-&\m+&\m-&\m-&\m+ &\m-&\m+&\m-&\m+ &\m-&\m+&\m+&\m-&\m+&\m-&\m+&\m+&\m+&\m-&\m+&\m-&\m-&\m-&\m+&\m-&\m+&\m+&\m+&\m+&\m+&\m+&\m+\\
			+&\m+&\m-&\m-&\m-&\m- &\m+&\m+&\m-&\m- &\m+&\m+&\m+&\m+&\m-&\m-&\m+&\m+&\m+&\m+&\m-&\m-&\m-&\m-&\m-&\m+&\m+&\m+&\m+&\m+&\m+&\m+&\m+\\
			+&\m-&\m-&\m+&\m-&\m+ &\m+&\m-&\m-&\m+ &\m+&\m-&\m+&\m-&\m-&\m+&\m+&\m+&\m+&\m+&\m-&\m-&\m-&\m-&\m-&\m+&\m+&\m+&\m+&\m+&\m+&\m+&\m+\\
			\end{bmatrix}}
		\nonumber
		\end{eqnarray}
		\caption{UD code set $\mathbf{C}$ with $L=16$ and $K=33$.} \label{C_16_33_02}
	\end{figure*}
 The FDA verifies in Step $18$ if the $L$-dimensional lattice point that is generated by the estimated vector $\mathbf{m}$ is the same as the vector $\mathbf{z}$ or not. If that condition is not true the algorithm will adjust $c_{AL}$ value by increasing it by one and starts repeating steps from $11$ to $16$.
 It is obvious that the FDA will make an exact number of execution steps if no noise vector, $\mathbf{n}$, is present and the number of times it will repeat the steps from $11$ to $16$ will only depend on the variance of noise, $\sigma^2$. 

 Therefore, we demonstrate the number of additions, comparisons and multiplications in Figs. \ref{complexity_4by5}, \ref{complexity_8by13} and \ref{complexity_16by33} by varying $\sigma^2$ in terms of $E_b/N_0$ in our simulations. For the case of $L=4$ in Fig. \ref{complexity_4by5}, the number of additions, comparisons and multiplications does not depend on variance due to small overload factor $5/4$. However, for $L=8$ and $L=16$ it stays high for up to $10$dB in $E_b/N_0$. Then it drops to constant numbers at around $20$dB in $E_b/N_0$, which is shown in Figs. \ref{complexity_8by13} and \ref{complexity_16by33}, respectively.
 \section{Simulation results}
	\label{simulation}
	 In this section, we evaluate the performance of the proposed antipodal UD code sequences generated by the seed matrix (\ref{V8_2}), which are portrayed in Figs. \ref{C_4_5_01}, \ref{C_8_13_02} and \ref{C_16_33_02}.
	 In our simulations, we compare the FDA with the MMSE-PIC \cite{MichelHanzo2021}, slab-sphere \cite{KaiKit01}, PDA \cite{Romano2005} and ML detectors for binary phase-shift keying (\acs{BPSK}) modulation, which are characterized in Figs. \ref{C4by5}, \ref{C8by13} and \ref{C16by33}.
	\begin{figure}[H]
		\centering
		\begin{center}
			\begin{eqnarray}
			{\bf C}_{4\times 5}=
			\renewcommand{\baselinestretch}{1}
			{\tiny \setcounter{MaxMatrixCols}{34}
				\begin{bmatrix}
				+&\m+&\m+&\m+&\m+\\
				+&\m-&\m+&\m-&\m-\\
				+&\m+&\m-&\m-&\m+\\
				+&\m-&\m-&\m+&\m+\\
				\end{bmatrix}}
			\nonumber
			\end{eqnarray}
			\caption{UD code set $\mathbf{C}$ with $L=4$ and $K=5$.} \label{C_4_5_01}
		\end{center}
	\end{figure}
	\vspace{-0.0cm}
		\vspace{-0.0cm}
	\begin{figure}[H]
		\begin{eqnarray}
		{\bf C}_{8\times 13}=
		\renewcommand{\baselinestretch}{1}
		{\tiny \setcounter{MaxMatrixCols}{34}
			\begin{bmatrix}
			+&\m+&\m+&\m+&\m+&\m+&\m+&\m+&\m+&\m+&\m+&\m+&\m+\\
			+&\m-&\m+&\m-&\m+&\m-&\m+&\m-&\m-&\m+&\m-&\m-&\m-\\
			+&\m+&\m-&\m-&\m+&\m+&\m-&\m-&\m-&\m-&\m+&\m+&\m+\\
			+&\m-&\m-&\m+&\m+&\m-&\m-&\m+&\m-&\m-&\m-&\m+&\m+\\
			+&\m+&\m+&\m+&\m-&\m-&\m-&\m-&\m+&\m+&\m+&\m+&\m-\\
			+&\m-&\m+&\m-&\m-&\m+&\m-&\m+&\m+&\m+&\m+&\m-&\m+\\
			+&\m+&\m-&\m-&\m-&\m-&\m+&\m+&\m+&\m+&\m+&\m+&\m-\\
			+&\m-&\m-&\m+&\m-&\m+&\m+&\m-&\m+&\m+&\m+&\m+&\m-\\
			\end{bmatrix}}
		\nonumber
		\end{eqnarray}
		\caption{UD code set $\mathbf{C}$ with $L=8$ and $K=13$.} \label{C_8_13_02}
	\end{figure}
	\vspace{-0.3cm}
	 The BER performance of UD code sets are averaged over the different users for $\mathbf{C}_{4 \times 5}$, $\mathbf{C}_{8\times 13}$ and $\mathbf{C}_{16\times 33}$, respectively.
	The performance of the proposed FDA is comparable to that of ML, as shown in Fig. \ref{C4by5}.
	\vspace{-0.5cm}
	\hspace{-0.6cm}
		\begin{figure}[H]
			
			\includegraphics[width=3.8 in]{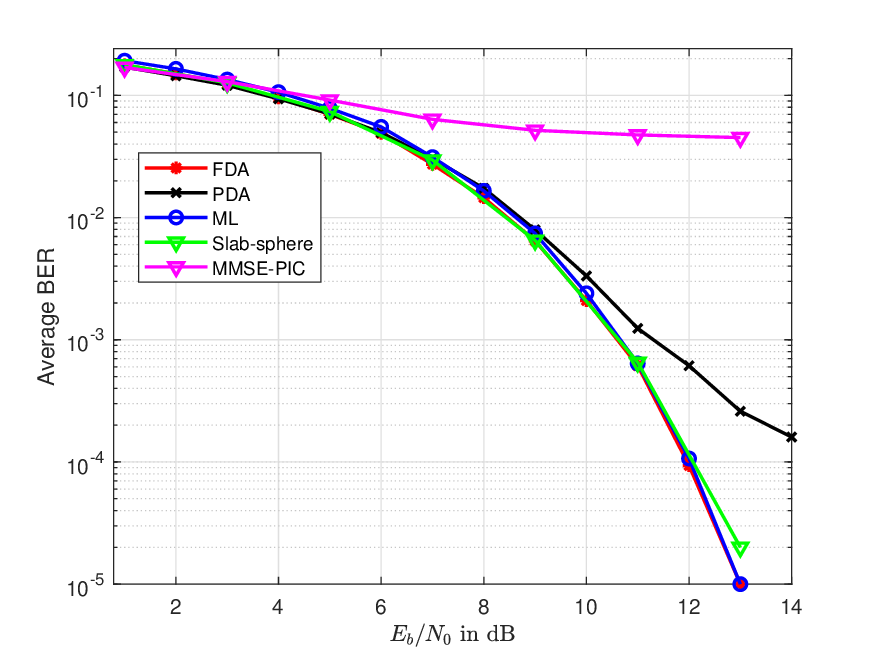}
			\centering \caption{UD code set $\mathbf{C}_{4 \times 5}$.} \label{C4by5}
		\end{figure}
	\vspace{-0.5cm}
	For the larger values of $L$s, the FDA has slightly inferior performance in terms of BER compared to ML. However, in practice the $E_b/N_o$ at the BER of $10^{-3}$ is considered to be the operating threshold since we can apply channel encoding to achieve even as low as $10^{-6}$ BER. At the BER of $10^{-3}$ the FDA achieves $1$ dB and $4$ dB gain compared to the Slab-sphere, $4$ dB and $15$ dB gain compared the PDA, as shown in Figs. \ref{C8by13} and \ref{C16by33}. The significant increase of the BER performance gap between the FDA and Slab-sphere, PDA detectors for greater values of $L$s is due to error floor experienced by the Slab-sphere and the PDA detectors. The reason for this is since for larger values of $L$ the overload factor increases exponentially. This in turn results in degradation of the linear separabilty of the overall system.
	\vspace{-0.5cm}
		\begin{figure}[H]
				\hspace{-0.4cm}
			\includegraphics[width=3.8 in]{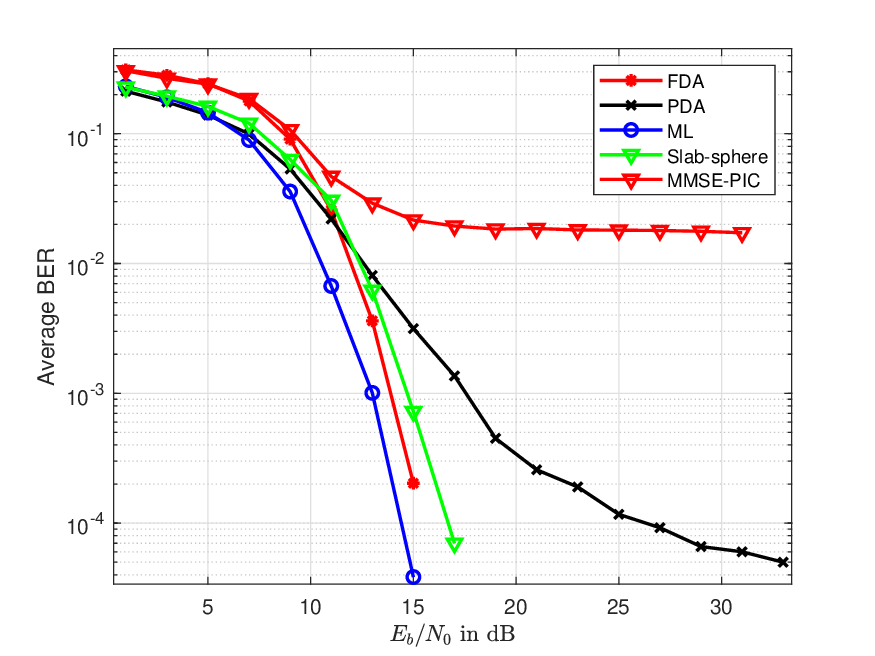}
			\centering \caption{UD code set $\mathbf{C}_{8 \times 13}$.} \label{C8by13}
		\end{figure}
	\vspace{-0.5cm}
	The linear separability criterion for linear detectors such as PDA have the property of vanishing BER as the channel noise goes to zero.
   Notice that we omitted the discussion of MMSE-PIC detector due to the overall poor BER performance of MMSE-PIC detector in comparison to other detectors. The Table \ref{table:complexity_bigO} shows the computational complexity of all the detectors. Even though the performance of the proposed FDA is slightly worse than the ML detector, it has much lower complexity compared to ML.
	\section{Conclusion}
\label{conclusion}
In this paper, we introduced a novel fast (low-complexity) decoder algorithm (FDA) for antipodal uniquely decodable (UD) code sets. The proposed algorithm has a much lower computational complexity compared to the maximum likelihood (ML) decoder whose complexity may be prohibitive for even moderate code lengths. Simulation results show that the performance of the proposed decoder is almost as good as that of the ML decoder with only a $1-2$ dB SNR degradation at a BER of $10^{-3}$. Moreover, we proved the minimum Manhattan distance of UD codes proposed in \cite{michel2012} and a number of propositions which collectively served to be the foundation of the formal proof for the maximum number of users, $K_{\rm{max}}^a$, for the case of $L=8$.

{In our future research, we will conceive multiuser detection for higher-order constellations for transmission over dispersive fading channels.}
\vspace{-0.5cm}
		\begin{figure}[H]
			\includegraphics[width=3.8 in]{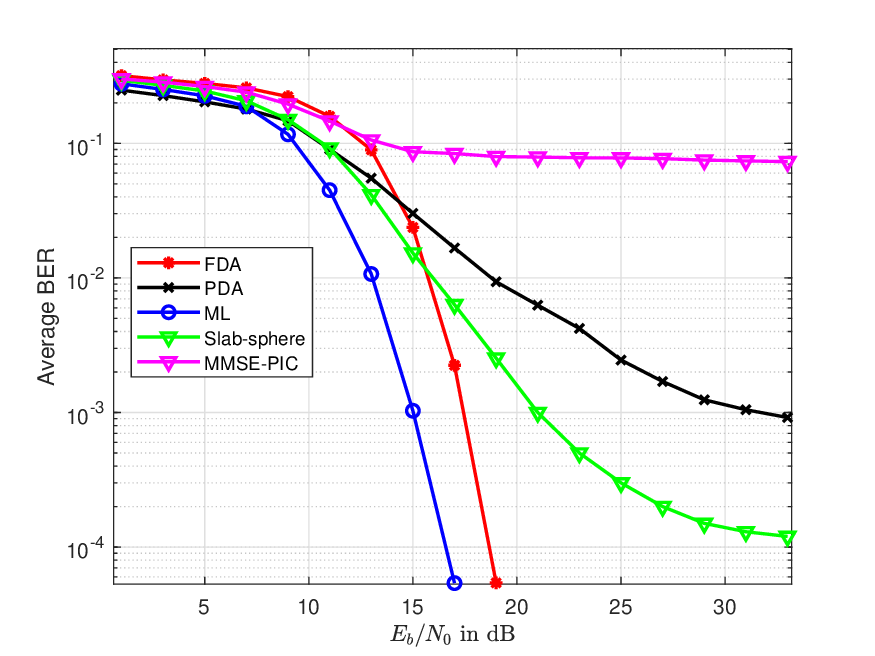}
			\centering \caption{UD code set $\mathbf{C}_{16 \times 33}$.}\label{C16by33}
		\end{figure}
	\vspace{-1.0cm}
\appendices
\section{Proof of the conversion from $\mathbf{C}^a$ to $\mathbf{C}^b$ }
\label{appendixA}
The proof of the conversion from the antipodal overloaded UD code sets to binary UD code sets, coined as optical CDMA code sets in \cite{Marvasti2009}, are presented next.

\begin{theorem}
	If there is an antipodal UD code set $\mathbf{C}^a \in \{\pm 1\}^{L \times K}$, then there is an equivalent binary UD code set $\mathbf{C}^b \in \{0, 1\}^{L \times K}$.
\end{theorem}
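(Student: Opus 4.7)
My plan is to realize the binary matrix $\mathbf{C}^b$ as an affine image of $\mathbf{C}^a$ and, after a mild normalization, transport the antipodal UD property~(\ref{null01}) into its binary counterpart. First I would observe that the antipodal UD property is invariant under column negations of $\mathbf{C}^a$: replacing $\mathbf{c}_j$ by $-\mathbf{c}_j$ amounts to the bijection $x_j \mapsto -x_j$ on $\{\pm 1\}^K$, so the $2^K$ values $\mathbf{C}^a\mathbf{x}$ remain pairwise distinct. Negating each column of $\mathbf{C}^a$ whose first entry equals $-1$, I may therefore assume without loss of generality that the first row of $\mathbf{C}^a$ is the all-ones vector $\mathbf{1}_K^T$.

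Next I would define
\begin{equation*}
\mathbf{C}^b \eqdef \tfrac{1}{2}\bigl(\mathbf{J}_{L\times K} + \mathbf{C}^a\bigr) \in \{0,1\}^{L\times K},
\end{equation*}
where $\mathbf{J}_{L\times K}$ is the all-ones matrix; elementwise this is the standard map $+1 \mapsto 1$, $-1 \mapsto 0$. The binary UD property is equivalent to $\mathsf{Null}(\mathbf{C}^b) \cap \{0,\pm 1\}^K = \{\mathbf{0}\}$, so I would assume $\mathbf{C}^b\mathbf{z} = \mathbf{0}_L$ for some $\mathbf{z} \in \{0,\pm 1\}^K$ and denote $\sigma \eqdef \sum_{k=1}^K z_k$. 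Since $\mathbf{J}_{L\times K}\mathbf{z} = \sigma\,\mathbf{1}_L$, the assumption rewrites as the vector identity
\begin{equation*}
\mathbf{C}^a\mathbf{z} = -\sigma\,\mathbf{1}_L.
\end{equation*}

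The crux is reading this identity in its first coordinate: because the first row of $\mathbf{C}^a$ is $\mathbf{1}_K^T$, one has $(\mathbf{C}^a\mathbf{z})_1 = \sigma$, and the identity collapses to $\sigma = -\sigma$, forcing $\sigma = 0$ and hence $\mathbf{C}^a\mathbf{z} = \mathbf{0}_L$. Antipodal unique decodability of $\mathbf{C}^a$ in the form (\ref{null01}) then forces $\mathbf{z} = \mathbf{0}$, so $\mathbf{C}^b$ is binary UD. The step I expect to be the main obstacle is not the algebra but recognizing the necessity of the column-negation normalization: without an all-ones row in $\mathbf{C}^a$, the affine rewrite only yields pairwise distinctness of $\{\mathbf{C}^a\mathbf{x}^b\}_{\mathbf{x}^b \in \{0,1\}^K}$, which is a priori weaker than pairwise distinctness of $\{\mathbf{C}^b\mathbf{x}^b\}$ because the signal-dependent shift $\sigma\mathbf{1}_L$ could in principle cause collisions. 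The all-ones row ensures that $\sigma$ appears as a coordinate of $\mathbf{C}^a\mathbf{z}$, which is exactly what turns the would-be obstruction $-\sigma\mathbf{1}_L$ into the coincidence $\sigma = -\sigma$ and closes the argument.
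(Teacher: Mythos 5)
Your proposal is correct and follows essentially the same route as the paper's Appendix A proof: normalize by column negations so the first row of $\mathbf{C}^a$ is all ones, set $\mathbf{C}^b = (\mathbf{C}^a + \mathbf{J})/2$, and use the all-ones first row to force the shift $\sigma$ to vanish, reducing to the antipodal UD condition (\ref{null01}). Your write-up is in fact slightly more explicit than the paper's (it spells out $\sigma=-\sigma$ and why the normalization preserves unique decodability), but there is no substantive difference in the argument.
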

\begin{proof}
	Suppose there is an antipodal UD code set $\mathbf{C}_{L\times K}^a$. By corollary, if multiplying each row or column by $-1$ we can assume that the entries of the first row of $\mathbf{C}_{L\times K}^a$ are all $1$s. Let the conversion to the binary matrix $\mathbf{C}_{L \times K}^b = (\mathbf{C}_{L \times K}^a + \mathbf{J})/2$, where $\mathbf{J}$ is the $L \times K$ all-one matrix. It is clear that $\mathbf{C}_{L \times K}^b \in \{0,1 \}^{L \times K}$, therefore, we now need to prove
	the following $\mathsf{Null}\footnote{$\mathsf{Null}()$ represents the nullspace of a matrix.}(\mathbf{C}_{L \times K}^b) \cap \{0,\pm 1\}^{K\times 1} = \{0\}^{K\times 1}$ statement. Assume that $\mathbf{C}_{L \times K}^b \mathbf{z} = \boldsymbol{0}_L$, which yields to $ (\mathbf{C}_{L \times K}^a + \mathbf{J}) \mathbf{z} = \boldsymbol{0}_L$ and thus, $\mathbf{C}_{L \times K}^a \mathbf{z} =-\mathbf{J} \mathbf{z}$, where $\mathbf{z} \in \{0,\pm 1 \}^{K \times 1}$. Since the entries of the first row of $\mathbf{C}_{L \times K}^a$ as well as the matrix $\mathbf{J}$ are all $1$s, the first entry of $\mathbf{C}_{L \times K}^a \mathbf{z}$ must be equal to the first entry of -$\mathbf{J} \mathbf{z}$. It is only possible if the first entry of -$\mathbf{J} \mathbf{z}$ is $0$. Thus, -$\mathbf{J} \mathbf{z} = \boldsymbol{0}_L$ which leads to $\mathbf{z} = \boldsymbol{0}_L$. 
	As a consequence of UD code set of $\mathbf{C}_{L \times K}^a$ that satisfies unique decodability condition (\ref{null02}) the expression $\mathbf{C}_{L \times K}^a \mathbf{z} = \boldsymbol{0}_L$ implies that $\mathbf{z} = \boldsymbol{0}_L$.
\end{proof}
However, in general it is not necessarily true that any binary (optical) code sets $\mathbf{C}^b$ of any size can be converted to UD code sets $\mathbf{C}^a$. The reason is because for a given UD code set $\mathbf{C}^b$ with $K_{\rm{max}}^b$, the following is true $K_{\rm{max}}^b \geq K_{\rm{max}}^a$ since $K_{\rm{max}}^b = \upgamma(L+1)$ and $K_{\rm{max}}^a = \upgamma(L)+1$ and obviously $\upgamma(L+1) \geq \upgamma(L)+1$. It is equality, $K_{\rm{max}}^b = K_{\rm{max}}^a$, if $L = 2^i$ where $i \in \{1, 2, \dots  \}$. Therefore, for a given UD code set $\mathbf{C}^b$ with $K_{\rm{max}}^b$, if $L \neq  2^i$ we have $K_{\rm{max}}^b > K_{\rm{max}}^a$ and there is no equivalent UD code set $\mathbf{C}^a$.
\section{Proof of All Possible Combinations}
\label{appendixB}

\textit{Combination 1} $(\mathcal{A}, \mathcal{A}, \mathcal{A}, \mathcal{D}, \mathcal{G})$: This group combination is such that distinct $(\mathcal{A}, \mathcal{A}, \mathcal{A})$ produce $\binom{3}{2}=3$ distinct $\mathcal{D}$'s different from existing $\mathcal{D}$, and $\mathcal{G}$'s by $\mathcal{A} + \mathcal{A} = \mathcal{D}/\mathcal{G}$ rule. Different $\mathcal{A}$ is also produced by $\mathcal{A} + \mathcal{D} = \mathcal{A}$ from existing ones. With this new $\mathcal{A}$ it can produce another distinct $\binom{3}{2}=3$ $\mathcal{G}$'s. Total of six $\mathcal{G}$'s are produced with $(\mathcal{A}, \mathcal{A}, \mathcal{A}, \mathcal{D})$ and existing $\mathcal{G}$ we have seven $\mathcal{G}$'s. This completes all seven different $\mathcal{G}$'s from each seven groups, hence we cannot add any other $\mathcal{G}$ to the existing combination. So three $\mathcal{A}$'s with $\mathcal{D}$ produce six distinct $\mathcal{A}$'s and the $\mathcal{A} + \mathcal{A} +\mathcal{A} = \mathcal{A}$, $\mathcal{A} + \mathcal{A} +\mathcal{A} + \mathcal{D} = \mathcal{A}$ produce another two distinct $\mathcal{A}$'s. Therefore, we cannot add any more $\mathcal{A}$ to the exisiting cominations. Now three $\mathcal{A}$'s will produce three distinct $\mathcal{D}$'s, and three $\mathcal{A}$'s with $\mathcal{D}$, $\mathcal{A} + \mathcal{D} = \mathcal{A}$, produce another three disctinct $\mathcal{D}$'s, plus the $\mathcal{D}$ that is in the combination that makes a total of seven $\mathcal{D}$'s. Therefore, we cannot add any more $\mathcal{D}$'s to the combination. Therefore, we proved that we cannot add any more $\mathcal{A}$, $\mathcal{D}$ and $\mathcal{G}$ to the $(\mathcal{A}, \mathcal{A}, \mathcal{A}, \mathcal{D}, \mathcal{G})$.

\textit{Combination 2} $(\mathcal{A}, \mathcal{A}, \mathcal{D}, \mathcal{D}, \mathcal{D})$: This group combination is such that
distinct $(\mathcal{D}, \mathcal{D}, \mathcal{D})$ produce $\binom{3}{2}=3$ distinct $\mathcal{G}$'s by $\mathcal{D} + \mathcal{D} = \mathcal{G}$ rule, and those $\mathcal{G}$'s must be different from created by $(\mathcal{A}, \mathcal{A})$ $\mathcal{A} + \mathcal{A} = \mathcal{G}$ rule. Also each $\mathcal{A}$ must be different from three distinct $\mathcal{A}$'s created by $\mathcal{D} + \mathcal{A} = \mathcal{A}$ rule. Each existing $\mathcal{A}$'s with $\binom{3}{2}=3$ distinct $\mathcal{D} + \mathcal{D} +  \mathcal{A} = \mathcal{A}$ produce six distinct $\mathcal{A}$'s and plus two existing $\mathcal{A}$ makes a total of eight $\mathcal{A}$'s. This completes all distinct $\mathcal{A}$'s and we cannot add any more $\mathcal{A}$ to the combination. From existing combination there are four distict $\mathcal{D}$'s produced by $\mathcal{A} + \mathcal{A} + \mathcal{D} =  \mathcal{D}$ and $\mathcal{D} + \mathcal{D} + \mathcal{D} =  \mathcal{D}$ rules as well as three existing $\mathcal{D}$'s this make a total of seven distinct $\mathcal{D}$'s. Hence, we cannot add any more $\mathcal{D}$ to the combination. We have already seen that with existing combinations we create six distint $\mathcal{G}$'s and with $(\mathcal{A},\mathcal{A}, \mathcal{D})$ we can create more $\mathcal{G}$'s by $\mathcal{A} + \mathcal{A} + \mathcal{D} = \mathcal{G}$ rule. This tells us that we cannot add any more $\mathcal{G}$'s. Therefore, we proved that we cannot add any more $\mathcal{A}$, $\mathcal{D}$ and $\mathcal{G}$ to the $(\mathcal{A}, \mathcal{A}, \mathcal{D}, \mathcal{D}, \mathcal{D})$.

\textit{Combination 3} $(\mathcal{A}, \mathcal{A}, \mathcal{D}, \mathcal{D}, \mathcal{G})$: This group combination is such that different $\mathcal{A}$'s created by $\mathcal{A} + \mathcal{D} =\mathcal{A}$, $\mathcal{A} + \mathcal{D} + \mathcal{D} =\mathcal{A}$, $\mathcal{A} + \mathcal{G} =\mathcal{A}$, $\mathcal{A} + \mathcal{D}+ \mathcal{G} =\mathcal{A}$, $\mathcal{A} + \mathcal{D}+ \mathcal{D}+\mathcal{G} =\mathcal{A}$ rules are distinct from existing $\mathcal{A}$'s. Similarly, different $\mathcal{D}$'s created by $\mathcal{A} + \mathcal{A} =\mathcal{D}$, $\mathcal{A} + \mathcal{A} + \mathcal{G} =\mathcal{D}$, $\mathcal{D} + \mathcal{G} =\mathcal{D}$ rules are distinct from existing $\mathcal{D}$'s and different $\mathcal{G}$'s created by $\mathcal{A} + \mathcal{A} =\mathcal{G}$, $\mathcal{A} + \mathcal{D} + \mathcal{A} =\mathcal{G}$, $\mathcal{A} + \mathcal{A}+\mathcal{D} + \mathcal{D} =\mathcal{G}$ rules are distinct from existing $\mathcal{G}$'s. The combination produces six distinct $\mathcal{A}$'s by $\mathcal{A} + \mathcal{D} + \mathcal{D} = \mathcal{A}$, $\mathcal{A} + \mathcal{G} = \mathcal{A}$, $\mathcal{A} + \mathcal{D} = \mathcal{A}$ rules and with two existing $\mathcal{A}$'s that makes a total of eight $\mathcal{A}$'s, hence we cannot add any more $\mathcal{A}$ to the combination. There are five distinct $\mathcal{D}$'s produced by $\mathcal{A} + \mathcal{A} + \mathcal{D} = \mathcal{D}$, $\mathcal{G} + \mathcal{D} = \mathcal{D}$, $\mathcal{A} + \mathcal{A} = \mathcal{D}$ rules and with two existing $\mathcal{D}$'s that makes a total of seven $\mathcal{D}$'s, hence we cannot add any more $\mathcal{D}$ to the combination. Similarly, for the case of $\mathcal{G}$ the combination produces six distinct $\mathcal{G}$'s by $\mathcal{D} + \mathcal{D}  = \mathcal{G}$, $\mathcal{A} + \mathcal{A} = \mathcal{G}$, $\mathcal{G} + \mathcal{A} + \mathcal{A}= \mathcal{G}$, $\mathcal{G} + \mathcal{D} + \mathcal{D}= \mathcal{G}$, $\mathcal{A} + \mathcal{A}+ \mathcal{D} + \mathcal{D}= \mathcal{G}$, $\mathcal{G} + \mathcal{A} + \mathcal{A}+ \mathcal{D} + \mathcal{D}= \mathcal{G}$ rules and with the existing $\mathcal{G}$'s that makes a total of seven $\mathcal{G}$'s, hence we cannot add any more $\mathcal{G}$ to the combination. Therefore, we proved that we cannot add any more $\mathcal{A}$, $\mathcal{D}$ and $\mathcal{G}$ to the $(\mathcal{A}, \mathcal{A}, \mathcal{D}, \mathcal{D}, \mathcal{G})$.

\textit{Combination 4} $(\mathcal{A}, \mathcal{A}, \mathcal{D}, \mathcal{G}, \mathcal{G})$: This group combination is such that different $\mathcal{A}$'s created by $\mathcal{A} + \mathcal{D} =\mathcal{A}$, $\mathcal{A} + \mathcal{G}  =\mathcal{A}$, $\mathcal{A} + \mathcal{G}+ \mathcal{G} =\mathcal{A}$, $\mathcal{A} + \mathcal{D}+ \mathcal{G} =\mathcal{A}$, $\mathcal{A} + \mathcal{D}+ \mathcal{G}+\mathcal{G} =\mathcal{A}$ rules are distinct from existing $\mathcal{A}$'s. Similarly, different $\mathcal{D}$'s created by $\mathcal{A} + \mathcal{A} =\mathcal{D}$, $\mathcal{A} + \mathcal{A} + \mathcal{G} =\mathcal{D}$, $\mathcal{A} + \mathcal{A} + \mathcal{G} + \mathcal{G} =\mathcal{D}$ rules are distinct from existing $\mathcal{D}$'s and different $\mathcal{G}$'s created by $\mathcal{A} + \mathcal{A} =\mathcal{G}$, $\mathcal{A} + \mathcal{D} + \mathcal{A} =\mathcal{G}$ rules are distinct from existing $\mathcal{G}$'s. The combination produces six distinct $\mathcal{A}$'s by $\mathcal{A} + \mathcal{D}  = \mathcal{A}$, $\mathcal{A} + \mathcal{D} + \mathcal{G} = \mathcal{A}$ rules and with two existing $\mathcal{A}$'s that makes a total of eight $\mathcal{A}$'s, therefore we cannot add any more $\mathcal{A}$ to the combination. There are six distinct $\mathcal{D}$'s produced by $\mathcal{A} + \mathcal{A} = \mathcal{D}$, $\mathcal{G} + \mathcal{D} = \mathcal{D}$, $\mathcal{D} + \mathcal{G} +  \mathcal{G}  = \mathcal{D}$, $\mathcal{A} + \mathcal{A} +  \mathcal{G}  = \mathcal{D}$ rules and with the existing $\mathcal{D}$'s that makes a total of seven $\mathcal{D}$'s, hence we cannot add any more $\mathcal{D}$ to the combination. Similarly, for the case of $\mathcal{G}$ the combination produces five distinct $\mathcal{G}$'s by $\mathcal{A} + \mathcal{A}  = \mathcal{G}$, $\mathcal{G} + \mathcal{G} = \mathcal{G}$, $\mathcal{G} + \mathcal{A} + \mathcal{A}= \mathcal{G}$ rules and with the existing $\mathcal{G}$'s that makes a total of seven $\mathcal{G}$'s, hence we cannot add any more $\mathcal{G}$ to the combination. Therefore, we proved that we cannot add any more $\mathcal{A}$, $\mathcal{D}$ and $\mathcal{G}$ to the $(\mathcal{A}, \mathcal{A}, \mathcal{D}, \mathcal{G}, \mathcal{G})$.

\textit{Combination 5} $(\mathcal{A}, \mathcal{D}, \mathcal{D}, \mathcal{D}, \mathcal{D})$: This group combination is such that all $\mathcal{D}$'s are distinct and no $\mathcal{D} +\mathcal{D} = \mathcal{D} + \mathcal{D} $ is satisfied. The combination produces seven distinct $\mathcal{A}$'s by $\mathcal{A} + \mathcal{D}  = \mathcal{A}$, $\mathcal{A} + \mathcal{D} + \mathcal{D} = \mathcal{A}$ rules and with the existing $\mathcal{A}$ that makes a total of eight $\mathcal{A}$'s, hence we cannot add any more $\mathcal{A}$ to the combination. There are three distinct $\mathcal{D}$'s produced by $\mathcal{D} + \mathcal{D} + \mathcal{D} = \mathcal{D}$ rules and with the existing $\mathcal{D}$'s that makes a total of seven $\mathcal{D}$'s, hence we cannot add any more $\mathcal{D}$ to the combination. Similarly, for the case of $\mathcal{G}$ the combination produces seven distinct $\mathcal{G}$'s by $\mathcal{D} + \mathcal{D}  = \mathcal{G}$, $\mathcal{D} + \mathcal{D} +\mathcal{D} + \mathcal{D} = \mathcal{G}$ rules and with the existing $\mathcal{G}$'s that makes a total of seven $\mathcal{G}$'s, hence we cannot add any more $\mathcal{G}$ to the combination. Therefore, we proved that we cannot add any more $\mathcal{A}$, $\mathcal{D}$ and $\mathcal{G}$ to the $(\mathcal{A}, \mathcal{D}, \mathcal{D}, \mathcal{D}, \mathcal{D})$.

\textit{Combination 6} $(\mathcal{A}, \mathcal{D}, \mathcal{D}, \mathcal{D}, \mathcal{G})$: This group combination is such that different $\mathcal{D}$'s created by $\mathcal{D} + \mathcal{G} =\mathcal{D}$, $\mathcal{D} + \mathcal{D} + \mathcal{D} =\mathcal{D}$ rules are distinct from existing $\mathcal{D}$'s and different $\mathcal{G}$'s created by $\mathcal{D} + \mathcal{D} =\mathcal{G}$, $\mathcal{A} + \mathcal{D} + \mathcal{D} =\mathcal{G}$ rules are distinct from existing $\mathcal{G}$'s. The combination produces seven distinct $\mathcal{A}$'s by $\mathcal{A} + \mathcal{D}  = \mathcal{A}$, $\mathcal{A} + \mathcal{G} = \mathcal{A}$, $\mathcal{A} + \mathcal{D}+ \mathcal{G} = \mathcal{A}$ rules and with the existing $\mathcal{A}$ that makes a total of eight $\mathcal{A}$'s, hence we cannot add any more $\mathcal{A}$ to the combination. There are four distinct $\mathcal{D}$'s produced by $\mathcal{D} + \mathcal{G} = \mathcal{D}$, $\mathcal{D} + \mathcal{D} + \mathcal{D} + \mathcal{G} = \mathcal{D}$ rules and with three existing $\mathcal{D}$'s that makes a total of seven $\mathcal{D}$'s, hence we cannot add any more $\mathcal{D}$ to the combination. Similarly, for the case of $\mathcal{G}$ the combination produces five distinct $\mathcal{G}$'s by $\mathcal{D} + \mathcal{D}  = \mathcal{G}$, $\mathcal{D}+ \mathcal{D} + \mathcal{G} = \mathcal{G}$ rules and with the existing $\mathcal{G}$ that makes a total of seven $\mathcal{G}$'s, hence we cannot add any more $\mathcal{G}$ to the combination. Therefore, we proved that we cannot add any more $\mathcal{A}$, $\mathcal{D}$ and $\mathcal{G}$ to the $(\mathcal{A}, \mathcal{D}, \mathcal{D}, \mathcal{D}, \mathcal{G})$.

\textit{Combination 7} $(\mathcal{A}, \mathcal{D}, \mathcal{D}, \mathcal{G}, \mathcal{G})$: This group combination is such that different $\mathcal{D}$'s created by $\mathcal{D} + \mathcal{G} =\mathcal{D}$ rules are distinct from existing $\mathcal{D}$'s and different $\mathcal{G}$'s created by $\mathcal{D} + \mathcal{D} =\mathcal{G}$ rules are distinct from existing $\mathcal{G}$'s. The combination produces seven distinct $\mathcal{A}$'s by $\mathcal{A} + \mathcal{D}  = \mathcal{A}$, $\mathcal{A} + \mathcal{G} = \mathcal{A}$, $\mathcal{A} + \mathcal{D}+  \mathcal{D}+ \mathcal{G} = \mathcal{A}$ rules and with the existing $\mathcal{A}$ that makes a total of eight $\mathcal{A}$'s, hence we cannot add any more $\mathcal{A}$ to the combination. There are five distinct $\mathcal{D}$'s produced by $\mathcal{D} + \mathcal{G} = \mathcal{D}$, $\mathcal{D} + \mathcal{G} + \mathcal{G} = \mathcal{D}$ rules and with two existing $\mathcal{D}$'s that makes a total of seven $\mathcal{D}$'s, hence we cannot add any more $\mathcal{D}$ to the combination. Similarly, for the case of $\mathcal{G}$ the combination produces five distinct $\mathcal{G}$'s by $\mathcal{D} + \mathcal{D}  = \mathcal{G}$, $\mathcal{G} + \mathcal{G}  = \mathcal{G}$, $\mathcal{D}+ \mathcal{D} + \mathcal{G} = \mathcal{G}$, $\mathcal{D}+ \mathcal{G} + \mathcal{G} = \mathcal{G}$ rules and with the existing $\mathcal{G}$ that makes a total of seven $\mathcal{G}$'s, hence we cannot add any more $\mathcal{G}$ to the combination. Therefore, we proved that we cannot add any more $\mathcal{A}$, $\mathcal{D}$ and $\mathcal{G}$ to the $(\mathcal{A}, \mathcal{D}, \mathcal{D}, \mathcal{G}, \mathcal{G})$.

\textit{Combination 8} $(\mathcal{A}, \mathcal{D}, \mathcal{G}, \mathcal{G}, \mathcal{G})$: This group combination is such that different $\mathcal{D}$'s created by $\mathcal{D} + \mathcal{G} =\mathcal{D}$ rules are distinct from existing $\mathcal{D}$'s and different $\mathcal{G}$'s created by $\mathcal{G} + \mathcal{G} =\mathcal{G}$ rules are distinct from existing $\mathcal{G}$'s. The combination produces seven distinct $\mathcal{A}$'s by $\mathcal{A} + \mathcal{D}  = \mathcal{A}$, $\mathcal{A} + \mathcal{G} = \mathcal{A}$, $\mathcal{A} + \mathcal{G}+ \mathcal{G} = \mathcal{A}$, $\mathcal{A} + \mathcal{D}+ \mathcal{G} = \mathcal{A}$ rules and with the existing $\mathcal{A}$ that makes a total of eight $\mathcal{A}$'s, hence we cannot add any more $\mathcal{A}$ to the combination. There are six distinct $\mathcal{D}$'s produced by $\mathcal{D} + \mathcal{G} = \mathcal{D}$, $\mathcal{D} + \mathcal{G} + \mathcal{G} = \mathcal{D}$, $\mathcal{D} + \mathcal{G} + \mathcal{G} + \mathcal{G} = \mathcal{D}$ rules and with two existing $\mathcal{D}$'s that makes a total of seven $\mathcal{D}$'s, hence we cannot add any more $\mathcal{D}$ to the combination. Similarly, for the case of $\mathcal{G}$ the combination produces five distinct $\mathcal{G}$'s by $\mathcal{G} + \mathcal{G}  = \mathcal{G}$, $\mathcal{G} + \mathcal{G} + \mathcal{G}  = \mathcal{G}$ rules and with the existing $\mathcal{G}$ that makes a total of seven $\mathcal{G}$'s, hence we cannot add any more $\mathcal{G}$ to the combination. Therefore, we proved that we cannot add any more $\mathcal{A}$, $\mathcal{D}$ and $\mathcal{G}$ to the $(\mathcal{A}, \mathcal{D}, \mathcal{G}, \mathcal{G}, \mathcal{G})$.



	 
\bibliographystyle{IEEEtran}
\bibliography{michelbib_file}

\vspace{-0.3cm}
\begin{IEEEbiography}[{\includegraphics[width=1in,height=1.25in,clip,keepaspectratio]{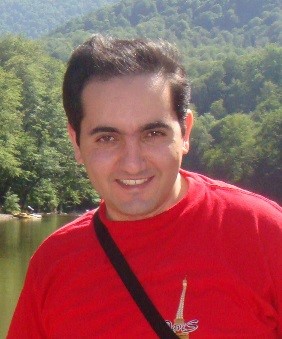}}]{\bf Michel Kulhandjian} (M'18-SM'20) received his M.S. and Ph.D. degrees in Electrical Engineer from the State University of New York at Buffalo in 2007 and 2012, respectively. He had previously received his B.S. degree in Electronics Engineering and Computer Science (Minor), with ``Summa Cum Laude'' from the American University in Cairo (AUC) in 2005. He was employed at Alcatel-Lucent, in Ottawa, Ontario, in 2012. In the same year he was appointed as a Research Associate at EION Inc. He received Natural Science and Engineering Research Council of Canada (NSERC) Industrial R\&D Fellowship (IRDF). 
	He is currently a Research Scientist at the School of Electrical Engineering and Computer Science at the University of Ottawa. He is also employed as a senior embedded software engineer at L3Harris Technologies. 
	
His research interests include wireless multiple access communications, adaptive coded modulation, waveform design for overloaded code-division multiplexing applications, channel coding, space-time coding, adaptive multiuser detection, statistical signal processing, covert communications, spread-spectrum steganography and steganalysis. He has served as a guest editor for Journal of Sensor and Actuator Networks (JSON). He actively serves as member of Technical Program Committee (TPC) of IEEE WCNC, IEEE GLOBECOM, IEEE ICC, and IEEE VTC.
\end{IEEEbiography}
\vspace{-0.6cm}
\begin{IEEEbiography}[{\includegraphics[width=1in,clip,keepaspectratio]{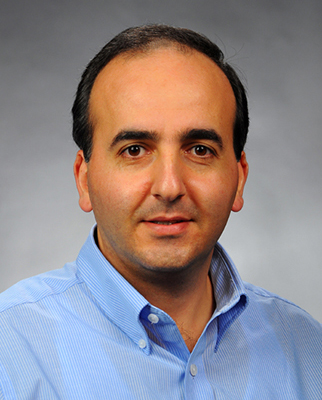}}]{\bf Hovannes Kulhandjian}(S'14-M'15-SM'20) received the
	B.S. degree (magna cum laude) in electronics engineering from The American University in Cairo, Cairo, Egypt, in 2008, and the M.S. and Ph.D. degrees in electrical engineering from the State University of New York at Buffalo, Buffalo, NY, USA,
	in 2010 and 2014, respectively.
	From December 2014 to July 2015, he was an
	Associate Research Engineer with the Department of
	Electrical and Computer Engineering, Northeastern
	University, Boston, MA, USA. He is currently an
	Associate Professor with the Department of Electrical and Computer Engineering, California State University, Fresno, Fresno, CA, USA. His current
	research interests include  wireless communications
	and networking, with applications to underwater acoustic communications, visible light communications and applied machine learning. He has served as a guest editor for IEEE Access - Special Section Journal on Underwater Wireless Communications and Networking. He has also served as a Session Co-Chair for IEEE UComms 2020, Session Chair for ACM WUWNet 2019.
	He actively serves as a member of the Technical Program Committee for ACM and IEEE conferences such as IEEE GLOBECOM, ICC, UComms, PIMRC, WD, ACM WUWNet, among others. 
\end{IEEEbiography}
\vspace{-0.6cm}
\begin{IEEEbiography}[{\includegraphics[width=1in,height=1.32in, clip,keepaspectratio]{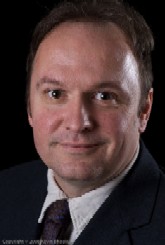}}]{\bf Claude D'Amours} received the degrees of B.A.Sc, M.A.Sc. and Ph.D. in Electrical Engineering from the University of Ottawa in 1990, 1992 and 1995 respectively. In 1992 he was employed as a Systems Engineer at Calian Communications Ltd. In 1995 he joined the Communications Research Centre in Ottawa, Ontario, Canada, as a Systems Engineer.  Later in 1995, he joined the Department of Electrical and Computer Engineering at the Royal Military College of Canada in Kingston, Ontario, Canada, as an Assistant Professor.  He joined the School of Information Technology and Engineering (SITE), which has since been renamed as the School of Electrical Engineering and Computer Science (EECS), at the University of Ottawa as an Assistant Professor in 1999.  From 2007-2011, he served as Vice Dean of Undergraduate Studies for the Faculty of Engineering and has been serving as the Director of the School of EECS at the University of Ottawa since 2013.  His research interests are in physical layer technologies for wireless communications systems, notably in multiple access techniques and interference cancellation.
\end{IEEEbiography}	
\vspace{-0.6cm}
\begin{IEEEbiography}[{\includegraphics[width=1.25in, height=1.33in, clip,keepaspectratio]{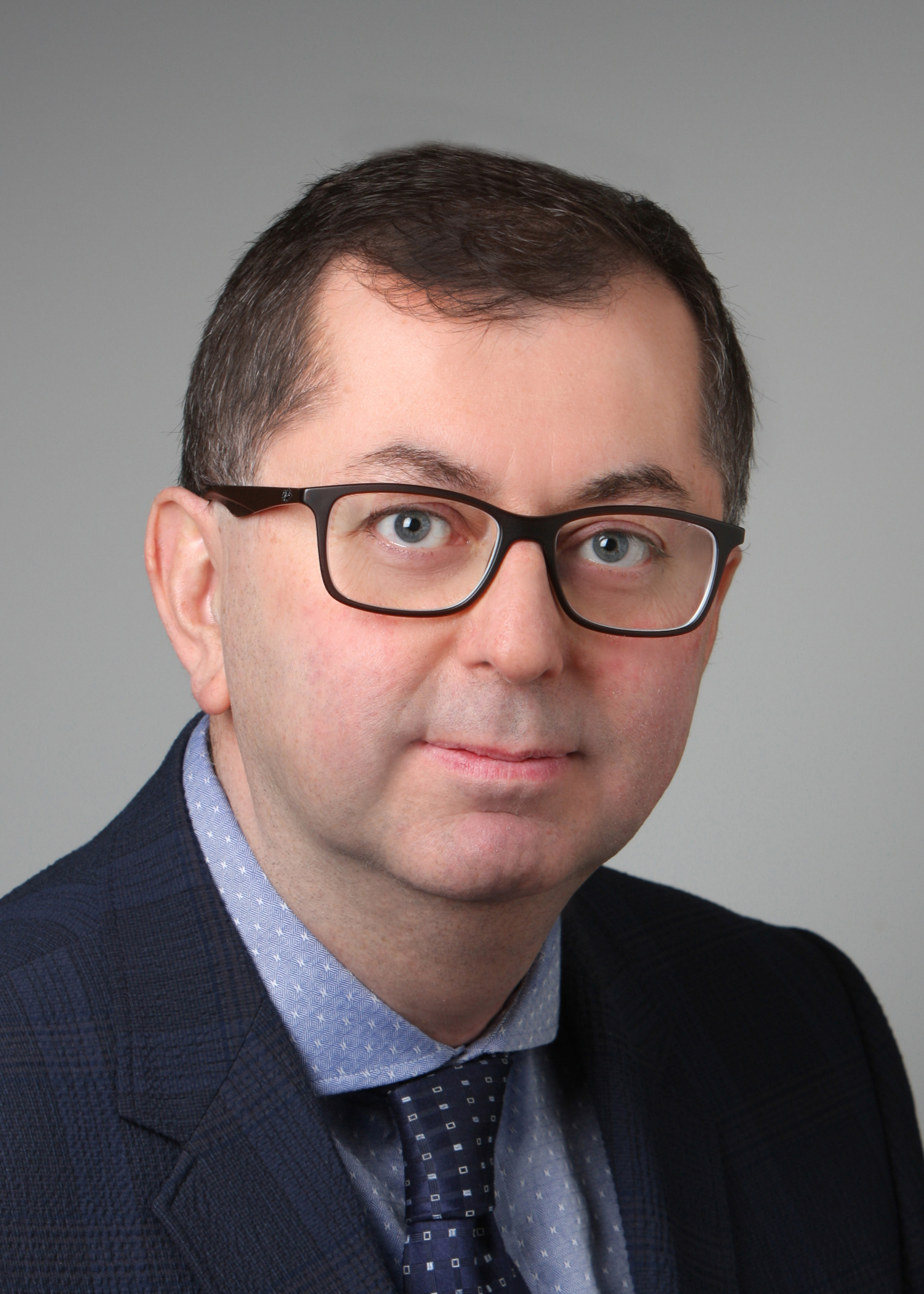}}]{\bf Halim Yanikomeroglu}(Fellow, IEEE) is a Professor in the Department of Systems and Computer Engineering at Carleton University, Ottawa, Canada. His research group has made substantial contributions to 4G and 5G wireless technologies. His group's current focus is the aerial and satellite networks for the 6G and beyond-6G era. His extensive collaboration with industry resulted in 39 granted patents. He is a Fellow of IEEE, EIC (Engineering Institute of Canada), and CAE (Canadian Academy of Engineering), and a Distinguished Speaker for both IEEE Communications Society and IEEE Vehicular Technology Society. Dr. Yanikomeroglu received several awards for his research, teaching, and service.
\end{IEEEbiography}
\vspace{-0.6cm}
\begin{IEEEbiography}[{\includegraphics[width=1.6in, height=1.36in, clip,keepaspectratio]{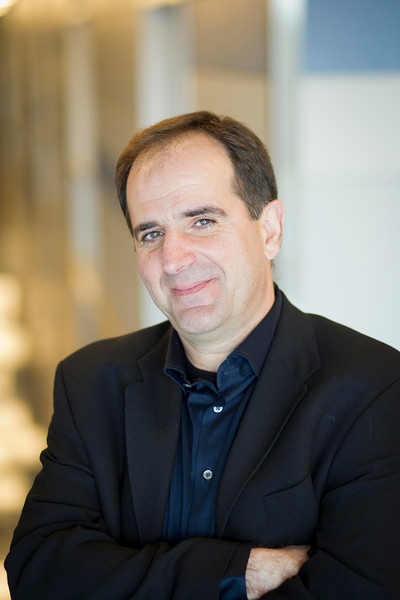}}]{\bf Dimitris A. Pados}(M'95-SM'15) received the Diploma degree in computer science and engineering (five-year program) from the University of Patras, Greece, and the Ph.D. degree in electrical engineering from the University of Virginia, Charlottesville, VA. From 1997 to 2017, he was with the Department of Electrical Engineering, The State University of New York at Buffalo, as Assistant Professor, Associate Professor, Professor, and Clifford C. Furnas Chair Professor of Electrical Engineering. He also served as Associate Chair and was appointed Chair of the Department of Electrical Engineering. He was elected University Faculty Senator four times and served on the Faculty Senate Executive Committee for two terms. In 2017, he joined Florida Atlantic University, Boca Raton, FL, as the Schmidt Eminent Scholar Professor of Engineering and Computer Science and Fellow of the Institute for Sensing and Embedded Network Systems Engineering (I-SENSE). Dr. Pados is the Founding Director of the FAU Center for Connected Autonomy and Artificial Intelligence: https://ca-ai.fau.edu/

Dr. Pados is a member of the IEEE Communications, IEEE Signal Processing, IEEE Information Theory, and IEEE Computational Intelligence Societies. He served as Associate Editor for the IEEE SIGNAL PROCESSING LETTERS and the IEEE TRANSACTIONS ON NEURAL NETWORKS. Articles that he co-authored with his students received the 2001 IEEE International Conference on Telecommunications Best Paper Award, the 2003 IEEE TRANSACTIONS ON NEURAL NETWORKS Outstanding Paper Award, the 2010 IEEE International Communications Conference (ICC) Best Paper Award in signal processing for communications, the 2013 International Symposium on Wireless Communication Systems Best Paper Award in physical layer communications and signal processing, Best of IEEE GLOBECOM 2014-Top 50 Papers Distinction, Best Paper in the 2016 IEEE International Conference on Multimedia Big Data, and paper distinctions at iWAT (International Workshop on Antenna Technology) 2019, and IEEE/MTS Oceans 2020. Dr. Pados is a recipient of the 2009 SUNY-wide Chancellor's Award for Excellence in Teaching and the 2011 University at Buffalo Exceptional Scholar-Sustained Achievement Award. He was presented with the 2021 Florida Atlantic Research \& Development Authority Distinguished Researcher Award, Boca Raton, FL.

Dr. Pados has served as Principal Investigator on federal grants (NSF and DoD) of about \$17M and has been author/co-author of 230 journal and conference proceedings articles in predominantly IEEE venues. Notable technical contributions from his team include small-sample-support adaptive filtering (auxiliary-vector filters), optimal total-squared-correlation multiple-access code sets (Karystinos-Pados bounds and designs), optimal spread-spectrum data hiding, L1-norm principal-component analysis (optimal algorithms for exact L1-norm PCA), and robust localization in extreme environments (L1-norm feature extraction from complex-valued data).
\end{IEEEbiography}
\vspace{-0.8cm}
\begin{IEEEbiography}[{\includegraphics[width=1.1in, height=1.46in, clip,keepaspectratio]{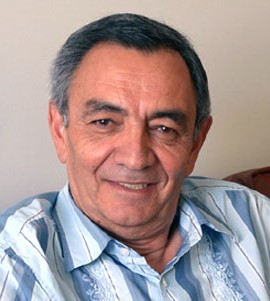}}]{\bf Gurgen Khatchatrian} is a well known and highly regarded applied scientist, scholar and technologist in the field of Error Control Coding and Cryptography. For his career achievements in these fields, he was elected as a full member of Armenian National Academy of Sciences in 1996 which is the highest honor awarded to scientists in Armenia. Dr. Khachatrian worked from 1999 through 2001 as Chief Cryptographer for Cylink Corporation (Nasdaq: CYLK) located in Sunnyvale, CA. Dr. Khachatrian then worked from 2001 to 2007 as a Chief Scientist for Quantum Digital Solutions corporation located in Santa Monica, CA. He was a Professor at American University of Armenia (AUA) from 2011 to 2018. Since 2019 he is a Chief Cryptography Officer at Quantum Digital Solutions Corporation (QDSC) in Marina Del Rey, CA, USA and Professor on leave from AUA.

\end{IEEEbiography}

\EOD

\end{document}